\documentclass[11pt]{article} 

 \usepackage[margin=2.8cm]{geometry}
 
\usepackage[utf8x]{inputenc}
\usepackage{amsmath,amsfonts,amssymb,amsthm}
\usepackage{chngcntr}
\usepackage{cite}

\usepackage{enumitem}

\usepackage{tikz}
\usetikzlibrary{matrix,calc,positioning,fit,shapes,arrows,spy,backgrounds}
\usepackage{pgfplots}
\pgfplotsset{width=10cm,compat=newest}

\newcommand{\tT}{\text{\sf T}}
\newcommand{\tH}{\text{\sf H}}
\newcommand{\tF}{\text{\sf F}}
\newcommand{\tj}{\text{\rm j}}
\newcommand{\tfor}{\text{ for }}

\newcommand{\tst}{{\rm s.t.}}
\newcommand{\tsnr}{{\rm SNR}}
\newcommand{\trmse}{{\rm RMSE}}
\newcommand{\tdB}{{\rm \,dB}}
\newcommand{\tE}{\ensuremath{\text{E}}}
\newcommand{\te}{{\rm e}}

\newcommand{\diag}{{\rm diag}}
\newcommand{\blkdiag}{{\rm blkdiag}}
\newcommand{\blktr}[1]{{\rm blkTr}_{#1}}

\newcommand{\tr}{{\rm Tr}}
\newcommand{\rank}{{\rm rank}}
\newcommand{\real}{\ensuremath{\text{Re}}}

\newcommand{\mb}[1]{\ensuremath{\boldsymbol{#1}}}
\newcommand{\bmtx}{\ensuremath{\begin{bmatrix}}}
\newcommand{\emtx}{\ensuremath{\end{bmatrix}}}
\newcommand{\mtx}[1]{\ensuremath{\begin{bmatrix} #1 \end{bmatrix}}}

\newcommand{\opt}[1]{\ensuremath{\smash{\hat{#1}}}}

\newcommand{\pdMat}[2]{\mathcal{B}^{#1\,+}_{#2}}
\newcommand{\dMat}[3]{\mathcal{B}^{#1 \times #2}_{#3}}

\newcommand{\mname}{{COBRAS}}

\newtheorem{theorem}{Theorem}
\newtheorem{lemma}{Lemma}

\newtheorem{corollary}{Corollary}


\pgfplotsset{every axis legend/.append style={row sep=-3pt, inner ysep=0pt, inner xsep=2pt, font=\footnotesize}}

\tikzset{ 
    table/.style={
        matrix of nodes,
        row sep=-\pgflinewidth,
        column sep=-\pgflinewidth,
        nodes={
			minimum height=3em,
	        minimum width=\twb,		
            rectangle,
            draw=black,
		   	text=black,
            inner xsep=5pt,
            inner ysep=3pt
        },
        nodes in empty cells,
        column 1/.style={
            nodes={minimum width=\twa,		
				   fill=gray!10,
				   text badly ragged}
        },
        row 1/.style={
            nodes={
				minimum height=1.7em,
                fill=gray!10,
                font=\bfseries}
        }
    }
}


\tikzset{every picture/.append style={baseline=(current bounding box.north west)}}

\colorlet{cs0}{black!30!brown}
\colorlet{cs1}{black!30!red}
\colorlet{cs2}{white!30!blue}
\colorlet{cs3}{black!25!green}
\colorlet{cs4}{black!25!orange}

\tikzstyle{sensor}=[circle, fill=cs0, inner sep=0pt, minimum height=1.5mm]
\tikzstyle{source}=[fill=red, inner sep=0pt, minimum height=1.5mm, minimum width=1.5mm]
\tikzstyle{sensorBg}=[line width=3mm, line join=round, cap=round, fill=black!10, draw=black!10]

\pgfdeclarelayer{background}
\pgfdeclarelayer{foreground}
\pgfsetlayers{background,main,foreground}

\def\NGrd{12}
\def\dA{3}

\def\dC{10}





\pgfplotsset{
  every axis plot/.append style={line width=0.8pt},
  every axis plot post/.append style={
    every mark/.append style={mark size=2.5}
  }
}

\newcounter{AppendixCount}

\title{Block- and Rank-Sparse Recovery for Direction Finding \\in Partly Calibrated Arrays}

\author{Christian Steffens and Marius Pesavento \\ 
		Communication Systems Group \\ 
		Darmstadt University of Technology, Germany \\
		e-mail: \{steffens, pesavento\}@nt.tu-darmstadt.de}
\date{February 2017}

\begin{document}


\maketitle


\begin{abstract}
\noindent
A sparse recovery approach for direction finding in partly calibrated arrays composed of subarrays with unknown displacements is introduced. The proposed method is based on mixed nuclear norm and $\ell_1$ norm minimization and exploits block-sparsity and low-rank structure in the signal model. For efficient implementation a compact equivalent problem reformulation is presented. The new technique is applicable to subarrays of arbitrary topologies and grid-based sampling of the subarray manifolds. In the special case of subarrays with a common baseline our new technique admits extension to a gridless implementation. As shown by simulations, our new block- and rank-sparse direction finding technique for partly calibrated arrays outperforms the state of the art method RARE in difficult scenarios of low sample numbers, low signal-to-noise ratio or correlated signals. 
\end{abstract}


\section{Introduction}
\noindent
Direction finding with sensor arrays has applications in various fields of signal processing such as wireless communications, radar, sonar or astronomy. In these types of applications it is desired to achieve a high angular resolution and to identify a large number of sources. This can be achieved by sensor arrays with a large aperture and a large number of sensors \cite{Krim:TwoDecades}. However, a large aperture size makes it difficult to achieve and maintain precise array calibration. Possible reasons for imperfect calibration are inaccuracies in the sensor positions, timing synchronization errors, or other unknown gain and phase offsets among sensors \cite{575894, 1165144, 543678, 340783, 509886, flanagan2001array, Gershman:Rare}. Standard approaches to this problem usually rely on either offline or online calibration. Offline calibration of the overall array is performed using reference sources at known positions and can easily become a challenging and time consuming task \cite{575894}. Alternatively, 
several online calibration techniques have been proposed which use calibration sources at unknown positions \cite{1165144, 543678, 340783, 509886, flanagan2001array}, but the complexity of these techniques is prohibitively high, and performance can be severely limited in the case of large sensor position errors \cite{flanagan2001array}. Moreover, these  techniques  cannot be employed in scenarios with imperfect time synchronization of sensors or other unknown sensor gain and phase offsets.

One way to overcome the calibration problem is to partition the overall array into smaller subarrays which are themselves comparably easy to calibrate. This type of array is referred to as partly calibrated array (PCA) and has received considerable interest in recent years. Generally, direction finding approaches for this type of arrays can be classified into non-coherent and coherent methods. In the non-coherent case the subarrays independently perform estimation of the directions of arrival (DoAs) or the signal covariance matrix to communicate these estimates to a central processor, where further processing is performed to achieve an improved joint estimate \cite{wax1985decentralized, stoica1995decentralized, 6854008, 1600024, 6880754}. In the coherent approach parameter estimation is performed based on joint coherent processing of all available sensor measurements, e.g., by computing a global sample covariance matrix, and imperfect calibration among the different subarrays is taken account of in the 
estimation process \cite{960398, Parvazi:PCA, 127959, 6811813, 6882325, Pesavento:Rare, Gershman:Rare, steffens2017shiftinvariance, 6882328}. In this work we consider the latter of the two approaches.


A prominent class of DoA estimation methods is based on subspace separation. In \cite{960398, Parvazi:PCA, 127959, 6811813, 6882325} the authors consider PCAs composed of multiple identical subarrays. Such types of array exhibit multiple shift invariances and methods such as the multiple invariant MUSIC and MODE \cite{960398} or multiple invariance ESPRIT \cite{127959, 6811813, 6882325} can be used to provide DoA estimates in a search-free fashion. In \cite{Pesavento:Rare} it is assumed that the PCA is composed of identically oriented linear arrays that can be transformed to a uniform linear array by linear translations of the subarrays. The authors present the root-RARE algorithm which admits search-free DoA estimation. The root-RARE method in \cite{Pesavento:Rare} was modified to the spectral RARE in \cite{Gershman:Rare} which admits application to arbitrary array topologies at the cost of increased computational complexity. Subspace-based methods are well investigated and are shown to asymptotically 
achieve an estimation performance close to the Cram\'er-Rao bound at low computational complexity. However, these subspace-based methods often have difficulties in certain practical scenarios. First, correlated source signals, e.g., in multipath environments, can significantly reduce the estimation performance. Second, subspace-based methods yield poor performance in the case of low number of snapshots and low signal-to-noise ratio, e.g., in fast changing environments. 

Recently, sparse recovery (SR) methods came into focus of DoA estimation studies. As reported in \cite{Malioutov:LassoDoa, steffens2016compact}, SR methods provide high-resolution parameter estimation performance without the aforementioned shortcomings of subspace-based methods. Moreover, SR methods are computationally tractable since they can be formulated as convex optimization problems. While classical SR methods aim at recovering sparse signal vectors and admit grid-based parameter estimation \cite{Tibshirani:Lasso, Chen98atomicdecomposition}, the special case of fully calibrated arrays (FCAs) of uniform linear topology with possibly missing sensors allows for gridless SR methods as proposed in \cite{candes2012a, candes2012b, tang2013}. In \cite{1600024, 6880754} the authors propose grid-based and gridless SR methods applicable for non-coherent processing in PCAs, where joint sparsity in the subarray signal representations is exploited. SR methods for coherent processing in PCAs have been presented in \
cite{steffens2017shiftinvariance, 6882328}. The method in \cite{steffens2017shiftinvariance} is based on the recently proposed SPARROW formulation \cite{steffens2016compact} and exploits multiple shift-invariances in PCAs composed of identical subarrays to provide gridless parameter estimation. In \cite{6882328} the well-known $\ell_{2,1}$ mixed-norm minimization approach \cite{Malioutov:LassoDoa} for FCAs is generalized to grid-based SR in PCAs of arbitrary topology by means of a mixed nuclear norm \cite{Boyd:RankMinimization, Recht2010} and $\ell_1$ norm, termed here as $\ell_{*,1}$ mixed-norm. As shown by numerical experiments \cite{6882328}, $\ell_{*,1}$ mixed-norm minimization clearly outperforms the spectral RARE \cite{Gershman:Rare} in frequency resolution performance for low signal-to-noise ratio and low number of snapshots.

In this paper we consider the $\ell_{*,1}$ mixed-norm minimization problem proposed in \cite{6882328} and derive an equivalent compact reformulation, termed as COmpact Block- and RAnk-Sparse recovery (\mname{}). The \mname{} formulation has a reduced number of optimization parameters as compared to the original $\ell_{*,1}$ mixed-norm minimization problem and we provide efficient implementations of the \mname{} formulation by means of semidefinite programming (SDP). While the SDP implementation is based on grid-based sampling of the subarray
manifolds and applicable to arbitrary array topologies, we furthermore present a search-free implementation of our \mname{} formulation for the special case of linear subarrays with a common baseline. We show by extensive numerical experiments that the \mname{} approach outperforms the state of the art methods in difficult scenarios. In summary, our main contributions are given as: 
\begin{itemize}
\item We introduce a sparse recovery approach for coherent processing in PCAs using $\ell_{*,1}$ mixed-norm minimization. 
\item We derive a compact reformulation of the $\ell_{*,1}$ mixed-norm minimization problem, termed as \mname{}. 
\item We develop a computationally efficient grid-based SDP implementation of the \mname{} formulation for arbitrary array topologies, and 
\item an efficient gridless SDP implementation of the \mname{} formulation for PCAs composed of subarrays with a common baseline. 
\end{itemize}

The paper is organized as follows: Section \ref{sec:sigModel} introduces the PCA signal model. The $\ell_{2,1}$ and $\ell_{*,1}$ mixed-norm minimization problems for FCAs and PCAs are discussed in Section \ref{sec:sota}. The \mname{} formulation is derived in Section \ref{sec:magrec} while grid-based and gridless SDP implementations are provided in Sections \ref{sec:Implementation} and \ref{sec:Gridless}. Numerical results are presented in Section \ref{sec:sims} before the paper is concluded in Section \ref{sec:conclusion}.

\textbf{Notation:} Boldface uppercase letters $\mb{X}$ denote matrices, boldface lowercase letters $\mb{x}$ denote column vectors, and regular letters $x,N$ denote scalars, with $\tj$ denoting the imaginary unit. Superscripts $\mb{X}^\tT$ and $\mb{X}^\tH$ denote transpose and conjugate transpose of a matrix $\mb{X}$, respectively. The term $\pdMat{P}{K}$ denotes the set of positive semidefinite block-diagonal matrices composed of $K$ blocks of size $P \times P$  on the main diagonal
. We write $[\mb{X}]_{m,n}$ to indicate the element in the $m$th row and $n$th column of matrix $\mb{X}$. The statistical expectation of a random variable $x$ is denoted as $\tE\{x\}$, and the trace of a matrix $\mb{X}$ is referred to as $\tr(\mb{X})$. The Frobenius norm and the $\ell_{p,q}$ mixed-norm of a matrix $\mb{X}$ are referred to as $\|\mb{X}\|_{\tF}$ and $\|\mb{X}\|_{p,q}$, respectively, 
while the $\ell_p$ norm of a vector $\mb{x}$ is denoted as $\|\mb{x}\|_p$. The term $\diag(x_1, \ldots, x_K)$
denotes a diagonal matrix with the elements $x_1, \ldots, x_K$ on its main diagonal while $\blkdiag(\mb{X}_1, \ldots, \mb{X}_K)$ denotes a block-diagonal matrix composed of submatrices $\mb{X}_1, \ldots, \mb{X}_K$ on its main block-diagonal.


\section{Signal Model} \label{sec:sigModel}
\noindent
Consider a linear array of arbitrary topology, composed of $M$ omnidirectional sensors, as depicted in Figure \ref{fig:PcaModel}. Assume the overall array is partitioned into $P$ subarrays with $M_p$ sensors in subarray $p$, for $p=1,\ldots,P$, such that $M=\sum_{p=1}^P M_p$. We define $\mb{\eta} = [ \eta^{(2)}, \ldots, \eta^{(P)} ]^\tT$ as the vector containing the $P-1$ unknown inter-subarray displacements $\eta^{(2)}, \ldots, \eta^{(K)}$ expressed in half signal wavelength and relative to the first subarray, i.e., $\eta^{(1)} = 0$. Furthermore, let $\rho_{m}^{(p)}$, for $m=1, \ldots, M_p$, $p=1,\ldots,P$, denote the perfectly known intra-subarray position of the $m$th sensor of subarray $p$ relative to the first sensor in the subarray, hence $\rho_{1}^{(p)} = 0$, and expressed in half signal wavelength. Consequently, the position of sensor $m$ in subarray $p$, relative to the first sensor in the first subarray, can be expressed as
\begin{align}
  r_{m}^{(p)} = \rho_{m}^{(p)} + \eta^{(p)} ,
  \label{eq:SenPos}
\end{align}
for $m=1,\ldots,M_p$ and $p=1,\ldots,P$. 

\begin{figure}[t!]
\begin{center}
  \small
  \begin{tikzpicture}
	
	\def\disp{(1.0,0)}
	\tikzset{>=stealth'}
	
	\node[sensor, fill=cs1] (s1) at (-4.0,0) {};
	\node[sensor, fill=cs1] (s2) at ($(s1)+(0.9,0)$) {};
    
	\node[sensor, fill=cs2] (s3) at (-1.6,0) {};
	\node[sensor, fill=cs2] (s4) at ($(s3)+(0.7,0)$) {};
	\node[sensor, fill=cs2] (s5) at ($(s3)+(1.2,0)$) {};
	\node[sensor, fill=cs2] (s6) at ($(s3)+(2.3,0)$) {};
	
	\node[sensor, fill=cs3] (s7) at (3.1,0) {};
	\node[sensor, fill=cs3] (s8) at ($(s7)+(0.4,0)$) {};
	\node[sensor, fill=cs3] (s9) at ($(s7)+(1.2,0)$) {};
	
	\begin{pgfonlayer}{background} 
	  \draw[sensorBg] (s1.west) -- (s2.east);	
	  \draw[sensorBg] (s3.west) -- (s6.east);		
	  \draw[sensorBg] (s7.west) -- (s9.east);		
	\end{pgfonlayer}           
    
	\coordinate (h1) at (0,0.3);
	\draw[<->,gray] ($(s1)+(h1)$) -- ($(s3)+(h1)$) node[pos=1,above,black] {$\eta^{(2)}$};
	\draw[ ->,gray] ($(s3)+(h1)$) -- ($(s7)+(h1)$) node[pos=1,above,black] {$\eta^{(3)}$};
	
    \draw[<->,gray] ($(s1)-(h1)$) -- ($(s2)-(h1)$) node[pos=1,below,black] {$\rho_2^{(1)}$};
	
	\draw[<->,gray] ($(s3)-(h1)$) -- ($(s4)-(h1)$) node[pos=1,below,black] {$\rho_2^{(2)}$};
	\draw[ ->,gray] ($(s4)-(h1)$) -- ($(s5)-(h1)$) node[pos=1,below,black] {$\rho_3^{(2)}$};
	\draw[ ->,gray] ($(s5)-(h1)$) -- ($(s6)-(h1)$) node[pos=1,below,black] {$\rho_4^{(2)}$};
    
    \draw[<->,gray] ($(s7)-(h1)$) -- ($(s8)-(h1)$) node[pos=1,below,black] {$\rho_2^{(3)}$};
	\draw[ ->,gray] ($(s8)-(h1)$) -- ($(s9)-(h1)$) node[pos=1,below,black] {$\rho_3^{(3)}$};
	
    \node[circle, inner sep=1pt] (cs) at (0.0,1) {};
    \draw[gray] ($(cs)-(1,0)$) -- ++(2,0);
	
    \def\angA{180-\dA/\NGrd*180}
    \def\angC{180-\dC/\NGrd*180}
    
    \coordinate (src1) at (intersection cs: first line={(cs)--($(cs)+(\angA:6)$)}, second line={(0,2.5)--(3,2.5)}) {};
    \coordinate (src2) at (intersection cs: first line={(cs)--($(cs)+(\angC:6)$)}, second line={(0,2.5)--(3,2.5)}) {};
	
    \draw[gray,<-] ($(cs)+(\angA:0.8)$) arc (\angA:180:0.8cm);
    \node[anchor=south east] at ($(cs)+(-0.3,-0.05)$) {\textcolor{black}{$\theta_1$}};
    \draw[gray,->] ($(cs)+(180:1)$) arc (180:\angC:1cm);
    \node[anchor=south east] at ($(cs)+(0.5,0.5)$){\textcolor{black}{$\theta_2$}};
    
    \draw[->] (src1) -- (cs); 
    \draw[->] (src2) -- (cs); 
	
    \def\wcut{1.2mm}
    \def\hcut{.4mm}
    \coordinate (cut1) at ($(src1)!0.6cm!(cs)$);
    \draw[double distance=\hcut] ($ (cut1)!\wcut!90:(cs) $) -- ($ (cut1)!\wcut!-90:(cs) $);
    \draw[white] ($ (cut1)!\wcut!90:(cs) $) -- ($ (cut1)!\wcut!-90:(cs) $);
    \coordinate (cut2) at ($(src2)!0.6cm!(cs)$);
    \draw[double distance=\hcut] ($ (cut2)!\wcut!90:(cs) $) -- ($ (cut2)!\wcut!-90:(cs) $);
    \draw[white] ($ (cut2)!\wcut!90:(cs) $) -- ($ (cut2)!\wcut!-90:(cs) $);
	
    \node[source, fill=cs0, label={above:Source $1$}] at (src1) {};
    \node[source, fill=cs0, label={above:Source $2$}] at (src2) {};
	
\end{tikzpicture}
  \caption{Partly calibrated array composed of $M=9$ sensors partitioned in $P=3$ subarrays, and $L=2$ source signals}
  \label{fig:PcaModel}
  \vspace{-.4cm}
\end{center}  
\end{figure}
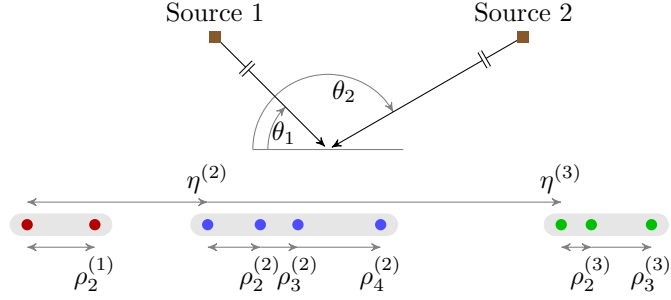

Moreover, assume a number of $L$ narrowband and far-field sources are illuminating the sensor array from angular directions $\theta_1, \ldots, \theta_L$, as illustrated in Figure \ref{fig:PcaModel}. The corresponding spatial frequencies are defined as $\mu_l = \cos \theta_l \in [-1,1)$, for $l = 1,\ldots,L$, and comprise the vector $\mb{\mu} = [\mu_1, \ldots, \mu_L]^\tT$. A total of $N$ signal snapshots are obtained at the output of each subarray $p$ and collected in the $M_p \times N$ subarray measurement matrix $\smash{\mb{Y}^{(p)}}$, for $p=1,\ldots,P$, where $\smash{[\mb{Y}^{(p)}]_{m,n}}$ denotes the output of the $m$th sensor in the $p$th subarray at time instant $n$. The subarray measurement matrices are collected in the $M \times N$ array measurement matrix $\mb{Y} = [\smash{\mb{Y}^{(1)\tT}}, \ldots, \smash{\mb{Y}^{(P)\tT}}]^\tT$, which is modeled as
\begin{equation}
 \mb{Y} = \mb{A}\left( \mb{\mu}, \mb{\eta} \right) \mb{\varPsi} + \mb{N}, 
 \label{eq:sigModelA}
\end{equation}
where $\mb{\varPsi} \in \mathbb{C}^{L \times N}$ is the source signal matrix and $\mb{N} \in \mathbb{C}^{M \times N}$ denotes a spatio-temporal white Gaussian sensor noise matrix. The $M \times L$ array steering matrix $\mb{A}\left( \mb{\mu}, \mb{\eta} \right)$ in \eqref{eq:sigModelA} is given by
\begin{align}
 \mb{A}(\mb{\mu},\mb{\eta}) = \left[ \mb{a} \left( \mu_1, \mb{\eta} \right), \ldots, \mb{a} \left( \mu_L, \mb{\eta} \right) \right], 
 \label{eq:SteerMatA}
\end{align}
and represents the response of the entire array, where $\mb{a}(\mu, \mb{\eta})$ denotes the steering vector for spatial frequency $\mu$ and subarray displacements $\mb{\eta}$. Based on the sensor position definition in \eqref{eq:SenPos}, the array steering vectors can be factorized as
\begin{align}
  \mb{a}(\mu, \mb{\eta}) 
  &= \mb{B}(\mu) \mb{\varphi}(\mu,\mb{\eta})
  \label{eq:a_mu}
\end{align}
where the $M \times P$ block-diagonal matrix 
\begin{align}
  \mb{B}(\mu)  = 
  \blkdiag \big( \mb{b}^{(1)}(\mu), \ldots, \mb{b}^{(P)}(\mu) \big)
  \label{eq:subarraySteeringMatrix}
\end{align}
contains the perfectly known subarray steering vectors
\begin{equation}
  \mb{b}^{(p)} \left( \mu \right) =  \big[ 1, e^{\tj \mu \, \rho_2^{(p)} }, \ldots, e^{\tj \mu \, \rho_{M_p}^{(p)} } \big]^\tT ,
  \label{eq:bk}
\end{equation}
for $p=1,\ldots,P$, on its diagonal, and the $L \times 1$ vector
\begin{align}
  \mb{\varphi}(\mu,\mb{\eta}) = [ 1, \alpha^{(2)} \te^{\tj\pi \mu \eta^{(2)}}, \ldots, \alpha^{(P)} \te^{\tj\pi \mu \eta^{(P)}} ]^\tT 
  \label{eq:Phik}
\end{align}
takes account of the subarray displacement shifts $\te^{\tj\pi \mu \eta^{(p)}}$, for $p=2,\ldots,P$, depending on the spatial frequencies in $\mb{\mu}$ and the subarray displacements in $\mb{\eta}$, and further unknown shifts $\alpha^{(p)}$, e.g., gain/phase or timing offsets among the subarrays \cite{Gershman:Rare}. In relation to \eqref{eq:SteerMatA}, let us define the $M \times PL$ matrix
\begin{align}
  \mb{B}(\mb{\mu}) = \big[ \mb{B}(\mu_1), \ldots, \mb{B}(\mu_L) \big]
  \label{eq:B}
\end{align}
containing all subarray responses for the spatial frequencies in $\mb{\mu}$, and the $PL \times L$ block-diagonal matrix
\begin{align}
  \mb{\varPhi}(\mb{\mu}, \mb{\eta}) &= 
  \blkdiag \big( \mb{\varphi}(\mu_1, \mb{\eta}),  \, \ldots, \, \mb{\varphi}(\mu_L, \, \mb{\eta}) \big) ,
  \label{eq:Phi}
\end{align}
composed of the subarray shift vectors in \eqref{eq:Phik}. Using \eqref{eq:B} and \eqref{eq:Phi}, the overall array steering matrix \eqref{eq:SteerMatA} can be factorized as
\begin{align}
  \mb{A}\left( \mb{\mu}, \mb{\eta} \right) =
  \mb{B}(\mb{\mu}) \; \mb{\varPhi}(\mb{\mu}, \mb{\eta} )
  \label{eq:ABPhi} 
\end{align}
such that the overall array measurement matrix in \eqref{eq:sigModelA} is equivalently modeled as
\begin{equation}
 \mb{Y} = \mb{B}(\mb{\mu}) \; \mb{\varPhi}(\mb{\mu}, \mb{\eta} ) \; \mb{\varPsi} + \mb{N},
 \label{eq:sigModelB}
\end{equation}
which forms the basis for the $\ell_{*,1}$ mixed-norm minimization problem discussed in the following section.


\section{State-of-the-Art} \label{sec:sota}
\noindent
In this section we will shortly review the $\ell_{2,1}$ mixed-norm minimization approach for FCAs before turning to the $\ell_{*,1}$ mixed-norm minimization approach for PCAs.

\subsection{Fully Calibrated Array}
\noindent
We first consider the case of an FCA where the subarray displacements in $\mb{\eta}$ are perfectly known. Based on the signal model in \eqref{eq:sigModelA} we introduce a sparse representation of the measurement matrix as
\begin{align}
  \mb{Y} = \mb{A}( \mb{\nu},\mb{\eta} ) \mb{X} + \mb{N}.
  \label{eq:SparseModelA}
\end{align}
The $M \times K$ overcomplete dictionary matrix $\mb{A}( \mb{\nu},\mb{\eta} )$ is obtained by sampling the field-of-view in $K \gg L$ spatial frequencies $\mb{\nu} = [ \nu_1, \ldots, \nu_K]^\tT$. For ease of presentation we assume that the frequency grid is sufficiently fine, such that the true frequencies in $\mb{\mu}$ are contained in the frequency grid $\mb{\nu}$, i.e., $\{\mu_l\}_{l=1}^L \subset \{\nu_k\}_{k=1}^K$. In Section \ref{sec:Gridless} we present an extension of our proposed formulation for subarrays with a common baseline which does not rely on the on-grid assumption. The $K \times N$ sparse signal matrix $\mb{X}$ in \eqref{eq:SparseModelA} contains elements
\begin{align}
  [\mb{X}]_{k,n} =&
  \begin{cases}
	[\mb{\varPsi}]_{l,n} \quad &\text{if } \nu_k = \mu_l \\
	0		  \quad &\text{else,}
  \end{cases}
  \label{eq:rowSparseStruct}
\end{align}
for $k=1,\ldots,K$, $l=1,\ldots,L$. Thus $\mb{X}$ exhibits a row-sparse structure, i.e., the elements in a row of $\mb{X}$ are either jointly zero or primarily non-zero. Based on the sparse representation \eqref{eq:SparseModelA}, the frequency estimation problem can be formulated as the mixed-norm minimization problem
\begin{equation}
  \min_{\mb{X}} 
  \frac{1}{2} \left\| \mb{A}(\mb{\nu}, \mb{\eta}) \; \mb{X} - \mb{Y} \right\|_\tF^2 + 
  \lambda \sqrt{N} \| \mb{X} \|_{2,0} ,
  \label{eq:mixedVectorNormL20}
\end{equation}
where $\lambda > 0$ is a regularization parameter determining the sparsity, i.e., the number of non-zero rows in the minimizer $\smash{\hat{\mb{X}}}$. Row-sparsity is enforced by minimizing the $\ell_{2,0}$ mixed-norm in \eqref{eq:mixedVectorNormL20}, which is defined as the number of non-zero rows $\mb{x}_k$ of the matrix $\mb{X}  = [ \mb{x}_1, \ldots, \mb{x}_K]^\tT$, i.e., the cardinality of the union support set according to
\begin{align}
  \| \mb{X} \|_{2,0} = \big| \{ k \,:\, \| \mb{x}_k \|_2 > 0 \} \big| .
\end{align}
Since the problem in \eqref{eq:mixedVectorNormL20} is NP-hard, several approximation methods have been proposed in the literature, including convex relaxation to the well-known $\ell_{2,1}$ mixed-norm minimization problem \cite{yuan2006grouplasso, kowalski2009mixednorm, Malioutov:LassoDoa, steffens2016compact}
\begin{equation}
  \min_{\mb{X}} 
  \frac{1}{2} \left\| \mb{A}(\mb{\nu}, \mb{\eta}) \; \mb{X} - \mb{Y} \right\|_\tF^2 + 
  \lambda \sqrt{N} \| \mb{X} \|_{2,1} .
  \label{eq:mixedVectorNorm}
\end{equation}
The $\ell_{2,1}$ mixed-norm in \eqref{eq:mixedVectorNorm} is defined as
\begin{align}
  \| \mb{X} \|_{2,1} = \sum_{k=1}^{K} \left\| \mb{x}_k \right\|_2
  \label{eq:lpqNorm}
\end{align}
and induces a non-linear coupling among the elements in each row $\mb{x}_k$, $k=1,\ldots,K$, of the matrix $\mb{X}$ such that the $\ell_1$ norm, i.e., the nonnegative summation, is performed on the $\ell_2$ norms of the rows in $\opt{\mb{X}}$. Given a minimizer $\smash{\hat{\mb{X}} = [ \hat{\mb{x}}_1, \ldots, \hat{\mb{x}}_K]^\tT}$ of \eqref{eq:mixedVectorNorm}, the frequency estimation problem reduces to finding the local maxima in the vector of the signal $\ell_2$ row-norms $ \hat{\mb{x}}^{\ell_2} = [ \| \hat{\mb{x}}_1 \|_2, \ldots, \| \hat{\mb{x}}_K \|_2]^\tT$ and assigning the corresponding frequency grid points to the set $\{ \hat{\mu} \}$ of estimated frequencies.


For the PCA case with uncertain array response $\mb{A}(\mb{\nu}, \mb{\eta})$ due to the unknown displacements in $\mb{\eta}$, the $\ell_{2,1}$ mixed-norm minimization approach in \eqref{eq:mixedVectorNorm} cannot be applied and a more sophisticated approach has to be devised, as discussed in the following subsection.

\subsection{Partly Calibrated Array}
\noindent
Analogous to the FCA case in \eqref{eq:SparseModelA}, we introduce a sparse representation of the signal model in \eqref{eq:sigModelB} for the PCA case as
\begin{equation}
 \mb{Y} = \mb{B}(\mb{\nu}) \; \mb{\varPhi}(\mb{\nu}, \mb{\eta} ) \; \mb{X} + \mb{N} ,
 \label{eq:SparseModelB}
\end{equation}
where the row-sparse matrix $\mb{X}$ is defined similarly as for the FCA case in \eqref{eq:rowSparseStruct}. Furthermore, the $M \times PK$ overcomplete subarray dictionary matrix $\mb{B}(\mb{\nu})$ and the $PK \times K$ overcomplete subarray shift matrix $\mb{\varPhi}(\mb{\nu}, \mb{\eta})$ are defined in correspondence to \eqref{eq:B} and \eqref{eq:Phi}, respectively.

In the PCA case, the inter-subarray displacements in $\mb{\eta}$ are unknown and thus represent additional estimation variables, hence the subarray shifts in $\mb{\varPhi}(\mb{\nu}, \mb{\eta} )$, which depend on the spatial frequencies in $\mb{\nu}$ and the subarray displacements $\mb{\eta}$, have to be appropriately included in the sparse estimation problem. To this end we introduce a model that couples among the variables $\mb{x}_k$ in the rows of $\mb{X} = [\mb{x}_1, \ldots, \mb{x}_K]^\tT$ and the subarray shifts in $\mb{\varphi}( \nu_k, \mb{\eta} )$, for $k=1,\ldots,K$. We define the $KP \times N$ extended signal matrix $\mb{Q}$ as
\begin{align}
  \mb{Q} &= \mb{\varPhi}(\mb{\nu}, \mb{\eta} ) \; \mb{X} \label{eq:Q} 
\end{align}
containing the products of the subarray shifts and the signal waveforms. Note that in this formulation the number of the unknown complex-valued signal variables is increased to $K P N$ elements in the matrix $\mb{Q}$, as compared to the total $K (N + P -1)$ complex-valued unknowns in both $\mb{X}$ and $\mb{\varPhi}(\mb{\nu},\mb{\eta})$. On the other hand, due to the block structure of the subarray shift matrix $\mb{\varPhi}(\mb{\nu},\mb{\eta})$ as defined in \eqref{eq:Phi}, the matrix $\mb{Q} = [\mb{Q}_1^\tT, \ldots, \mb{Q}_K^\tT]^\tT$ in \eqref{eq:Q} enjoys a special structure as it is composed of $K$ stacked rank-one matrices 
\begin{equation}
  \mb{Q}_k = \mb{\varphi}(\nu_k,\mb{\eta}) \, \mb{x}^\tT_k, \quad \text{ for } k=1,\ldots,K . 
  \label{eq:Qn}
\end{equation}
Using the formulation in \eqref{eq:Q}, the sparse representation for the PCA case in \eqref{eq:SparseModelB} is equivalently described by 
\begin{align}
  \mb{Y} = \mb{B} \mb{Q} + \mb{N} ,
  \label{eq:sigModelQ}
\end{align}
where for ease of presentation we use $\mb{B} = \mb{B}(\mb{\nu})$ to denote the dictionary matrix in \eqref{eq:sigModelQ} and throughout the paper. An SR approach to take account of the special structure of the signal matrix $\mb{Q}$ in \eqref{eq:Q} is given as
\begin{equation}
  \min_{\mb{Q}} 
  \frac{1}{2} \left\| \mb{B} \mb{Q} - \mb{Y} \right\|_\tF^2 + 
  \lambda \sqrt{N} \sum_{k=1}^K \rank (\mb{Q}_k) .
  \label{eq:RankProblem}
\end{equation} 
The formulation in \eqref{eq:RankProblem} takes twofold advantage of the sparsity assumption. First, minimization of the rank-terms encourages low-rank blocks $\hat{\mb{Q}}_1, \ldots, \hat{\mb{Q}}_K$ in the minimizer $\hat{\mb{Q}}$. Second, minimizing the sum-of-ranks provides a block-sparse structure of $\hat{\mb{Q}}$, i.e., the elements in each block $\hat{\mb{Q}}_k$, for $k=1,\ldots,K$, are either jointly zero or primarily non-zero. However, the problem in \eqref{eq:RankProblem} is NP-hard and computationally intractable. 

The nuclear norm represents a tight convex approximation of the rank function and it has been successfully applied in a variety of rank minimization problems \cite{Boyd:RankMinimization, Recht2010}. The definition of the nuclear norm is given as
\begin{equation}
    \left\| \mb{Q}_k \right\|_{*} = \tr \big( (\mb{Q}_k^\tH \mb{Q}_k)^{1/2} \big) = \sum_{i=1}^{r} \sigma_{k,i},
\end{equation}
where $r = \min(P,N)$ and $\sigma_{k,i}$ is the $i$th singular value of $\mb{Q}_k$. 
Along these lines it has been proposed in \cite{6882328} to approximate the sparse estimation problem \eqref{eq:RankProblem} by the following convex minimization problem
\begin{equation}
  \min_{\mb{Q}} 
  \frac{1}{2} \left\| \mb{B} \mb{Q} - \mb{Y} \right\|_\tF^2 + 
  \lambda \sqrt{N} \left\| \mb{Q} \right\|_{*,1},
  \label{eq:ConvProblem}
\end{equation} 
where $\left\| \mb{Q} \right\|_{*,1}$ denotes the $\ell_{*,1}$ mixed-norm, computed as
\begin{align}
  \left\| \mb{Q} \right\|_{*,1} = \sum_{k=1}^{K} \left\| \mb{Q}_k \right\|_{*} .
  \label{eq:Nuc1MixedNorm}
\end{align}
Similar to \eqref{eq:RankProblem}, the problem in \eqref{eq:ConvProblem} motivates low-rank blocks $\hat{\mb{Q}}_1, \ldots, \hat{\mb{Q}}_K$ and a block-sparse structure in the minimizer $\hat{\mb{Q}} = [ \hat{\mb{Q}}_1^\tT, \ldots, \hat{\mb{Q}}_K^\tT ]^\tT$. Note that the PCA formulations in \eqref{eq:RankProblem} and \eqref{eq:ConvProblem} reduce to the FCA formulations in \eqref{eq:mixedVectorNormL20} and \eqref{eq:mixedVectorNorm}, respectively, in the case of a single subarray, i.e., $P=1$.

Performing singular value decomposition on the matrix blocks in $\hat{\mb{Q}}$, i.e.,
\begin{align}
  \hat{\mb{Q}}_k = \hat{\mb{U}}_k \hat{\mb{\Sigma}}_k \hat{\mb{V}}_k^\tT \quad \text{for } k=1,\ldots,K,
\end{align}
the signal waveform $\hat{\mb{x}}_k$ and subarray shifts $\hat{\mb{\varphi}}(\nu_k,\mb{\eta})$ corresponding to the spatial frequency $\nu_k$ can be recovered according to
\begin{align}
 \hat{\mb{x}}_k = \hat{\sigma}_{k,1} \, [\hat{\mb{u}}_{k,1}]_1 \, \hat{\mb{v}}_{k,1}
  \quad \text{and} \quad
 \hat{\mb{\varphi}}(\nu_k,\mb{\eta}) = \frac{\hat{\mb{u}}_{k,1}}{[\hat{\mb{u}}_{k,1}]_1} .
 \label{eq:sigPhase}
\end{align}
The left and right singular vectors $\hat{\mb{u}}_{k,1}$ and $\hat{\mb{v}}_{k,1}$ in \eqref{eq:sigPhase} correspond to the largest singular value $\hat{\sigma}_{k,1}$ of $\hat{\mb{Q}}_k$ and normalization to the first element $[\hat{\mb{u}}_{k,1}]_1$ of $\hat{\mb{u}}_{k,1}$ in \eqref{eq:sigPhase} is performed to take account of the structure of the subarray shift vectors $\hat{\mb{\varphi}}(\nu_k,\mb{\eta})$ according to \eqref{eq:Phik}.

While the $\ell_{*,1}$ mixed-norm minimization problem in \eqref{eq:ConvProblem} provides a tractable approach for SR in PCAs, it suffers from high computational complexity in the case of a large number of snapshots $N$ and grid points $K$. To overcome this difficulty we provide in the following section a compact reformulation of problem \eqref{eq:ConvProblem}.


\section{Compact Block- and Rank-Sparse Recovery} \label{sec:magrec}
\noindent
One of the main results of this paper is formulated in the following theorem:
\begin{theorem}[Problem Equivalence] \label{th:equivalence}
The block- and rank-sparsity inducing $\ell_{*,1}$ mixed-norm minimization problem 
\begin{equation}
  \min_{\mb{Q}} 
  \frac{1}{2} \left\| \mb{B} \mb{Q} - \mb{Y} \right\|_\tF^2 + 
  \lambda \sqrt{N} \left\| \mb{Q} \right\|_{*,1} 
  \label{eq:mixedVectorNorm_v2}
\end{equation}
is equivalent to the convex problem 
\begin{align}
  \min_{\mb{S} \in \pdMat{P}{K}} \tr \big( (\mb{B} \mb{S} \mb{B}^\tH  + \lambda \mb{I})^{-1} \hat{\mb{R}} \big) + \tr(\mb{S}), \label{eq:smr1}  
\end{align}
with $\hat{\mb{R}} = \mb{Y} \mb{Y}^\tH /N$ and $\pdMat{P}{K} $ denoting the sample covariance matrix and the set of positive semidefinite block-diagonal matrices composed of $K$ blocks of size $P \times P$, respectively. The equivalence holds in the sense that a minimizer  $\opt{\mb{Q}}$ for problem \eqref{eq:mixedVectorNorm_v2} can be factorized as
\begin{align}
  \opt{\mb{Q}} = \opt{\mb{S}} \mb{B}^\tH  ( \mb{B} \opt{\mb{S}} \mb{B}^\tH + \lambda \mb{I} )^{-1} \mb{Y} 
  \label{eq:smr2}
\end{align}
where $\opt{\mb{S}}$ is a minimizer for problem \eqref{eq:smr1}.
\end{theorem}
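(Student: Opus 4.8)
The plan is to reduce the nonsmooth matrix problem \eqref{eq:mixedVectorNorm_v2} to the smooth problem \eqref{eq:smr1} by introducing the positive semidefinite variable $\mb{S}$ through the \emph{variational characterization of the nuclear norm} and then eliminating $\mb{Q}$ analytically. First I would establish that for any block $\mb{Q}_k \in \mathbb{C}^{P \times N}$,
\[
  \|\mb{Q}_k\|_* = \min_{\mb{S}_k \succeq 0} \tfrac{1}{2}\big( \tr(\mb{S}_k) + \tr(\mb{Q}_k^\tH \mb{S}_k^{+} \mb{Q}_k) \big),
\]
with the minimum attained at $\mb{S}_k = (\mb{Q}_k \mb{Q}_k^\tH)^{1/2}$, where $\mb{S}_k^{+}$ denotes the pseudoinverse and the quadratic term is read as $+\infty$ when $\mb{Q}_k$ leaves the range of $\mb{S}_k$. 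This follows by inserting the singular value decomposition of $\mb{Q}_k$ and applying the scalar bound $\tfrac{1}{2}(s + \sigma^2/s) \ge \sigma$ with equality at $s=\sigma$. Summing over $k$ and stacking the $\mb{S}_k$ into a block-diagonal $\mb{S} = \blkdiag(\mb{S}_1,\ldots,\mb{S}_K) \in \pdMat{P}{K}$ turns $\sqrt{N}\|\mb{Q}\|_{*,1}$ into $\tfrac{\sqrt{N}}{2}\big(\tr(\mb{S}) + \tr(\mb{Q}^\tH \mb{S}^{+}\mb{Q})\big)$, so \eqref{eq:mixedVectorNorm_v2} becomes a joint minimization over $(\mb{Q},\mb{S})$.

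Next I would interchange the two infima and solve the inner problem over $\mb{Q}$ in closed form. For fixed $\mb{S} \succ 0$ the inner objective is a strictly convex quadratic in $\mb{Q}$ whose stationarity condition reads $(\mb{B}^\tH \mb{B} + \lambda\sqrt{N}\,\mb{S}^{-1})\mb{Q} = \mb{B}^\tH \mb{Y}$, giving $\mb{Q}(\mb{S}) = (\mb{B}^\tH \mb{B} + \lambda\sqrt{N}\,\mb{S}^{-1})^{-1}\mb{B}^\tH \mb{Y}$. The push-through identity $(\mb{B}^\tH \mb{B} + \lambda\sqrt{N}\,\mb{S}^{-1})^{-1}\mb{B}^\tH = \mb{S}\mb{B}^\tH(\mb{B}\mb{S}\mb{B}^\tH + \lambda\sqrt{N}\,\mb{I})^{-1}$ then re-expresses this as $\mb{Q}(\mb{S}) = \mb{S}\mb{B}^\tH(\mb{B}\mb{S}\mb{B}^\tH + \lambda\sqrt{N}\,\mb{I})^{-1}\mb{Y}$, which is the precursor of \eqref{eq:smr2}.

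I would then substitute $\mb{Q}(\mb{S})$ back into the objective. With $\mb{M} = \mb{B}^\tH\mb{B} + \lambda\sqrt{N}\,\mb{S}^{-1}$ the data-fit term collapses to $\tfrac{1}{2}\tr\big(\mb{Y}^\tH(\mb{I} - \mb{B}\mb{M}^{-1}\mb{B}^\tH)\mb{Y}\big)$, and the Woodbury identity gives $\mb{I} - \mb{B}\mb{M}^{-1}\mb{B}^\tH = \lambda\sqrt{N}\,(\mb{B}\mb{S}\mb{B}^\tH + \lambda\sqrt{N}\,\mb{I})^{-1}$. Collecting terms and using the cyclic property of the trace, the reduced objective equals $\tfrac{\lambda\sqrt{N}}{2}\big(\tr((\mb{B}\mb{S}\mb{B}^\tH + \lambda\sqrt{N}\,\mb{I})^{-1}\mb{Y}\mb{Y}^\tH) + \tr(\mb{S})\big)$. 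The final step is the change of variables $\mb{S} = \sqrt{N}\,\opt{\mb{S}}$, which normalizes the regularizer to $\lambda\mb{I}$ and replaces $\mb{Y}\mb{Y}^\tH$ by $N\hat{\mb{R}}$, yielding $\tfrac{\lambda N}{2}\big(\tr((\mb{B}\opt{\mb{S}}\mb{B}^\tH + \lambda\mb{I})^{-1}\hat{\mb{R}}) + \tr(\opt{\mb{S}})\big)$. Since $\tfrac{\lambda N}{2}$ is a positive constant, the minimizer coincides with that of \eqref{eq:smr1}; pushing $\mb{S} = \sqrt{N}\,\opt{\mb{S}}$ through the closed form for $\mb{Q}(\mb{S})$ cancels the $\sqrt{N}$ factors and produces exactly \eqref{eq:smr2}.

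The hard part will be the rank-deficient boundary in the nuclear-norm identity: at the optimum the blocks $\opt{\mb{Q}}_k$ are expected to be at most rank one, so the optimal $\mb{S}_k$ are singular and the inverses $\mb{S}_k^{-1}$ used in the elimination do not literally exist. I would handle this by carrying out the interchange of infima and the $\mb{Q}$-elimination on the open cone $\mb{S} \succ 0$, where all inverses are well defined, and then extending the equivalence to the full feasible set $\pdMat{P}{K}$ by continuity of the reduced objective, which is guaranteed because $\lambda > 0$ keeps $\mb{B}\mb{S}\mb{B}^\tH + \lambda\mb{I}$ uniformly invertible on $\pdMat{P}{K}$. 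A minor additional check is that the interchange $\min_{\mb{Q}}\min_{\mb{S}} = \min_{\mb{S}}\min_{\mb{Q}}$ is legitimate, which holds since both are infima of the same jointly defined function.
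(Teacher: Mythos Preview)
Your proposal is correct and reaches the same destination as the paper, but via a different variational representation of the nuclear norm. The paper invokes the \emph{factorization} identity
\[
  \|\mb{Q}_k\|_* \;=\; \min_{\mb{\varGamma}_k \mb{G}_k = \mb{Q}_k}\; \tfrac12\big(\|\mb{\varGamma}_k\|_\tF^2 + \|\mb{G}_k\|_\tF^2\big),
\]
substitutes $\mb{Q}=\mb{\varGamma}\mb{G}$ into \eqref{eq:mixedVectorNorm_v2}, eliminates $\mb{G}$ in closed form for fixed $\mb{\varGamma}$, applies the Woodbury identity, and only at the end identifies $\mb{S}=\mb{\varGamma}\mb{\varGamma}^\tH/\sqrt{N}$. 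You instead use the \emph{semidefinite} identity $\|\mb{Q}_k\|_* = \min_{\mb{S}_k \succeq 0}\tfrac12(\tr\mb{S}_k + \tr(\mb{Q}_k^\tH \mb{S}_k^{+}\mb{Q}_k))$, which introduces the final variable $\mb{S}$ directly, and then eliminate $\mb{Q}$ rather than $\mb{G}$. The two routes are dual to each other under $\mb{S}_k = \mb{\varGamma}_k\mb{\varGamma}_k^\tH$; both require the same push-through/Woodbury step and the same $\sqrt{N}$ rescaling. Your route has the advantage that the intermediate joint objective in $(\mb{Q},\mb{S})$ is jointly convex (matrix-fractional), whereas the paper passes through a bilinear, nonconvex intermediate problem in $(\mb{\varGamma},\mb{G})$ and relies on the exactness of Lemma~\ref{NucNormMin} to argue that the global minimum is still captured. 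Conversely, the paper's factorization avoids the pseudoinverse and the boundary-continuity argument you need for singular $\mb{S}$, since $\mb{\varGamma}_k\mb{G}_k$ is defined for all ranks. Your handling of that boundary (work on $\mb{S}\succ 0$, then extend by continuity of $(\mb{B}\mb{S}\mb{B}^\tH+\lambda\mb{I})^{-1}$) is sound and is the standard fix.
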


A proof of the equivalence is provided in Appendix \ref{sec:proof}, while a proof of the convexity of \eqref{eq:smr1} is provided in Section \ref{sec:SDP} by establishing equivalence to a semidefinite program. 

In addition to relation \eqref{eq:smr2}, it can be shown (see Appendix \ref{sec:proof}) that a minimizer $\opt{\mb{S}}=\blkdiag(\opt{\mb{S}}_1, \ldots, \opt{\mb{S}}_K)$ of \eqref{eq:smr1} relates to the signal matrix $\opt{\mb{Q}} = [\opt{\mb{Q}}_1^\tT, \ldots, \opt{\mb{Q}}_K^\tT]^\tT$ according to
\begin{align}
  \opt{\mb{S}}_k = \frac{1}{\sqrt{N}} ( \opt{\mb{Q}}_k \opt{\mb{Q}}_k^\tH )^{1/2},
  \label{eq:magIdentity}
\end{align}
for $k=1,\ldots,K$, such that the block-support of $\opt{\mb{Q}}$ is equivalently represented by the block-support of the matrix $[\opt{\mb{S}}_1^\tT, \ldots, \opt{\mb{S}}_K^\tT]^\tT$. Similarly, the rank of the matrix blocks $\opt{\mb{Q}}_k$ is equivalently represented by the matrix blocks $\opt{\mb{S}}_k$, i.e., $\rank(\opt{\mb{Q}}_k) = \rank(\opt{\mb{S}}_k)$, for $k=1,\ldots,K$. 

We observe that the problem in \eqref{eq:smr1} only relies on the measurement matrix $\mb{Y}$ through the sample covariance matrix $\hat{\mb{R}}$, leading to a significantly reduced problem size,  especially  in  the  case  of  large  number  of  snapshots $N$. In this context we term the formulation in \eqref{eq:smr1} as COmpact Block- and RAnk-Sparse recovery (\mname{}). The compact formulation \eqref{eq:smr1} contains $K P^2$ real-valued optimization parameters in the positive semidefinite matrix $\mb{S}$, as opposed to the $2KPN$ real-valued optimization parameters in $\mb{Q}$ in problem \eqref{eq:mixedVectorNorm_v2}. Consequently, in the case of a large number of snapshots $N > P/2$, the reformulation \eqref{eq:smr1} has reduced computational complexity as compared to \eqref{eq:mixedVectorNorm_v2}. 


\section{Implementation of the \mname{} Formulation} \label{sec:Implementation}
\noindent
The $\ell_{*,1}$ mixed-norm minimization problem \eqref{eq:mixedVectorNorm_v2} has been well investigated in literature and implementations based on the coordinate descent method \cite{6882328}, the STELA algorithm \cite{steffens2016mimo} and semidefinite programming (SDP) \cite{steffens2016noncircular} have been proposed. Here we will shortly revise the SDP implementation of the $\ell_{*,1}$ mixed-norm minimization problem \eqref{eq:mixedVectorNorm_v2} to highlight the reduction in computational complexity obtained by employing the \mname{} formulation in \eqref{eq:smr1}. 

\subsection{SDP Form of the $\ell_{*,1}$ Mixed-Norm Minimization Problem} \label{sec:MixedNormImplementation}
\noindent 
As discussed in \cite{Boyd:RankMinimization}, minimization of the nuclear norm 
\begin{align}  
	\min_{\mb{Q}_k \in \mathcal{C}} \; \left\| \mb{Q}_k \right\|_{*},
	\label{eq:NucNormSdp0}
\end{align}
for some convex set $\mathcal{C}$, can be expressed as the SDP
\begin{subequations}
\label{eq:NucNormSdp1}
\begin{align}  
    \min_{\substack{\mb{Q}_k \in \mathcal{C}, \\ \mb{P}_{k,1}, \mb{P}_{k,2}}} 	& \;\; \frac{1}{2} \big( \tr (\mb{P}_{k,1}) + \tr (\mb{P}_{k,2}) \big) \\
	\tst 	& \; \mtx{ \mb{P}_{k,1} & \mb{Q}_k \\ \mb{Q}_k^\tH & \mb{P}_{k,2} } \succeq \mb{0} ,	
\end{align}
\end{subequations}
where $\mb{P}_{k,1} = \mb{P}_{k,1}^\tH$ and $\mb{P}_{k,2} = \mb{P}_{k,2}^\tH$ are auxiliary variables of size $P \times P$ and $N \times N$, respectively. The SDP formulation \eqref{eq:NucNormSdp1} admits simple implementation of the nuclear norm minimization problem using standard convex solvers, such as SeDuMi \cite{S98guide}.

Based on the equivalence of \eqref{eq:NucNormSdp0} and \eqref{eq:NucNormSdp1}, the $\ell_{*,1}$ mixed-norm minimization problem \eqref{eq:mixedVectorNorm_v2} can be equivalently formulated as 
\begin{subequations}
\label{eq:NucNormSdp2}
\begin{align}
  \smash{ \min_{\substack{\{\mb{Q}_k, \mb{P}_{k,1}, \\ \mb{P}_{k,2}\}}} } & \; 
  \left\| \mb{B} \mb{Q} - \mb{Y} \right\|_\tF^2 + \lambda \sqrt{N} \sum_{k=1}^{K} \tr (\mb{P}_{k,1}) + \tr (\mb{P}_{k,2})  \\
  \tst 	& \mtx{ \mb{P}_{k,1} & \mb{Q}_k \\ \mb{Q}_k^\tH & \mb{P}_{k,2} } \succeq 0, \text{ for } k=1,\ldots,K .
\end{align}
\end{subequations}
Note that with the auxiliary variables in $\mb{P}_{k,1}$ and $\mb{P}_{k,2}$, for $k=1,\ldots,K$, the problem \eqref{eq:NucNormSdp2} has $K (P+N)^2$ real-valued optimization variables as opposed to the problem formulation in \eqref{eq:mixedVectorNorm_v2} which has $2KPN$ real-valued optimization variables. For a large number of grid points $K$ or snapshots $N$ the SDP formulation \eqref{eq:NucNormSdp2} becomes intractable and alternative implementations are required as presented in the next subsection. We remark that the problem of large snapshot number has been addressed in previous literature by matching the signal subspace of the measurements $\mb{Y}$ instead of the measurements itself, see \cite{Malioutov:LassoDoa, steffens2016noncircular}, leading to a reduced number of effective signal snapshots, however at the expense of potential performance degradation, e.g., in the case of correlated source signals. 

\def\twa{3cm}
\def\twb{3.8cm}
\begin{table}[t]
\begin{center}
\small
\begin{tikzpicture}
\matrix (first) [table]
{
\node[draw=white, fill=white] {}; 
  &[2pt]  \parbox{\twb}{$\ell_{*,1}$ Mixed-Norm \eqref{eq:NucNormSdp2}} 
  & \parbox{\twb}{\mname{} \eqref{eq:sdp1}} 
  & \parbox{\twb}{\mname{} \eqref{eq:sdp1b}} \\
\parbox{\twa}{Number of real\\ parameters}	
	& \parbox{\twb}{ $K(P+N)^2$} 
	& \parbox{\twb}{ $KP^2 + N^2$}
	& \parbox{\twb}{ $KP^2 + M^2 $} \\
\parbox{\twa}{Number $\times$ size \\of SDP constraints}
	& \parbox{\twb}{$K \times \{ (P \!+\! N) \!\times\! (P \!+\! N) \}$}	
	& \parbox{\twb}{$K \times \{ P \!\times\! P \}$ and \\ 
					$1 \times \{ (M \!+\! N) \!\times\! (M \!+\! N) \}$}
	& \parbox{\twb}{$K \times \{ P \!\times\! P \}$ and \\ 
					$1 \times \{ (2M) \!\times\! (2M) \}$} \\
};
\end{tikzpicture}
\vspace{-.3cm}
\caption{Comparison of Equivalent SDP Implementations}
\label{tab:sdp}

\end{center}
\end{table}

\subsection{SDP Form of the \mname{} Method} \label{sec:SDP}
\noindent 
In order to solve the \mname{} formulation in \eqref{eq:smr1} by means of a tractable SDP which can be treated by standard convex solvers consider the following corollaries \cite{vandenberghe1996semidefinite}:
\begin{corollary} \label{col:sdp1}
The \mname{} formulation in \eqref{eq:smr1} is equivalent to the convex semidefinite program
\begin{subequations}
\label{eq:sdp1}
\begin{align}
  \min_{\mb{S}, \mb{Z}_N} \; & \;\; \frac{1}{N} \tr( \mb{Z}_N ) + \tr( \mb{S} ) \\
  {\rm s.t. } & \; \mtx{ \mb{Z}_N & \mb{Y}^\tH \\ \mb{Y} & 
  \mb{B} \mb{S} \mb{B}^\tH + \lambda \mb{I} } \succeq \mb{0}  \label{eq:sdp1Con1} \\
  & \; \; \mb{S} \in \pdMat{P}{K} ,
\end{align}	
\end{subequations}
where $\mb{Z}_N$ is a Hermitian matrix of size $N \times N$.
\end{corollary}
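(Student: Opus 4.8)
The plan is to eliminate the auxiliary variable $\mb{Z}_N$ from the semidefinite program \eqref{eq:sdp1} by a Schur complement argument and to show that the resulting reduced problem coincides with the \mname{} formulation \eqref{eq:smr1}. Throughout, I would write $\mb{G} = \mb{B}\mb{S}\mb{B}^\tH + \lambda\mb{I}$ for the lower-right block of the linear matrix inequality \eqref{eq:sdp1Con1}. Since $\mb{S} \in \pdMat{P}{K}$ is positive semidefinite we have $\mb{B}\mb{S}\mb{B}^\tH \succeq \mb{0}$, and because $\lambda > 0$ the block satisfies $\mb{G} \succeq \lambda\mb{I} \succ \mb{0}$, so $\mb{G}$ is positive definite and hence invertible for every feasible $\mb{S}$.

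First I would rewrite the nonlinear term in \eqref{eq:smr1}. Substituting $\hat{\mb{R}} = \mb{Y}\mb{Y}^\tH/N$ and using the cyclic invariance of the trace gives
\begin{align}
  \tr\big( \mb{G}^{-1} \hat{\mb{R}} \big)
  = \frac{1}{N}\tr\big( \mb{G}^{-1} \mb{Y}\mb{Y}^\tH \big)
  = \frac{1}{N}\tr\big( \mb{Y}^\tH \mb{G}^{-1} \mb{Y} \big),
  \label{eq:traceRewrite}
\end{align}
so that the \mname{} objective \eqref{eq:smr1} equals $\tfrac{1}{N}\tr(\mb{Y}^\tH\mb{G}^{-1}\mb{Y}) + \tr(\mb{S})$.

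Next I would invoke the Schur complement lemma \cite{vandenberghe1996semidefinite}. Since $\mb{G} \succ \mb{0}$, the constraint \eqref{eq:sdp1Con1} holds if and only if the Schur complement of $\mb{G}$ is positive semidefinite, that is,
\begin{align}
  \mtx{ \mb{Z}_N & \mb{Y}^\tH \\ \mb{Y} & \mb{G} } \succeq \mb{0}
  \quad\Longleftrightarrow\quad
  \mb{Z}_N - \mb{Y}^\tH \mb{G}^{-1} \mb{Y} \succeq \mb{0}.
  \label{eq:schur}
\end{align}
For fixed $\mb{S}$, the inner minimization over $\mb{Z}_N$ in \eqref{eq:sdp1} thus reduces to minimizing $\tfrac{1}{N}\tr(\mb{Z}_N)$ subject to $\mb{Z}_N \succeq \mb{Y}^\tH\mb{G}^{-1}\mb{Y}$. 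By monotonicity of the trace on the positive semidefinite cone, the minimum is attained at $\mb{Z}_N = \mb{Y}^\tH\mb{G}^{-1}\mb{Y}$, giving value $\tfrac{1}{N}\tr(\mb{Y}^\tH\mb{G}^{-1}\mb{Y})$. Combining this with \eqref{eq:traceRewrite} shows that, after eliminating $\mb{Z}_N$, the program \eqref{eq:sdp1} has exactly the objective $\tr(\mb{G}^{-1}\hat{\mb{R}}) + \tr(\mb{S})$ of \eqref{eq:smr1}, which establishes the equivalence. Convexity of \eqref{eq:smr1} then follows at once, since \eqref{eq:sdp1} minimizes a linear objective over the intersection of the positive semidefinite cone defined by the LMI with the convex set $\pdMat{P}{K}$.

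I expect no substantial obstacle here: the argument is a textbook combination of the Schur complement characterization of block positive semidefiniteness with the trace-monotone minimization over the slack variable $\mb{Z}_N$. The only points requiring care are verifying the strict positive definiteness of $\mb{G}$ — so that $\mb{G}^{-1}$ exists and \eqref{eq:schur} holds as an equivalence rather than a one-sided implication, which is guaranteed by $\lambda > 0$ — and confirming that the minimizing $\mb{Z}_N$ is feasible and actually attained rather than merely an infimum.
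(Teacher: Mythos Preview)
Your proposal is correct and follows essentially the same approach as the paper: both arguments observe that $\mb{B}\mb{S}\mb{B}^\tH + \lambda\mb{I} \succ \mb{0}$ for $\lambda>0$, apply the Schur complement to reduce the LMI \eqref{eq:sdp1Con1} to $\mb{Z}_N \succeq \mb{Y}^\tH(\mb{B}\mb{S}\mb{B}^\tH+\lambda\mb{I})^{-1}\mb{Y}$, and then use trace monotonicity to eliminate $\mb{Z}_N$. Your write-up is in fact slightly more explicit than the paper's, which simply states that equality in the trace inequality ``can be proved by contradiction.''
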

To see the equivalence between the two problems we note that $\mb{B} \mb{S} \mb{B}^\tH + \lambda \mb{I} \succ \mb{0}$ is positive definite for any $\lambda > 0$ and consider the Schur complement of the constraint \eqref{eq:sdp1Con1}
\begin{align}
  \mb{Z}_N \succeq \mb{Y}^\tH (\mb{B} \mb{S} \mb{B}^\tH + \lambda \mb{I})^{-1} \mb{Y}
  \label{eq:sdpEquivalence1}
\end{align}
which implies
\begin{align}
  \frac{1}{N} \tr(\mb{Z}_N) &\geq 
  \frac{1}{N} \tr \big( \mb{Y}^\tH (\mb{B} \mb{S} \mb{B}^\tH + \lambda \mb{I})^{-1} \mb{Y} \big) \nonumber \\
						  &= \tr \big( (\mb{B} \mb{S} \mb{B}^\tH + \lambda \mb{I}_M)^{-1} \hat{\mb{R}} \big) .
  \label{eq:sdpEquivalence2}
\end{align}
Since in problem \eqref{eq:sdp1} $\tr(\mb{Z}_N)$ is minimized, it can be proved by contradiction that the relation in \eqref{eq:sdpEquivalence2} must hold with equality, proving the equivalence of \eqref{eq:smr1} and \eqref{eq:sdp1}. 
\begin{corollary} \label{col:sdp2}
The \mname{} formulation in \eqref{eq:smr1} admits the equivalent problem formulation
\begin{subequations}
\label{eq:sdp1b}
\begin{align}
  \min_{\mb{S}, \mb{Z}_M} \; & \;\; \tr( \mb{Z}_M \hat{\mb{R}} ) + \tr( \mb{S} ) \\
  {\rm s.t. } & \; \mtx{ \mb{Z}_M & \mb{I}_M \\ \mb{I}_M & \mb{B} \mb{S} \mb{B}^\tH + \lambda \mb{I}_M } \succeq \mb{0} \label{eq:sdp1bCon1} \\
  & \; \; \mb{S} \in \pdMat{P}{K}
\end{align}	
where $\mb{Z}_M$ is a Hermitian matrix of size $M \times M$.
\end{subequations}
\end{corollary}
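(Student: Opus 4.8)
The plan is to mirror the argument already carried out for Corollary \ref{col:sdp1}, replacing the $N\times N$ auxiliary block by an $M\times M$ block and moving $\hat{\mb{R}}$ from the constraint into the objective. First I would note that the bottom-right block $\mb{B}\mb{S}\mb{B}^\tH + \lambda \mb{I}_M$ is positive definite for every $\lambda > 0$ and every $\mb{S}\in\pdMat{P}{K}$, so the block matrix in \eqref{eq:sdp1bCon1} is positive semidefinite if and only if its Schur complement with respect to that block is. Since the off-diagonal blocks are equal to $\mb{I}_M$, the Schur complement condition reads
\begin{align}
  \mb{Z}_M \succeq (\mb{B}\mb{S}\mb{B}^\tH + \lambda \mb{I}_M)^{-1}.
  \label{eq:sdp2Schur}
\end{align}

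Next I would exploit the positive semidefiniteness of the sample covariance matrix $\hat{\mb{R}} = \mb{Y}\mb{Y}^\tH/N$. Because $\hat{\mb{R}} \succeq \mb{0}$ and, by \eqref{eq:sdp2Schur}, $\mb{Z}_M - (\mb{B}\mb{S}\mb{B}^\tH + \lambda \mb{I}_M)^{-1} \succeq \mb{0}$, the trace of the product of these two positive semidefinite matrices is nonnegative, which gives
\begin{align}
  \tr(\mb{Z}_M \hat{\mb{R}}) \geq \tr\big( (\mb{B}\mb{S}\mb{B}^\tH + \lambda \mb{I}_M)^{-1} \hat{\mb{R}} \big).
  \label{eq:sdp2Ineq}
\end{align}
Adding $\tr(\mb{S})$ to both sides shows that, for every feasible point of \eqref{eq:sdp1b}, its objective is bounded below by the objective of \eqref{eq:smr1} evaluated at the same $\mb{S}$, so the optimal value of \eqref{eq:sdp1b} is at least that of \eqref{eq:smr1}.

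To close the equivalence I would show that \eqref{eq:sdp2Ineq} is attained with equality at the optimum. For any fixed $\mb{S}$ the choice $\mb{Z}_M = (\mb{B}\mb{S}\mb{B}^\tH + \lambda \mb{I}_M)^{-1}$ is feasible and makes \eqref{eq:sdp2Ineq} tight, so minimizing over $\mb{Z}_M$ recovers precisely the objective of \eqref{eq:smr1}; equivalently, and in direct parallel to the proof of Corollary \ref{col:sdp1}, one argues by contradiction that any minimizer with strict inequality in \eqref{eq:sdp2Ineq} could be improved by lowering $\mb{Z}_M$ to $(\mb{B}\mb{S}\mb{B}^\tH + \lambda \mb{I}_M)^{-1}$ while retaining feasibility.

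The one point that differs from Corollary \ref{col:sdp1}, and which I expect to be the only real subtlety, is that $\hat{\mb{R}}$ now appears directly in the objective rather than through $\mb{Y}$ in the constraint. Consequently the tightening step in \eqref{eq:sdp2Ineq} rests on monotonicity of the trace inner product against the \emph{positive semidefinite} matrix $\hat{\mb{R}}$ rather than on the cyclic trace identity used before. Since $\hat{\mb{R}}$ need not be positive definite, the minimizing $\mb{Z}_M$ is in general not unique; I would emphasize that this non-uniqueness does not affect the equality of optimal objective values, so the stated equivalence between \eqref{eq:smr1} and \eqref{eq:sdp1b} holds.
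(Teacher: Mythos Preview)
Your proposal is correct and follows precisely the line of argument the paper intends: it explicitly omits the proof of Corollary~\ref{col:sdp2}, noting only that it ``follows similar arguments as in the proof of Corollary~\ref{col:sdp1}.'' You have carried out those arguments faithfully, and you correctly isolate the one point that genuinely differs---namely that the passage from the Schur-complement inequality $\mb{Z}_M \succeq (\mb{B}\mb{S}\mb{B}^\tH + \lambda \mb{I}_M)^{-1}$ to the trace inequality $\tr(\mb{Z}_M \hat{\mb{R}}) \geq \tr\big((\mb{B}\mb{S}\mb{B}^\tH + \lambda \mb{I}_M)^{-1}\hat{\mb{R}}\big)$ now requires $\hat{\mb{R}}\succeq \mb{0}$, whereas in Corollary~\ref{col:sdp1} the analogous step follows directly from $\tr(\cdot)$ being monotone on the semidefinite cone together with the cyclic trace identity; your remark about non-uniqueness of the optimal $\mb{Z}_M$ when $\hat{\mb{R}}$ is rank-deficient is a valid refinement that the paper does not mention.
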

The proof to Corollary \ref{col:sdp2} follows similar arguments as in the proof of Corollary \ref{col:sdp1} and is therefore omitted here. In contrast to \eqref{eq:sdp1}, the size of the semidefinite constraint in \eqref{eq:sdp1b} is independent of the number of snapshots $N$. It follows that either problem formulation \eqref{eq:sdp1} or \eqref{eq:sdp1b} can be selected to solve \eqref{eq:smr1}, depending on the number of snapshots $N$ and the resulting size of the semidefinite constraint. The problems \eqref{eq:sdp1} and \eqref{eq:sdp1b} have $KP^2$ real-valued optimization variables in $\mb{S}$ and additional $N^2$ or $M^2$ real-valued parameters in $\mb{Z}_N$ and $\mb{Z}_M$, respectively. Thus, in the undersampled case $N < M$ it is preferable to use the SDP formulation in \eqref{eq:sdp1}, while in the oversampled case $N \geq M$ it is preferable to apply the SDP formulation in \eqref{eq:sdp1b}. We remark that the subspace matching approach discussed in \cite{
Malioutov:LassoDoa, steffens2016noncircular} can be applied to formulation \eqref{eq:sdp1} as well. A further investigation of the subspace matching approach is, however, beyond the scope this paper. 
The various equivalent SDP implementations and the corresponding number of variables and constraints are listed in Table \ref{tab:sdp}.


\section{Gridless \mname{} Implementation} \label{sec:Gridless}
\noindent 
While the SDP formulations in Section \ref{sec:Implementation} are applicable to arbitrary array topologies, we consider in the following the special case of linear subarrays with a common baseline, where the sensors within each subarray are located at integer multiples of a baseline $\delta$, i.e., $\rho_m^{(p)} = \delta d_m^{(p)} $ with $d_m^{(p)} \in \mathbb{Z}$ for $m=1,\ldots,M_P$ and $p=1,\ldots,P$. This type of array topologies admits the extension of the \mname{} formulation to gridless frequency estimation. 

We start noting that strong duality holds for problem \eqref{eq:sdp1b} and consider the Lagrange dual problem, which is given as
\begin{subequations}
\label{eq:smr_dual}
\begin{align}
  \max_{\mb{\varUpsilon}_{1}, \mb{\varUpsilon}_{0}} & \; 
  - 2 \,\real \{ \tr(\mb{\varUpsilon}_{1}) \}  - \lambda \tr(\mb{\varUpsilon}_{0})  \label{eq:sdp_dual} \\
  \text{s.t.} \; & \mtx{ \hat{\mb{R}} & \mb{\varUpsilon}_{1} \\ \mb{\varUpsilon}_1^\tH & \mb{\varUpsilon}_0 } \succeq \mb{0} \\
			   & \mb{I}_P - \mb{B}^\tH(\nu_k) \, \mb{\varUpsilon}_0 \mb{B}(\nu_k) \succeq \mb{0}, \; k=1,\ldots, K , \label{eq:smr_dual_c}
\end{align}
\end{subequations}
where $\mb{\varUpsilon}_0$ is an $M \times M$ positive semidefinite matrix and $\mb{\varUpsilon}_1$ is of size $M \times M$ and does not exhibit specific structure. Complementary slackness requires that
\begin{align}
  \tr \big( \mb{S}_k (\mb{I}_P - \mb{B}^\tH(\nu_k) \, \mb{\varUpsilon}_0 \mb{B}(\nu_k) ) \big) = 0 ,
  \label{eq:compSlack1}
\end{align}
for $k=1,\ldots,K$, i.e., if $\mb{S}_k \neq \mb{0}$ then $\mb{I}_P - \mb{B}^\tH(\nu_k) \, \mb{\varUpsilon}_0 \mb{B}(\nu_k) $ must be singular, such that
\begin{align}
  \det \big(\mb{I}_P - \mb{B}^\tH(\nu_k) \, \mb{\varUpsilon}_0 \mb{B}(\nu_k) \big)
  \begin{cases}
	= 0 \quad \text{if } \mb{S}_k \neq \mb{0} \\
	\geq 0 \quad \text{if } \mb{S}_k = \mb{0} .
  \end{cases}
  \label{eq:compSlack2}
\end{align}
Condition \eqref{eq:compSlack2} indicates that instead of solving the primal problem \eqref{eq:smr1} and identifying the block-support from $\mb{S}$, we can equivalently solve the dual problem \eqref{eq:smr_dual} and identify the block-support from the roots of \eqref{eq:compSlack2}.

Let us consider the limiting case of an infinitesimal frequency grid spacing, i.e., $\lim_{K \rightarrow \infty} \nu_k-\nu_{k-1} = 0$ for $k=2,\ldots,K$, such that the frequency becomes a continuous parameter $\nu$. By introducing the variable
\begin{align}
  z = \te^{\tj \pi \nu \delta} ,
\end{align}
the subarray steering matrices in \eqref{eq:subarraySteeringMatrix} can be equivalently described as
\begin{align}
  \mb{B}(z)  = 
  \blkdiag \big( \mb{b}^{(1)}(z), \ldots, \mb{b}^{(P)}(z) \big) ,
  \label{eq:subarraySteeringMatrixZ}
\end{align}
where the subarray steering vectors are given as
\begin{equation}
  \mb{b}^{(p)} \left( z \right) =  \big[ 1, z^{d_2^{(p)} }, \ldots, z^{d_{M_p}^{(p)} } \big]^\tT ,
  \label{eq:bkZ}
\end{equation}
with $d_{m}^{(p)} \in \mathbb{Z}$ for $m=1,\ldots,M_p$ and $p=1,\ldots,P$. By the definition in \eqref{eq:subarraySteeringMatrixZ}, the matrix product $\mb{B}^\tH(z) \mb{\varUpsilon}_0 \mb{B}(z)$ in constraint \eqref{eq:smr_dual_c} constitutes a trigonometric matrix polynomial of degree $D = \max_{p,m_p} d^{(p)}_{m_p} $, according to
\begin{align}
  \mb{M}(z) 
  = \mb{B}^\tH(z) \mb{\varUpsilon}_0 \mb{B}(z) \label{eq:matPoly1} 
  = \sum_{i=-D}^{D} \mb{K}_i \, z^{i} ,
\end{align}
with matrix coefficients $\mb{K}_i$ of size $P \times P$ \cite{Dumitrescu:1086500}. In the continuous case the constraint \eqref{eq:smr_dual_c} is replaced by the constraint
\begin{align}
  \mb{I}_q - \mb{B}^\tH(z) \mb{\varUpsilon}_0 \mb{B}(z) \succeq \mb{0} \label{eq:matIneq}
\end{align}
which provides an upper bound on the matrix polynomial \eqref{eq:matPoly1} and can be implemented by semidefinite programming, e.g., by the problem formulation \eqref{eq:smr_dual_gridless} derived in Appendix \ref{sec:MatPoly} or by other techniques discussed in \cite{Dumitrescu:1086500}. Once the continuous implementation of problem \eqref{eq:smr_dual} is solved, the spatial frequencies can be recovered by finding the roots for which the left-hand side of \eqref{eq:matIneq} becomes singular, e.g., by rooting the continuous counterpart of \eqref{eq:compSlack2} using the techniques discussed in \cite{pesavento2005fast}. 


\subsection{Related Work} \label{sec:RelatedWork}
\noindent 
In a recent work \cite{steffens2017shiftinvariance} it has been shown that PCAs composed of identical subarrays admit gridless compressed sensing by means of the SPARROW formulation \cite{steffens2016compact}. Interestingly, for the special case considered in this section of PCAs composed of linear subarrays with a common baseline (and possibly missing sensors in particular subarrays) the dual problem formulation \eqref{eq:smr_dual} can equivalently be derived by means of the SPARROW formulation.

Let us consider the $\ell_{2,1}$ mixed-norm minimization problem for FCAs given in \eqref{eq:mixedVectorNorm} which can equivalently be formulated as the SPARROW problem \cite{steffens2016compact}
\begin{align}
  \min_{\mb{S} \in \pdMat{1}{K}} &\;
  \tr \big( (\mb{A}(\mb{\nu},\mb{\eta})\mb{S} \mb{A}^\tH(\mb{\nu},\mb{\eta} ) + \tilde{\lambda} \mb{I})^{-1} \hat{\mb{R}} \big) + 
  \tr( \mb{S} ) \label{eq:smr} 
\end{align}
where $\tilde{\lambda} > 0$ is a regularization parameter and $\mb{S}=\diag(s_1, \ldots, s_K) \succeq \mb{0}$ is of size $K \times K$. Problem \eqref{eq:smr} can be formulated as the semidefinite program
\begin{subequations}
\label{eq:sdp0}
\begin{align}
  \min_{\mb{S}, \mb{Z}_M} & \; \tr( \mb{Z}_M \hat{\mb{R}} ) + \tr( \mb{S} ) \\
  \text{s.t.} \; & \mtx{ \mb{Z}_M & \mb{I} \\ \mb{I} & 
  \mb{A}(\mb{\nu},\mb{\eta})\mb{S} \mb{A}^\tH(\mb{\nu},\mb{\eta}) + \tilde{\lambda} \mb{I} } \succeq \mb{0} \\
  & \mb{S} \in \pdMat{1}{K} \nonumber .
\end{align}
\end{subequations}
The Lagrange dual problem of \eqref{eq:sdp0} is given as 
\begin{subequations}
\label{eq:sdp_dual2}
\begin{align}
  \max_{\tilde{\mb{\varUpsilon}}_{1}, \tilde{\mb{\varUpsilon}}_{0}} & \; 
  - 2 \,\real \{ \tr(\tilde{\mb{\varUpsilon}}_{1}) \}  - \tilde{\lambda} \tr(\tilde{\mb{\varUpsilon}}_{0})   \\
  \text{s.t.} \; & \mtx{ \hat{\mb{R}} & \tilde{\mb{\varUpsilon}}_{1} \\ \tilde{\mb{\varUpsilon}}_1^\tH & \tilde{\mb{\varUpsilon}}_0 } \succeq \mb{0} \\
			   & \; \mb{a}^\tH(\nu_k,\mb{\eta}) \tilde{\mb{\varUpsilon}}_0 \, \mb{a}(\nu_k,\mb{\eta}) \leq 1, \; k=1,\ldots, K 
			   \label{eq:sdp_dual2c},
\end{align}
\end{subequations}
and with strong duality it follows from complementary slackness that
\begin{align}
  1 - \mb{a}^\tH(\nu_k,\mb{\eta}) \tilde{\mb{\varUpsilon}}_0 \, \mb{a}(\nu_k,\mb{\eta}) 
  \begin{cases}
	= 0 \quad \text{if } s_n \geq 0 \\
	\geq 0 \quad \text{if } s_n = 0 .
  \end{cases}
  \label{eq:compSlack_v4}
\end{align}
As previously discussed in the context of condition \eqref{eq:compSlack2}, the support of the vector $\mb{s} = [s_1, \ldots, s_K]^\tT$, i.e., the spatial frequency estimates, can equivalently be identified by rooting the function in \eqref{eq:compSlack_v4}.

Making use of the notation $\mb{a}(\nu_k,\mb{\eta}) = \mb{B}(\nu_k) \mb{\varphi}(\nu_k,\mb{\eta})$, as introduced in \eqref{eq:a_mu}, condition \eqref{eq:compSlack_v4} can be rewritten as 
\begin{align}
  & 1 - \mb{a}^\tH(\nu_k,\mb{\eta}) \, \tilde{\mb{\varUpsilon}}_0 \, \mb{a}(\nu_k,\mb{\eta}) \notag \\ = \, &  
  1 - \mb{\varphi}^\tH (\nu_k,\mb{\eta}) \, \mb{B}^\tH(\nu_k)\,  \tilde{\mb{\varUpsilon}}_0 \, \mb{B}(\nu_k) \, \mb{\varphi}(\nu_k,\mb{\eta}) 
  \notag \\ =\,&
  1 - \tilde{\mb{\varphi}}^\tH (\nu_k,\mb{\eta}) \, \mb{B}^\tH(\nu_k) \, \mb{\varUpsilon}_0 \, \mb{B}(\nu_k) \, \tilde{\mb{\varphi}}(\nu_k,\mb{\eta}) 
  \notag  \\ =\,&
  \tilde{\mb{\varphi}}^\tH (\nu_k,\mb{\eta}) \, \big( \mb{I}_p - \mb{B}^\tH(\nu_k) \, \mb{\varUpsilon}_0 \, \mb{B}(\nu_k) \big) \, \tilde{\mb{\varphi}}(\nu_k,\mb{\eta}) \geq 0 ,
  \label{eq:compSlack_v6}
\end{align}
where 
\begin{align}
  \tilde{\mb{\varphi}}(\nu_k,\mb{\eta}) &= \mb{\varphi}(\nu_k,\mb{\eta}) / \|\mb{\varphi}(\nu_k,\mb{\eta})\|_2 \\
  \mb{\varUpsilon}_0 &= \|\mb{\varphi}(\nu_k,\mb{\eta})\|_2^2 \, \tilde{\mb{\varUpsilon}}_0 \label{eq:upsion_tilde}.
\end{align}
Condition \eqref{eq:compSlack_v6} is fulfilled if
\begin{align}
  \mb{I}_q - \mb{B}^\tH(\nu_k) \mb{\varUpsilon}_0 \mb{B}(\nu_k) \succeq \mb{0}, \label{eq:matIneq2}
\end{align}
which is identical to the constraint \eqref{eq:smr_dual_c} in problem \eqref{eq:smr_dual}. Replacing the constraint \eqref{eq:sdp_dual2c} in problem \eqref{eq:sdp_dual2} by the condition \eqref{eq:matIneq2} and further using \eqref{eq:upsion_tilde} and the substitutions $\lambda = \tilde{\lambda} / \|\mb{\varphi}(\nu_k,\mb{\eta})\|_2^2$ and $\mb{\varUpsilon}_1 = \tilde{\mb{\varUpsilon}}_1$ shows that for the PCA case the dual problem \eqref{eq:sdp_dual2} can be reformulated as the dual problem  \eqref{eq:smr_dual}. As demonstrated in the previous section, condition \eqref{eq:matIneq2} can be extended to an infinitesimal grid spacing, resulting in a matrix polynomial constraint, such that the resulting gridless estimation problem can be implemented by semidefinite programming (see Appendix \ref{sec:MatPoly}).

\section{Numerical Results}\label{sec:sims}
\noindent
For experimental performance evaluation of our proposed \mname{} method we compare its estimation performance to the state-of-the-art methods spectral RARE \cite{Gershman:Rare} and root-RARE \cite{Pesavento:Rare} as well as the Cram\'er-Rao bound (CRB) \cite{Gershman:Rare}. 

For all simulations we use circular complex Gaussian source signals $\mb{\varPsi}$ with covariance matrix $\tE(\mb{\varPsi} \mb{\varPsi}^\tH) = N \mb{I}$, if not specified otherwise. We further consider spatio-temporal white circular complex Gaussian sensor noise $\mb{N}$ with covariance matrix $\tE(\mb{N} \mb{N}^\tH) = \sigma^2 N \mb{I}$ and define the signal-to-noise ratio (SNR) as $\tsnr~=~1/\sigma^2$. The vector $\mb{r}^{(p)} = [r_1^{(p)}, \ldots, r_{M_p}^{(p)}]^\tT$ contains the global sensor positions $\smash{r_{m}^{(p)}}$ of subarray $p$, for $m=1,\ldots,M_p$, $p=1,\ldots,P$, expressed in half-wavelength, as defined in \eqref{eq:SenPos}. If not stated otherwise, we perform $T=1000$ Monte Carlo trials for each experimental setup and compute the statistical error.

The estimation performance of the \mname{} method strongly depends on proper selection of a regularization parameter $\lambda$. While regularization parameter selection is a research field of its own, in this paper we follow a heuristic approach and select the regularization parameter as
\begin{align}
  \lambda = \max_p \sigma \sqrt{M_p \log (M)} ,
  \label{eq:regParSel}
\end{align}
which has shown good estimation performance in all investigated scenarios. 

We remark that the RARE and \mname{} method make different assumptions on the availability of a-priori knowledge. While the RARE method requires knowledge of the number of source signals, the regularization parameter selection for the \mname{} method according to \eqref{eq:regParSel} requires knowledge of the noise power. However, since estimation of these parameters itself might affect the frequency estimation performance of the RARE and \mname{} methods, we apply the standard assumption of perfectly known number of source signals and noise power and investigate the achievable performance under these idealized assumptions. 

\def\pw{10cm}
\def\ph{6cm}

\def\cGBC{blue}
\def\mGBC{o}
\def\cGLC{red}
\def\mGLC{x}
\def\cSR{brown}
\def\mSR{triangle}
\def\cRR{green!80!black}
\def\mRR{square}

\subsection{Arbitrary Array Topologies and Grid-Based Estimation}
\noindent
In the first scenario we consider a PCA with a large aperture, composed of $M=11$ sensors which are partitioned in $P=4$ linear subarrays with 3,2,3, and 3 sensors, respectively. The sensor positions for each subarray are $\mb{r}^{(1)}=[0.0, 0.6, 2.3]^\tT$, $\mb{r}^{(2)}=[12.2, 13.0]^\tT$, $\mb{r}^{(3)}=[21.5, 22.8, 23.6]^\tT$, and $\mb{r}^{(4)}=[37.6, 38.5, 41.1]^\tT$, and we assume no additional gain/phase offsets among the subarrays, i.e., $\mb{\alpha}=[1, 1, 1, 1]^\tT$ in \eqref{eq:Phik}. We further consider $L=3$ uncorrelated Gaussian source signals with spatial frequencies $\mb{\mu} = [0.5011, 0.4672, -0.2007]^\tT$. 

The array topology does not admit a direct implementation of the gridless \mname{} and the root-RARE methods such that we limit the experiments in this subsection to the investigation of the grid-based \mname{} method and the spectral RARE method. For both grid-based methods we use a gird of $K=400$ grid points according to $\mb{\nu}=[-1.000, -0.995, -0.999, \ldots, 0.995]^\tT$.

To investigate the frequency estimation performance, we compute the root-mean-square error of the frequency estimates in $\hat{\mb{\mu}}$ as
\begin{align}
  \trmse (\hat{\mb{\mu}})= \sqrt{\frac{1}{L T} \sum_{t=1}^T \sum_{l=1}^L \big| \mu_l - \hat{\mu}_l(t) \big|_{\rm wa}^2}, 
  \label{eq:rmse}
\end{align}
where $\hat{\mu}_l(t)$ denotes the frequency estimate of signal $l$ in trial $t$ and $|\mu_1-\mu_2|_{\rm wa}=\min_{i \in \mathbb{Z}}|\mu_1-\mu_2+2i|$ denotes the wrap-around distance for two frequencies $\mu_1,\mu_2 \in [-1, 1)$. Since the $\trmse$ computation \eqref{eq:rmse} requires the number of estimated source signals $\hat{L}$ to be equal to the true number of source signals $L$, we have to consider two special cases: in the case of overestimation of the model order, $\hat{L} > L$, we select the $L$ frequency estimates with the largest corresponding magnitudes, whereas we select $L-\hat{L}$ additional random spatial frequencies in the case of underestimation $\hat{L} < L$.

\begin{figure}[t]
\begin{center}
  \small
%
\begin{tikzpicture}

\begin{axis}[%
separate axis lines,
every outer x axis line/.append style={black},
every x tick label/.append style={font=\color{black}},
xmode=log,
xmin=1,
xmax=1000,
xminorticks=true,
xlabel={Snapshots},
xmajorgrids,
xminorgrids,
every outer y axis line/.append style={black},
every y tick label/.append style={font=\color{black}},
ymode=log,
ymin=0.001,
ymax=1,
yminorticks=true,
ylabel={RMSE($\hat{\mb{\mu}}$)},
ymajorgrids,
yminorgrids,
axis background/.style={fill=white},
legend style={at={(0.03,0.03)},anchor=south west,legend cell align=left,align=left,draw=black},
height=\ph,
width=\pw,
xlabel near ticks,
ylabel near ticks
]
\addplot [color=\cGBC,solid,mark=\mGBC,mark options={solid}]
  table[row sep=crcr]{%
1	0.346375101106686\\
2	0.178016536685032\\
3	0.13091578336218\\
5	0.0781337848394233\\
7	0.0397086640419947\\
9	0.0438836036198792\\
11	0.0276839062754277\\
14	0.0214174539414313\\
17	0.0184160256298692\\
20	0.0159862545123407\\
30	0.00935107836918644\\
50	0.00725504422224059\\
70	0.00614176956042257\\
100	0.00527910977343717\\
200	0.00418732213552607\\
300	0.00381518894245968\\
500	0.00349804707420202\\
700	0.00334464746522957\\
1000	0.00310386855391782\\
};
\addlegendentry{\mname{} \eqref{eq:sdp1b}};

\addplot [color=\cSR,solid,mark=\mSR,mark options={solid}]
  table[row sep=crcr]{%
1	0.392940847965697\\
2	0.372656139535274\\
3	0.322081391369741\\
5	0.293247771801711\\
7	0.289119396904927\\
9	0.288667804462734\\
11	0.287588214872125\\
14	0.28755095954167\\
17	0.285944970813154\\
20	0.284915963633727\\
30	0.279579983785204\\
50	0.260979880450584\\
70	0.235884127627671\\
100	0.186048710467626\\
200	0.0730549040106134\\
300	0.0206207985619697\\
500	0.00240665743303858\\
700	0.00206768791971448\\
1000	0.00189446562386333\\
};
\addlegendentry{Spect. RARE \cite{Gershman:Rare}};

\addplot [color=black,solid]
  table[row sep=crcr]{%
1	0.0387313554225835\\
2	0.0273872040638552\\
3	0.0223615584793077\\
5	0.0173211887171204\\
7	0.0146390763412299\\
9	0.0129104518075278\\
11	0.0116779430507824\\
14	0.0103513901511912\\
17	0.00939373349592066\\
20	0.00866059435856018\\
30	0.00707134568256634\\
50	0.00547744081277103\\
70	0.00462928240793706\\
100	0.00387313554225835\\
200	0.00273872040638552\\
300	0.00223615584793077\\
500	0.00173211887171204\\
700	0.00146390763412299\\
1000	0.00122479300000877\\
};
\addlegendentry{CRB \cite{Gershman:Rare}};

\end{axis}
\end{tikzpicture}%
  \caption{Frequency estimation performance for a PCA of $M=11$ sensors in $P=4$ subarrays, with $\tsnr=6\tdB$ and varying number of snapshots $N$}
  \label{fig:grid_rmse_fs_NSnp}
  \vspace{.4cm}

%
\begin{tikzpicture}

\begin{axis}[%
separate axis lines,
every outer x axis line/.append style={black},
every x tick label/.append style={font=\color{black}},
xmin=-10,
xmax=30,
xlabel={SNR in dB},
xmajorgrids,
every outer y axis line/.append style={black},
every y tick label/.append style={font=\color{black}},
ymode=log,
ymin=0.0001,
ymax=1,
yminorticks=true,
ylabel={RMSE($\hat{\mb{\mu}}$)},
ymajorgrids,
yminorgrids,
axis background/.style={fill=white},
legend style={at={(0.03,0.03)},anchor=south west,legend cell align=left,align=left,draw=black},
height=\ph,
width=\pw,
xlabel near ticks,
ylabel near ticks
]
\addplot [color=\cGBC,solid,mark=\mGBC,mark options={solid}]
  table[row sep=crcr]{%
-10	0.486354282733622\\
-8	0.299836646081384\\
-6	0.145819240842901\\
-4	0.0633247160804268\\
-2	0.0416445674728409\\
0	0.0254125166010767\\
2	0.0321728819556677\\
4	0.0190270333998761\\
6	0.0176034750357233\\
8	0.0098123561560582\\
10	0.00800431133827263\\
12	0.00670499813571934\\
14	0.00552654201709048\\
16	0.00490353613901911\\
18	0.0041846545058503\\
20	0.00381536367860259\\
22	0.00323202722760807\\
24	0.00301203142967222\\
26	0.00284224089994731\\
28	0.00265260877879368\\
30	0.00276688754138412\\
32	0.00259383885389975\\
34	0.00257429602027427\\
36	0.00243084347501026\\
38	0.00218822606997844\\
40	0.00211770315829829\\
};
\addlegendentry{\mname{} \eqref{eq:sdp1b}};

\addplot [color=\cSR,solid,mark=\mSR,mark options={solid}]
  table[row sep=crcr]{%
-10	0.367748447538079\\
-8	0.340157001985849\\
-6	0.32339902030361\\
-4	0.302272254763817\\
-2	0.290837618497565\\
0	0.28872484536897\\
2	0.287771302715661\\
4	0.287694469880813\\
6	0.284185757911972\\
8	0.279135498160544\\
10	0.267273322649309\\
12	0.231789579144535\\
14	0.17041263744218\\
16	0.102049177687361\\
18	0.0330365857800105\\
20	0.0131434774698329\\
22	0.00209236708060509\\
24	0.00188812075884989\\
26	0.00175679632665066\\
28	0.00166823259769132\\
30	0.00163869867069369\\
32	0.0016152399202595\\
34	0.00160934769394308\\
36	0.00159906222518072\\
38	0.00158965405041472\\
40	0.00156364957711116\\
};
\addlegendentry{Spect. RARE \cite{Gershman:Rare}};

\addplot [color=black,solid]
  table[row sep=crcr]{%
-10	0.0758620205141153\\
-8	0.0545333458913922\\
-6	0.0401685782089806\\
-4	0.0302147741760778\\
-2	0.0231085633517866\\
0	0.0178929054444366\\
2	0.0139753282549441\\
4	0.0109800980595043\\
6	0.00866059435856018\\
8	0.00684847859616731\\
10	0.00542440100612428\\
12	0.00430094643162824\\
14	0.00341244315185495\\
16	0.0027086331952481\\
18	0.00215055745559109\\
20	0.00170775394307175\\
22	0.0013562692395266\\
24	0.00107719866785952\\
26	0.000855587020624918\\
28	0.000679585703861428\\
30	0.000539798462817654\\
32	0.000428769316450378\\
34	0.000340579643048299\\
36	0.000270530056361562\\
38	0.00021488867462172\\
40	0.000170691646656129\\
};
\addlegendentry{CRB \cite{Gershman:Rare}};

\end{axis}
\end{tikzpicture}%
  \caption{Frequency estimation performance for a PCA of $M=11$ sensors in $P=4$ subarrays, with $N=20$ snapshots and varying $\tsnr$}
  \label{fig:grid_rmse_fs_SNR}
  \vspace{-.3cm}
\end{center}  
\end{figure}

In the first experiment the signal-to-noise ratio ($\tsnr$) is fixed to $\tsnr=6\tdB$, while the number of snapshots $N$ is varied. Figure \ref{fig:grid_rmse_fs_NSnp} clearly demonstrates that our proposed grid-based \mname{} technique outperforms the spectral RARE for low number of signal snapshots $N$. While the spectral RARE method is not able to always resolve the two closely spaced signals with spatial frequencies $\mu_1 = 0.5011$ and $\mu_2 = 0.4672$ for $N \leq 500$ signal snapshots, our proposed \mname{} method resolves the signals for any $N \geq 30$ snapshots. 

In a second experiment we fix the number of snapshots as $N=20$ and vary the SNR. As can be observed from Figure~\ref{fig:grid_rmse_fs_SNR} the grid-based \mname{} method shows superior threshold performance as compared to the spectral RARE. While the spectral RARE can reliably resolve the two closely spaced sources only for $\tsnr \geq 22 \tdB$, our proposed \mname{} can do so for $\tsnr \geq 8 \tdB$. For high $\tsnr$, spectral RARE reaches a bias in the RMSE which is caused mainly by the finite grid. A similar bias effect can be observed for the grid-based \mname{} method. However, for the grid-based \mname{} method the bias is larger than for the spectral RARE method and it is not only caused by the finite grid, as discussed in the following subsection.

\subsection{Resolution Performance and Estimation Bias}

\def\pIdx{1,2,3,5,8,12,15,20,25,30,37,46,54}

\begin{figure}
\begin{center}
  \small
%
\begin{tikzpicture}

\begin{axis}[%
xmode=log,
xmin=0.01,
xmax=1,
xminorticks=true,
xlabel={Frequency Separation $\Delta \mu$},
xmajorgrids,
xminorgrids,
ymode=log,
ymin=0.01,
ymax=10,
yminorticks=true,
ylabel={RMSE($\hat{\mb{\mu}}$)},
ymajorgrids,
yminorgrids,
axis background/.style={fill=white},
legend style={legend cell align=left,align=left,draw=white!15!black},
height=\ph,
width=\pw,
xlabel near ticks,
ylabel near ticks,
every axis plot/.append style={mark indices={\pIdx}},
legend image post style={mark indices={2}}
]
\addplot [color=\cGLC,solid,mark=\mGLC,mark options={solid}]
  table[row sep=crcr]{%
0.01	0.356506065006368\\
0.02	0.354717958122106\\
0.03	0.347229752313816\\
0.04	0.333206426386545\\
0.05	0.300547733643535\\
0.06	0.236599111286444\\
0.07	0.172998451749765\\
0.08	0.112766011879543\\
0.09	0.0638767617879666\\
0.1	0.0442923559328727\\
0.102	0.0403118605576237\\
0.122	0.0291354184839065\\
0.142	0.0263398491921933\\
0.162	0.0247238060042231\\
0.182	0.0241667968571066\\
0.202	0.0233477128876691\\
0.222	0.0234118133408314\\
0.242	0.023843061562398\\
0.262	0.0240122935049674\\
0.282	0.0240763975041883\\
0.302	0.0243793680894757\\
0.322	0.0249759244626783\\
0.342	0.0246481512691059\\
0.362	0.024269688341952\\
0.382	0.0239705738302097\\
0.402	0.0234950524556427\\
0.422	0.0237119005253552\\
0.442	0.0230089966520502\\
0.462	0.0234667808770034\\
0.482	0.0230187470671375\\
0.502	0.0252550905065949\\
0.522	0.0271167554707791\\
0.542	0.0280762934712259\\
0.562	0.0313124933600167\\
0.582	0.0328744917319251\\
0.602	0.0359683973409281\\
0.622	0.0372409901216953\\
0.642	0.0418466858875756\\
0.662	0.0402385631929059\\
0.682	0.0371330607842282\\
0.702	0.032937498264032\\
0.722	0.0314331235990331\\
0.742	0.0293829491060598\\
0.762	0.0281098054913352\\
0.782	0.0259765742044169\\
0.802	0.0249218811247428\\
0.822	0.0235622370986336\\
0.842	0.0240232190152722\\
0.862	0.0228968039932231\\
0.882	0.022832349409098\\
0.902	0.0229526156049913\\
0.922	0.0228352912741003\\
0.942	0.0302628105250387\\
0.962	0.0223416436374013\\
0.982	0.0236384652382118\\
};
\addlegendentry{GL-\mname{} \eqref{eq:smr_dual_gridless}};

\addplot [color=\cGBC,solid,mark=\mGBC,mark options={solid}]
  table[row sep=crcr]{%
0.01	0.35594184918326\\
0.02	0.354656594468507\\
0.03	0.348321977486347\\
0.04	0.33450620323097\\
0.05	0.302241294332855\\
0.06	0.240765030683446\\
0.07	0.17980628465101\\
0.08	0.1177722378152\\
0.09	0.0678203509280216\\
0.1	0.045785368842022\\
0.102	0.0419404339510215\\
0.122	0.029313819266687\\
0.142	0.0264491965851516\\
0.162	0.0248008064385011\\
0.182	0.0243667806654879\\
0.202	0.0235338054721288\\
0.222	0.023004347415217\\
0.242	0.0239737356288084\\
0.262	0.0241337937340982\\
0.282	0.0243228287828533\\
0.302	0.0245132617168748\\
0.322	0.0251053779099219\\
0.342	0.0247701433181158\\
0.362	0.024347484469653\\
0.382	0.0241089195112513\\
0.402	0.023631335129442\\
0.422	0.023734784599823\\
0.442	0.023023466289853\\
0.462	0.0237857099957096\\
0.482	0.023307080469248\\
0.502	0.0253605993620024\\
0.522	0.0271900717174485\\
0.542	0.0282832105674019\\
0.562	0.0313384747554821\\
0.582	0.0329836323045233\\
0.602	0.0362494137883635\\
0.622	0.0396179252359333\\
0.642	0.0399444614433591\\
0.662	0.0403013647411598\\
0.682	0.0372577508714629\\
0.702	0.0329878765609427\\
0.722	0.0316025315441658\\
0.742	0.029548942451465\\
0.762	0.0282095728432742\\
0.782	0.0260879282427715\\
0.802	0.0251817394156956\\
0.822	0.0292017122785634\\
0.842	0.0241805707128677\\
0.862	0.0231032465251098\\
0.882	0.0230768282049331\\
0.902	0.0232193884501724\\
0.922	0.0230377950333793\\
0.942	0.0304932779477707\\
0.962	0.0224071417186574\\
0.982	0.0238121817564035\\
};
\addlegendentry{\mname{} \eqref{eq:sdp1b}};

\addplot [color=\cRR,solid,mark=\mRR,mark options={solid}]
  table[row sep=crcr]{%
0.01	0.492028770531687\\
0.02	0.47935187127136\\
0.03	0.464495635649711\\
0.04	0.41287364906711\\
0.05	0.340551375972727\\
0.06	0.24338761265302\\
0.07	0.176053740239554\\
0.08	0.0989343274221404\\
0.09	0.0689449622706548\\
0.1	0.0573343769416691\\
0.102	0.0429553021521156\\
0.122	0.0313087509204261\\
0.142	0.0277614055603211\\
0.162	0.0257199281363024\\
0.182	0.0245867551070222\\
0.202	0.0235975464045797\\
0.222	0.0229828136077067\\
0.242	0.0241672105361042\\
0.262	0.0240222601547791\\
0.282	0.0242086076136697\\
0.302	0.0246172922277164\\
0.322	0.0250288740088398\\
0.342	0.0246628038267239\\
0.362	0.0247321052463702\\
0.382	0.0236621424763734\\
0.402	0.0237347782741278\\
0.422	0.0238264483255402\\
0.442	0.0230244612322077\\
0.462	0.0237187990133229\\
0.482	0.0230237859838707\\
0.502	0.0248307885394042\\
0.522	0.0265457430998559\\
0.542	0.0270459061859376\\
0.562	0.0293741953277518\\
0.582	0.0300588355817792\\
0.602	0.0330017131123798\\
0.622	0.0339424499183105\\
0.642	0.0343899522118419\\
0.662	0.0333531860164383\\
0.682	0.0315465447942318\\
0.702	0.0304741340958495\\
0.722	0.029088772255821\\
0.742	0.0272024264199021\\
0.762	0.0263539750429463\\
0.782	0.0252449025911619\\
0.802	0.024575585134711\\
0.822	0.02317528394408\\
0.842	0.0240871153492705\\
0.862	0.0231172446240696\\
0.882	0.0228882225639955\\
0.902	0.0233259385095857\\
0.922	0.0232209694313086\\
0.942	0.0232477101244517\\
0.962	0.0227695812482126\\
0.982	0.0236622395105868\\
};
\addlegendentry{Root-RARE \cite{Pesavento:Rare}};

\addplot [color=\cSR,solid,mark=\mSR,mark options={solid}]
  table[row sep=crcr]{%
0.01	0.539119374535919\\
0.02	0.543295039550335\\
0.03	0.558707526349878\\
0.04	0.580754423142863\\
0.05	0.589498854960719\\
0.06	0.620244145478213\\
0.07	0.626041612035493\\
0.08	0.645571297379305\\
0.09	0.652141242370086\\
0.1	0.654145931730833\\
0.102	0.657631081990502\\
0.122	0.643630484051212\\
0.142	0.609516513311983\\
0.162	0.549284407206319\\
0.182	0.472017965759777\\
0.202	0.400505255895599\\
0.222	0.305066943473068\\
0.242	0.249821656387112\\
0.262	0.188521988107489\\
0.282	0.155257334770374\\
0.302	0.135138077535534\\
0.322	0.0899103998433997\\
0.342	0.07676887390082\\
0.362	0.0590586149515886\\
0.382	0.029745924090537\\
0.402	0.026767891213168\\
0.422	0.0307788888688334\\
0.442	0.030750609750052\\
0.462	0.0255150935722367\\
0.482	0.0245206035814781\\
0.502	0.0263784002547538\\
0.522	0.0304447039729407\\
0.542	0.0287899287946322\\
0.562	0.0311107698393979\\
0.582	0.0315455226617026\\
0.602	0.0341727961981456\\
0.622	0.0374093571182398\\
0.642	0.0414955419292242\\
0.662	0.0364433807432845\\
0.682	0.0376592618090158\\
0.702	0.0310766793592879\\
0.722	0.0295827652527615\\
0.742	0.0277812886670146\\
0.762	0.02660864521166\\
0.782	0.0255080379488505\\
0.802	0.0248632258566743\\
0.822	0.0234525051966735\\
0.842	0.0242165232847326\\
0.862	0.0231892216341989\\
0.882	0.0228923568030904\\
0.902	0.0233486616318794\\
0.922	0.0231473540604536\\
0.942	0.0232379000772445\\
0.962	0.0227398328929656\\
0.982	0.023631335129442\\
};
\addlegendentry{Spect. RARE \cite{Gershman:Rare}};

\addplot [color=black,solid]
  table[row sep=crcr]{%
0.01	1.12378591908282\\
0.02	0.305552262560784\\
0.03	0.15257891839586\\
0.04	0.0978179204970321\\
0.05	0.0715642735845381\\
0.06	0.0566632477821595\\
0.07	0.0472370524400105\\
0.08	0.0408183254895679\\
0.09	0.0362156098176964\\
0.1	0.0327911070834543\\
0.102	0.0322120715728824\\
0.122	0.0277894427205256\\
0.142	0.0250520640551642\\
0.162	0.0233444865102264\\
0.182	0.0223289007700433\\
0.202	0.0218098930308724\\
0.222	0.0216582607881478\\
0.242	0.021771812004125\\
0.262	0.0220529444941502\\
0.282	0.022398195204496\\
0.302	0.0227014580372359\\
0.322	0.0228719670084662\\
0.342	0.0228598210106035\\
0.362	0.0226726855489986\\
0.382	0.0223705683094504\\
0.402	0.0220422171746323\\
0.422	0.0217791408896905\\
0.442	0.0216598182981513\\
0.462	0.0217457267185823\\
0.482	0.0220841177743635\\
0.502	0.0227116877077661\\
0.522	0.0236544229191102\\
0.542	0.0249189696945221\\
0.562	0.0264698261214996\\
0.582	0.0281886735935699\\
0.602	0.0298289247062857\\
0.622	0.0310209464593188\\
0.642	0.0314074849172878\\
0.662	0.0308703861827731\\
0.682	0.0296193449455777\\
0.702	0.0280311235820424\\
0.722	0.026435615158542\\
0.742	0.0250283357037431\\
0.762	0.0238891372172874\\
0.782	0.0230281961725229\\
0.802	0.0224209295122026\\
0.822	0.0220272285473387\\
0.842	0.0218006564362637\\
0.862	0.021693308210346\\
0.882	0.0216598683469754\\
0.902	0.0216621022461978\\
0.922	0.0216730045062792\\
0.942	0.0216787440616036\\
0.962	0.0216770563019443\\
0.982	0.0216724945532767\\
};
\addlegendentry{CRB \cite{Gershman:Rare}};

\end{axis}
\end{tikzpicture}%
  \caption{Frequency estimation performance for uniform linear PCA of $M=9$ sensors in $P=3$ linear subarrays, for $N=20$ snapshots and $\tsnr=0\tdB$}
  \label{fig:sep1_rmse_fs} 
  \vspace{.4cm}
  
%
\begin{tikzpicture}

\begin{axis}[%
xmode=log,
xmin=0.01,
xmax=1,
xminorticks=true,
xlabel={Frequency Separation $\Delta \mu$},
xmajorgrids,
xminorgrids,
ymode=log,
ymin=0.001,
ymax=1,
yminorticks=true,
ylabel={RMSE($\hat{\mb{\mu}}$)},
ymajorgrids,
yminorgrids,
axis background/.style={fill=white},
legend style={legend cell align=left,align=left,draw=white!15!black},
height=\ph,
width=\pw,
xlabel near ticks,
ylabel near ticks,
every axis plot/.append style={mark indices={\pIdx}},
legend image post style={mark indices={2}}
]
\addplot [color=\cGLC,solid,mark=\mGLC,mark options={solid}]
  table[row sep=crcr]{%
0.01	0.0202564416361183\\
0.02	0.0124380416910272\\
0.03	0.0116858913457593\\
0.04	0.0114351875777565\\
0.05	0.0115186027509737\\
0.06	0.0123174528716432\\
0.07	0.0126733696952275\\
0.08	0.0129377675271596\\
0.09	0.0123855647686024\\
0.1	0.0113705837344042\\
0.102	0.0108211633089937\\
0.122	0.00771874699790496\\
0.142	0.00494906766128581\\
0.162	0.00287062150049031\\
0.182	0.00191073329822997\\
0.202	0.00148057039516016\\
0.222	0.00139765817326799\\
0.242	0.00143528652857972\\
0.262	0.00164447654994532\\
0.282	0.00191589674713256\\
0.302	0.00227349550968045\\
0.322	0.00234015525409711\\
0.342	0.00235524903313086\\
0.362	0.00205642618705356\\
0.382	0.00179079345918827\\
0.402	0.0015142939126232\\
0.422	0.00133322220592488\\
0.442	0.00130958686461354\\
0.462	0.00135567281978868\\
0.482	0.00150080733240625\\
0.502	0.00173831886353228\\
0.522	0.00217035936645098\\
0.542	0.00255281569416292\\
0.562	0.00304868417931985\\
0.582	0.0033806660176098\\
0.602	0.00367677352082896\\
0.622	0.00364038841816138\\
0.642	0.00325740062202138\\
0.662	0.00290714314572022\\
0.682	0.0027464737403563\\
0.702	0.0026680617086402\\
0.722	0.00237893314433731\\
0.742	0.00218654242464424\\
0.762	0.00201162968639872\\
0.782	0.00172894170041086\\
0.802	0.00154546360869986\\
0.822	0.00142526478734946\\
0.842	0.00133748611112502\\
0.862	0.00134758997620425\\
0.882	0.00134816087355342\\
0.902	0.00131702126593698\\
0.922	0.00132945370759797\\
0.942	0.00133238229492842\\
0.962	0.00132551638609404\\
0.982	0.00132570700179244\\
};
\addlegendentry{GL-\mname{} \eqref{eq:smr_dual_gridless}};

\addplot [color=\cGBC,solid,mark=\mGBC,mark options={solid}]
  table[row sep=crcr]{%
0.01	0.0950865901551279\\
0.02	0.0124756491593092\\
0.03	0.011753595106588\\
0.04	0.0117931622115632\\
0.05	0.0117887724246211\\
0.06	0.0123756757349441\\
0.07	0.012823454830459\\
0.08	0.0129639570423008\\
0.09	0.0124174291455137\\
0.1	0.0112126998390736\\
0.102	0.0107321038864814\\
0.122	0.00761264612026463\\
0.142	0.00522396928326184\\
0.162	0.00427593777928764\\
0.182	0.00415810715019029\\
0.202	0.00420760427647514\\
0.222	0.00424678663077988\\
0.242	0.00421497871271663\\
0.262	0.00419281649311386\\
0.282	0.00417301793369601\\
0.302	0.00416059599170738\\
0.322	0.00414064337225898\\
0.342	0.00417053650519944\\
0.362	0.00416308334530537\\
0.382	0.004195284743456\\
0.402	0.00421006385713557\\
0.422	0.00421006385713557\\
0.442	0.00423458109128086\\
0.462	0.00424922353155469\\
0.482	0.00421743399265498\\
0.502	0.00419528474345604\\
0.522	0.00418787562821925\\
0.542	0.00420760427647517\\
0.562	0.00422968901352556\\
0.582	0.00446309922121223\\
0.602	0.00449775651459663\\
0.622	0.00469151915341649\\
0.642	0.00466940174212927\\
0.662	0.00464272176900909\\
0.682	0.00474854232836842\\
0.702	0.00488607265645332\\
0.722	0.00486909381036775\\
0.742	0.00484565075742567\\
0.762	0.00475507792213216\\
0.782	0.0046560808655902\\
0.802	0.00451383885410504\\
0.822	0.00435984880167868\\
0.842	0.00432408632745554\\
0.862	0.00428802560172823\\
0.882	0.00431209959283768\\
0.902	0.00427835807595473\\
0.922	0.00429043907949108\\
0.942	0.00429526196666966\\
0.962	0.00432647968919714\\
0.982	0.00427351611189103\\
};
\addlegendentry{\mname{} \eqref{eq:sdp1b}};

\addplot [color=\cRR,solid,mark=\mRR,mark options={solid}]
  table[row sep=crcr]{%
0.01	0.0178412721961968\\
0.02	0.00820673789832808\\
0.03	0.00600927917821999\\
0.04	0.00427534141553072\\
0.05	0.00350690114702852\\
0.06	0.00298459052844674\\
0.07	0.00266768279321611\\
0.08	0.00231347158743287\\
0.09	0.00208444739833632\\
0.1	0.00189460492753316\\
0.102	0.00188134273540603\\
0.122	0.00162638873566038\\
0.142	0.00153335105205049\\
0.162	0.00144617498244105\\
0.182	0.00134945596170222\\
0.202	0.00133182069032217\\
0.222	0.00132344023626281\\
0.242	0.00134560620334995\\
0.262	0.00135641885353335\\
0.282	0.00137874658621245\\
0.302	0.00135534230484483\\
0.322	0.00141935933054015\\
0.342	0.00143236374732427\\
0.362	0.00138462683192656\\
0.382	0.00142546776996483\\
0.402	0.00133177682807331\\
0.422	0.00128549475835592\\
0.442	0.00129990470179095\\
0.462	0.00133120354772938\\
0.482	0.00138205047563981\\
0.502	0.00141127359344768\\
0.522	0.00146557128880431\\
0.542	0.0015108688045405\\
0.562	0.00162247994516594\\
0.582	0.00171453544493837\\
0.602	0.00185526920547059\\
0.622	0.00192983108886828\\
0.642	0.0019286075594714\\
0.662	0.00187447014579412\\
0.682	0.0017887820527742\\
0.702	0.0017059769876947\\
0.722	0.0016061677250658\\
0.742	0.00152865048957633\\
0.762	0.00150382479242184\\
0.782	0.00139189237617936\\
0.802	0.00137842290022286\\
0.822	0.00134364266696135\\
0.842	0.00131059203408289\\
0.862	0.00134006562203264\\
0.882	0.00135529945315676\\
0.902	0.00132532388002251\\
0.922	0.00133728541698145\\
0.942	0.00133085014358447\\
0.962	0.00133262010300829\\
0.982	0.00133070948346191\\
};
\addlegendentry{Root-RARE \cite{Pesavento:Rare}};

\addplot [color=\cSR,solid,mark=\mSR,mark options={solid}]
  table[row sep=crcr]{%
0.01	0.692653963720383\\
0.02	0.697925089563218\\
0.03	0.672237888301389\\
0.04	0.472626411278173\\
0.05	0.147056523547962\\
0.06	0.00510246966550225\\
0.07	0.00502066124504275\\
0.08	0.00502066124504275\\
0.09	0.00500000000000001\\
0.1	0.005\\
0.102	0.00462261073644582\\
0.122	0.00458212387580811\\
0.142	0.00444450483050763\\
0.162	0.00439296444852022\\
0.182	0.00429526196666963\\
0.202	0.0043240863274555\\
0.222	0.00432647968919713\\
0.242	0.00432408632745554\\
0.262	0.00436696615223216\\
0.282	0.00432647968919715\\
0.302	0.00428319456639341\\
0.322	0.00432647968919715\\
0.342	0.00430007944459331\\
0.362	0.00430248616076349\\
0.382	0.00432169164026542\\
0.402	0.00431209959283764\\
0.422	0.00425895719115695\\
0.442	0.00429526196666963\\
0.462	0.00429285120037622\\
0.482	0.00428077700421717\\
0.502	0.00432169164026548\\
0.522	0.00431689828071564\\
0.542	0.00433842669291212\\
0.562	0.00436222254215056\\
0.582	0.00441412188885735\\
0.602	0.00444217505898389\\
0.622	0.00447699433335243\\
0.642	0.00447236742602561\\
0.662	0.00449315098757004\\
0.682	0.00439767489856376\\
0.702	0.00438824894217136\\
0.722	0.00439060732840179\\
0.742	0.00436696615223216\\
0.762	0.00431689828071565\\
0.782	0.00430969824521433\\
0.802	0.00429043907949107\\
0.822	0.00425652586299676\\
0.842	0.00427351611189103\\
0.862	0.00427109307143322\\
0.882	0.00429526196666967\\
0.902	0.00428077700421717\\
0.922	0.0042759377792877\\
0.942	0.00427109307143321\\
0.962	0.0043000794445933\\
0.982	0.00426866865557605\\
};
\addlegendentry{Spect. RARE \cite{Gershman:Rare}};

\addplot [color=black,solid]
  table[row sep=crcr]{%
0.01	0.0181556339339397\\
0.02	0.00860136271583879\\
0.03	0.00570166226337118\\
0.04	0.00429298334261814\\
0.05	0.00346314667816751\\
0.06	0.0029192438067877\\
0.07	0.00253785719042494\\
0.08	0.00225786789942506\\
0.09	0.00204553855126857\\
0.1	0.00188072856388398\\
0.102	0.00185225722346674\\
0.122	0.00162878275714817\\
0.142	0.00148494065296306\\
0.162	0.00139296281337737\\
0.182	0.00133737429385545\\
0.202	0.00130866360475139\\
0.222	0.00130021732207947\\
0.242	0.00130659234767793\\
0.262	0.0013224552045677\\
0.282	0.00134206691879465\\
0.302	0.00135946619828307\\
0.322	0.00136946314874005\\
0.342	0.00136909517845061\\
0.362	0.00135864466204474\\
0.382	0.00134143361365914\\
0.402	0.00132253220733044\\
0.422	0.00130727198344563\\
0.442	0.00130030862076393\\
0.462	0.00130536121573233\\
0.482	0.0013253507860297\\
0.502	0.00136259669145583\\
0.522	0.00141878597595559\\
0.542	0.00149443257552114\\
0.562	0.00158748396256044\\
0.582	0.00169085852797845\\
0.602	0.00178970526698303\\
0.622	0.00186172136644004\\
0.642	0.00188530891991202\\
0.662	0.00185324528755004\\
0.682	0.00177808270218681\\
0.702	0.00168250823039172\\
0.722	0.00158644210309159\\
0.742	0.00150172548768283\\
0.762	0.00143321818689379\\
0.782	0.00138155463405702\\
0.802	0.00134524355452874\\
0.822	0.00132183151759681\\
0.842	0.00130846863838553\\
0.862	0.0013022164523333\\
0.882	0.00130030683643691\\
0.902	0.00130042297769068\\
0.922	0.00130095355852283\\
0.942	0.00130110739897171\\
0.962	0.00130080717691586\\
0.982	0.00130038801205081\\
};
\addlegendentry{CRB \cite{Gershman:Rare}};

\end{axis}
\end{tikzpicture}%
  \caption{Frequency estimation performance for uniform linear PCA of $M=9$ sensors in $P=3$ linear subarrays, for $N=50$ snapshots and $\tsnr=20\tdB$}
  \label{fig:sep2_rmse_fs}  
  \vspace{.4cm}
  
%
\begin{tikzpicture}

\begin{axis}[%
xmode=log,
xmin=0.01,
xmax=1,
xminorticks=true,
xlabel={Frequency Separation $\Delta \mu$},
xmajorgrids,
xminorgrids,
ymode=log,
ymin=0.0001,
ymax=1,
yminorticks=true,
ylabel={Bias($\hat{\mb{\mu}}$)},
ymajorgrids,
yminorgrids,
axis background/.style={fill=white},
legend style={legend cell align=left,align=left,draw=white!15!black},
height=\ph,
width=\pw,
xlabel near ticks,
ylabel near ticks,
every axis plot/.append style={mark indices={\pIdx}},
legend image post style={mark indices={2}}
]
\addplot [color=\cGLC,solid,mark=\mGLC,mark options={solid}]
  table[row sep=crcr]{%
0.01	0.0172353409128593\\
0.02	0.0144178477164969\\
0.03	0.0149359028252515\\
0.04	0.0145981844564355\\
0.05	0.0140358595315109\\
0.06	0.0130917475970332\\
0.07	0.0126549318587544\\
0.08	0.0115276526490279\\
0.09	0.0114488344968686\\
0.1	0.0103732021300909\\
0.102	0.00997309407076371\\
0.122	0.00772614258659186\\
0.142	0.00502240127536571\\
0.162	0.00293522956392042\\
0.182	0.00159498273387897\\
0.202	0.000732941749213827\\
0.222	0.000473634998882355\\
0.242	0.000572724040157506\\
0.262	0.00103616016549632\\
0.282	0.00155771795791522\\
0.302	0.00225736644841469\\
0.322	0.00228395245874145\\
0.342	0.0022744698189778\\
0.362	0.00184289376049261\\
0.382	0.00122679098929032\\
0.402	0.00074746480155856\\
0.422	0.000299575334503186\\
0.442	0.000193405018176496\\
0.462	0.000257242262959779\\
0.482	0.000602013233045809\\
0.502	0.00107366115219982\\
0.522	0.00182212257699588\\
0.542	0.00237183483747151\\
0.562	0.0029088173890233\\
0.582	0.00310982499329145\\
0.602	0.00336621506073973\\
0.622	0.00258475687350275\\
0.642	0.00141308429893914\\
0.662	0.000409396687913082\\
0.682	0.000704320252145031\\
0.702	0.00145993605070441\\
0.722	0.00156212249343028\\
0.742	0.00161261479172211\\
0.762	0.0013694476920227\\
0.782	0.00111358476082574\\
0.802	0.000774523808906985\\
0.822	0.000395680859615153\\
0.842	0.000282189023347794\\
0.862	7.16811359522082e-05\\
0.882	6.24345077548575e-05\\
0.902	5.86271894410536e-05\\
0.922	7.70521149950439e-05\\
0.942	0.000160380775276912\\
0.962	0.000174024498675749\\
0.982	4.08723092857903e-05\\
};
\addlegendentry{GL-\mname{} \eqref{eq:smr_dual_gridless}};

\addplot [color=\cGBC,solid,mark=\mGBC,mark options={solid}]
  table[row sep=crcr]{%
0.01	0.11356922813203\\
0.02	0.0136822028800699\\
0.03	0.014319977215294\\
0.04	0.0143691005608742\\
0.05	0.0137761383249892\\
0.06	0.0128025876029164\\
0.07	0.0124515917286654\\
0.08	0.0111777085036802\\
0.09	0.011253349840217\\
0.1	0.00996081938567006\\
0.102	0.00957800785916963\\
0.122	0.00746079718498596\\
0.142	0.00583903092981868\\
0.162	0.00492638172184835\\
0.182	0.00384651341635194\\
0.202	0.0030541397183231\\
0.222	0.00270994325834597\\
0.242	0.00276761417070637\\
0.262	0.00334865311784339\\
0.282	0.00371750992775758\\
0.302	0.00451046541654353\\
0.322	0.00442085760096704\\
0.342	0.00430306546504939\\
0.362	0.00417406965512313\\
0.382	0.00340040666639958\\
0.402	0.00296115185115126\\
0.422	0.0027273462532399\\
0.442	0.00256611958327738\\
0.462	0.00252779769710507\\
0.482	0.00280849910596626\\
0.502	0.00328628304021651\\
0.522	0.00373398243341157\\
0.542	0.00408388228922937\\
0.562	0.00419487500203121\\
0.582	0.00399719240999978\\
0.602	0.00407577641855415\\
0.622	0.00302822617348943\\
0.642	0.00199854204882988\\
0.662	0.000924462520232017\\
0.682	0.000844002150609794\\
0.702	0.0017831196080185\\
0.722	0.00195774387541518\\
0.742	0.00238273495247376\\
0.762	0.00217229654973626\\
0.782	0.00196718532999791\\
0.802	0.00188733248989056\\
0.822	0.00211635930783458\\
0.842	0.00221746862860951\\
0.862	0.00230755628333331\\
0.882	0.00220377401305661\\
0.902	0.00236694140212191\\
0.922	0.0023028005572227\\
0.942	0.00227557424641072\\
0.962	0.00215286667127982\\
0.982	0.00237756172290721\\
};
\addlegendentry{\mname{} \eqref{eq:sdp1b}};

\addplot [color=\cRR,solid,mark=\mRR,mark options={solid}]
  table[row sep=crcr]{%
0.01	0.00929168009691691\\
0.02	0.00255442493482925\\
0.03	0.00140424602512256\\
0.04	0.000425295664559765\\
0.05	0.000127195601978606\\
0.06	6.10564182939711e-05\\
0.07	5.73107950374557e-05\\
0.08	8.33821027392308e-05\\
0.09	5.61072544582201e-05\\
0.10	7.83401838047987e-05\\
0.102	0.0000110007231037238\\
0.122	9.75094847168142e-05\\
0.142	1.04537327943169e-05\\
0.162	3.81656248923427e-05\\
0.182	7.77539642629847e-05\\
0.202	7.54858453929572e-05\\
0.222	9.41077515726066e-05\\
0.242	8.6545683632014e-05\\
0.262	6.63677705202529e-05\\
0.282	8.99546894487635e-05\\
0.302	8.57042295491467e-05\\
0.322	2.07090693131564e-05\\
0.342	5.33256589338571e-05\\
0.362	9.19847748484087e-06\\
0.382	6.62850686285967e-05\\
0.402	6.02526903219936e-05\\
0.422	6.97583495829249e-05\\
0.442	2.90287788298924e-05\\
0.462	2.66820342165528e-05\\
0.482	2.97725889365217e-05\\
0.502	6.99395339095627e-05\\
0.522	0.000012226644875766\\
0.542	4.59652386471722e-05\\
0.562	6.4431712367186e-05\\
0.582	4.64567417605621e-05\\
0.602	6.24340005800141e-05\\
0.622	8.74583178765688e-05\\
0.642	4.96859102285359e-05\\
0.662	8.8328202403562e-05\\
0.682	8.86783079008762e-05\\
0.702	7.7036518372571e-05\\
0.722	0.0000140841073444554\\
0.742	2.62567135125109e-05\\
0.762	5.1334609727473e-05\\
0.782	7.58183626694558e-05\\
0.802	3.76645569096379e-05\\
0.822	6.36651541659164e-05\\
0.842	7.12753340777125e-05\\
0.862	3.9126096162057e-05\\
0.882	2.23505898365639e-05\\
0.902	4.73059520150923e-05\\
0.922	2.01272173958555e-05\\
0.942	7.93912966969285e-05\\
0.962	9.06024407939994e-05\\
0.982	6.33678627279373e-05\\
};
\addlegendentry{Root-RARE \cite{Pesavento:Rare}};

\addplot [color=\cSR,solid,mark=\mSR,mark options={solid}]
  table[row sep=crcr]{%
0.01	0.995157697349714\\
0.02	0.988942482031738\\
0.03	0.917569726894366\\
0.04	0.4539417586069\\
0.05	0.041532097547942\\
0.06	0.00308483402069921\\
0.07	0.00242379759518065\\
0.08	0.00237953418921138\\
0.09	0.00186901027877581\\
0.1	0.00163790231798775\\
0.102	0.00133429745371636\\
0.122	0.00143333589029248\\
0.142	0.00184143538252213\\
0.162	0.00199536517858647\\
0.182	0.00231849487218743\\
0.202	0.00218869665138734\\
0.222	0.00214800793220571\\
0.242	0.00225703361855048\\
0.262	0.00203449836267648\\
0.282	0.00220742774483382\\
0.302	0.00232942349901425\\
0.322	0.00214980320692666\\
0.342	0.00228196518677514\\
0.362	0.00227336452006789\\
0.382	0.00218223881373098\\
0.402	0.00222766630359368\\
0.422	0.00244786348263581\\
0.442	0.00228778561591802\\
0.462	0.0022909826537974\\
0.482	0.00234262608756488\\
0.502	0.00217692879650046\\
0.522	0.00218280838069163\\
0.542	0.00208905256403198\\
0.562	0.0019962726047997\\
0.582	0.00175925699942965\\
0.602	0.00163406510293793\\
0.622	0.00148995841591967\\
0.642	0.00150698613319738\\
0.662	0.00140960342375399\\
0.682	0.00183095949509847\\
0.702	0.00189555905367229\\
0.722	0.00188215270077779\\
0.742	0.00198606848329687\\
0.762	0.00218280838069534\\
0.782	0.00221346839204808\\
0.802	0.00229615959575221\\
0.822	0.00244364838437499\\
0.842	0.00237079117190188\\
0.862	0.00238538250200409\\
0.882	0.00227545651130888\\
0.902	0.00234573467890755\\
0.922	0.00235854157162538\\
0.942	0.00238113330844818\\
0.962	0.00225867107328592\\
0.982	0.00240106814386908\\
};
\addlegendentry{Spect. RARE \cite{Gershman:Rare}};

\end{axis}
\end{tikzpicture}%
  \caption{Frequency estimation bias for uniform linear PCA of $M=9$ sensors in $P=3$ linear subarrays, for $N=50$ snapshots and $\tsnr=20\tdB$}
  \label{fig:sep2_bias_fs}
  \vspace{-.4cm}  
\end{center}  
\end{figure}

\noindent
For further investigation of the spatial frequency estimation bias, we consider a uniform linear array of $M=9$ sensors, partitioned into $P=3$ identical, uniform linear subarrays of 3 sensors each, without additional gain/phase offsets, i.e., $\mb{\alpha} = [1, 1, 1]^\tT$ in \eqref{eq:Phik}. For the experiment we consider $L=2$ uncorrelated signals and fix the spatial frequency of the first signal as $\mu_1=0.505$ while the spatial frequency of the second signal is varied according to $\mu_2 = \mu_1-\Delta \mu$ with $10^{-2} \leq \Delta \mu \leq 1$. For all grid-based estimation methods we make use of a uniform grid of $K=200$ points according to $\mb{\nu}=[-1, -0.99, -0.98, \ldots, 0.99]^\tT$. The SNR and number of snapshots are fixed as $\tsnr=0\tdB$ and $N=20$.

First, we observe from Figure \ref{fig:sep1_rmse_fs} that the spectral RARE performs significantly worse than root-RARE in terms of threshold performance, i.e., the spectral RARE cannot always resolve the two signals for a frequency separation of $\Delta \mu \lessapprox 0.4$ while the root-RARE can resolve the signals for $\Delta \mu \gtrapprox 0.12$. The reason for this difference in resolution performance is that the root-RARE method locates the roots of the corresponding matrix polynomial in the entire complex plane, while the spectral RARE only searches minima on the unit circle (see also \cite{barabell1983rootMusic, pesavento2000unitaryRootMusic}). In contrast to that, the grid-based and the gridless \mname{} methods both show rather similar estimation performance, comparable to that of the root-RARE method, and reach the CRB for sufficiently large frequency separation. This observation can be explained by the fact that both dual \mname{} optimization problems provide matrix polynomials with the roots 
of interest constrained on the unit circle, as discussed in Section \ref{sec:Gridless}. This explains the similar performance results of the grid-based and gridless \mname{} methods. The only difference between the grid-based and gridless \mname{} methods is that in the first case the roots are generated on a grid of candidate frequencies on the unit circle, while in the latter case the roots are continuously located on the unit circle. 

In a slightly modified experiment we fix the SNR and number of snapshots to $\tsnr = 20 \tdB$ and $N=50$, respectively. While the root-RARE method performs close to the CRB for the region of interest, the spectral RARE can not always resolve the signals for $\Delta \mu \lessapprox 0.06$ and reaches an estimation bias for large source separation, which is caused by the finite frequency grid. Furthermore, it can be observed that the estimation performance of the \mname{} methods deviates from that of the root-RARE method. For large frequency separation $\Delta \mu \gtrapprox 0.2$ the grid-based \mname{} method reaches the grid bias, similar to the spectral RARE. However, also for low frequency separation $\Delta \mu \lessapprox 0.2 $ both methods do not reach the CRB. This can be explained by an inherent frequency estimation bias for SR methods (see also \cite{Malioutov:LassoDoa, steffens2016compact}). For further investigation we compute the spatial frequency estimation bias as
\begin{align}
  \text{Bias} (\hat{\mb{\mu}}) = \sqrt{\frac{1}{L} \sum_{l=1}^L \left( \mu_l - {\rm Mean} (\hat{\mu}_l) \right)^2 },
  \label{eq:bias}
\end{align}
where the mean estimate for spatial frequency $\mu_l$ is computed as ${\rm Mean} (\hat{\mu}_l) = 1/T \sum_{t=1}^T \hat{\mu}_l(t)$.

For the given scenario, the estimation bias is displayed in Figure \ref{fig:sep2_bias_fs}. In the case of low frequency separation $\Delta \mu \lessapprox 0.2$ both \mname{} methods show a relatively large estimation bias of $\text{Bias} (\hat{\mb{\mu}}) \approx 0.01$. For larger frequency separation $\Delta \mu \gtrapprox 0.2$, the bias of the grid-based \mname{} method is mainly determined by the finite grid, while the bias of the gridless \mname{} method shows to be periodic in $\Delta \mu$. In difficult scenarios, with low SNR and low number of snapshots as for the previous setup, the estimation bias is below the CRB, such that it is negligible in the RMSE performance. The frequency estimation bias is a well known phenomenon in SR research \cite{Malioutov:LassoDoa, steffens2016compact} and bias mitigation techniques have been discussed, e.g., in \cite{Abramovich:EL}.

\subsection{Correlated Signals}
\noindent
As discussed in the previous subsection, gridless \mname{} and root-RARE show approximately equal resolution performance for uncorrelated signals in difficult scenarios with low SNR, low number of snapshots and uncorrelated signals. This situation changes in the case of correlated signals, where preprocessing in form of subspace separation, as required for the RARE method, becomes difficult. For further investigation of this aspect we consider a PCA of $M=9$ sensors partitioned into $P=3$ subarrays of 3,4 and 2 sensors with positions $\mb{r}^{(1)}=[0, 1, 3]^\tT$, $\mb{r}^{(2)}=[17.4,18.4, 19.4, 21.4]^\tT$ and $\mb{r}^{(3)}=[24.8, 25.8]^\tT$. Furthermore we consider gain/phase offsets among the subarrays according to $\mb{\alpha} = [1, \, 0.7 \cdot \te^{\tj \frac{2}{3}\pi}, \, 1.2 \cdot \te^{\tj \frac{1}{4}\pi} ]^\tT$ in \eqref{eq:Phik}. The SNR and number of snapshots are selected as $\tsnr = 0 \tdB$ and $N=30$. We consider $L=2$ source signals with spatial frequencies $\mb{\mu} = [0.505,\, 0.105]^\tT$ and 
a source covariance matrix given as
\begin{align}
  \tE = N \mtx{1 & \rho \\ \rho^{*} & 1} ,
\end{align}
where the correlation coefficient $\rho$ is assumed to be real-valued and varied in the experiment. For the grid-based estimation methods we consider a grid of $K=200$ candidate frequencies, defined as in the previous subsection. As seen from Figure \ref{fig:corr_rmse_fs}, the spectral and root-RARE methods fail to properly estimate the spatial frequencies for high correlation ($\rho > 0.6$) while the grid-based and gridless \mname{} methods still show estimation performance close to the CRB, since these methods do not require subspace separation. 

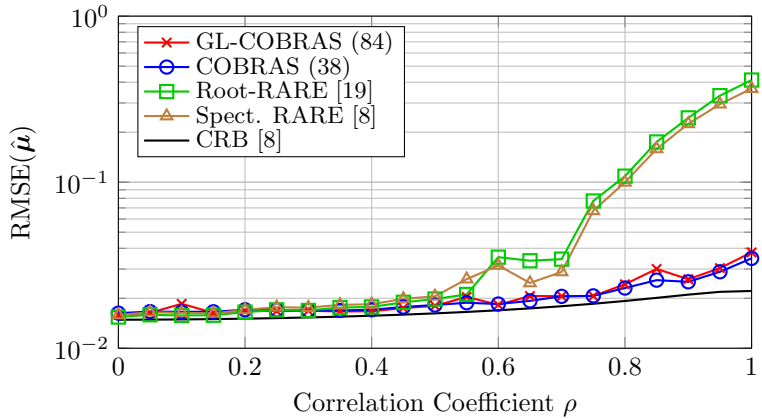
\begin{figure}[t]
\begin{center}
  \small
%
\begin{tikzpicture}

\begin{axis}[%
separate axis lines,
every outer x axis line/.append style={black},
every x tick label/.append style={font=\color{black}},
xmin=0,
xmax=1,
xlabel={Correlation Coefficient $\rho$},
xmajorgrids,
every outer y axis line/.append style={black},
every y tick label/.append style={font=\color{black}},
ymode=log,
ymin=0.01,
ymax=1,
yminorticks=true,
ylabel={RMSE($\hat{\mb{\mu}}$)},
ymajorgrids,
yminorgrids,
axis background/.style={fill=white},
legend style={at={(0.03,0.97)},anchor=north west,legend cell align=left,align=left,draw=black},
height=\ph,
width=\pw,
xlabel near ticks,
ylabel near ticks
]
\addplot [color=\cGLC,solid,mark=\mGLC,mark options={solid}]
  table[row sep=crcr]{%
0	0.0158765779278305\\
0.05	0.0163409113038866\\
0.1	0.0185268978091396\\
0.15	0.0163362259664747\\
0.2	0.0167783720334671\\
0.25	0.0167388955057332\\
0.3	0.016832649284228\\
0.35	0.0166486063427323\\
0.4	0.0166996001539399\\
0.45	0.0174589278821114\\
0.5	0.0180062505665374\\
0.55	0.020370288239026\\
0.6	0.0182635659953159\\
0.65	0.020630894958835\\
0.7	0.0205929318487916\\
0.75	0.0206380461478615\\
0.8	0.0242863256495007\\
0.85	0.0300469180201519\\
0.9	0.0260367521051778\\
0.95	0.0303930649672286\\
1	0.0377814826847962\\
};
\addlegendentry{GL-\mname{} \eqref{eq:smr_dual_gridless}};

\addplot [color=\cGBC,solid,mark=\mGBC,mark options={solid}]
  table[row sep=crcr]{%
0	0.0162665300540711\\
0.05	0.0166282891483159\\
0.1	0.0165801085641801\\
0.15	0.0165861387911713\\
0.2	0.0170557908054713\\
0.25	0.0170088212407562\\
0.3	0.0168789810118976\\
0.35	0.0168937858397696\\
0.4	0.0170528589978337\\
0.45	0.0177172232587389\\
0.5	0.0182208671582886\\
0.55	0.0188015956769632\\
0.6	0.0184634774622767\\
0.65	0.0193002590656188\\
0.7	0.020532900428337\\
0.75	0.0206784912409006\\
0.8	0.0230477764654207\\
0.85	0.025707975416201\\
0.9	0.025121703763877\\
0.95	0.0288721318921896\\
1	0.0347893661914096\\
};
\addlegendentry{\mname{} \eqref{eq:sdp1b}};

\addplot [color=\cRR,solid,mark=\mRR,mark options={solid}]
  table[row sep=crcr]{%
0	0.0154251194699526\\
0.05	0.015904366890871\\
0.1	0.0157560498040972\\
0.15	0.0158023680024176\\
0.2	0.0165039190907536\\
0.25	0.01703489498088\\
0.3	0.0168888701614747\\
0.35	0.0175821366375048\\
0.4	0.0177336870884281\\
0.45	0.0189140067007765\\
0.5	0.0197930275091955\\
0.55	0.0210869909379714\\
0.6	0.0353619687785822\\
0.65	0.0336573549622547\\
0.7	0.0344120437508326\\
0.75	0.0770420754424223\\
0.8	0.108726549466196\\
0.85	0.174653078125931\\
0.9	0.243934783496059\\
0.95	0.331986528069339\\
1	0.412276200378482\\
};
\addlegendentry{Root-RARE \cite{Pesavento:Rare}};

\addplot [color=\cSR,solid,mark=\mSR,mark options={solid}]
  table[row sep=crcr]{%
0	0.0158619040471186\\
0.05	0.0164742222881688\\
0.1	0.0163340135912763\\
0.15	0.0162234398325385\\
0.2	0.017026450011673\\
0.25	0.0176720117700278\\
0.3	0.0175527775579821\\
0.35	0.0182537667345674\\
0.4	0.018409236811992\\
0.45	0.0198292712927127\\
0.5	0.0205523721258642\\
0.55	0.0260653026071058\\
0.6	0.0315404502187272\\
0.65	0.0247244817943672\\
0.7	0.0288236014404861\\
0.75	0.0672673769371157\\
0.8	0.0998308569531488\\
0.85	0.158586884703622\\
0.9	0.224648837076892\\
0.95	0.294874210469481\\
1	0.36609657195882\\
};
\addlegendentry{Spect. RARE \cite{Gershman:Rare}};

\addplot [color=black,solid]
  table[row sep=crcr]{%
0	0.0148510003025087\\
0.05	0.0148725620066693\\
0.1	0.0149160864581528\\
0.15	0.0149821709520228\\
0.2	0.0150717809960178\\
0.25	0.0151862926815827\\
0.3	0.0153275562369434\\
0.35	0.0154979860842071\\
0.4	0.0157006850409395\\
0.45	0.0159396130397703\\
0.5	0.0162198133721096\\
0.55	0.0165477099898134\\
0.6	0.0169314814460579\\
0.65	0.017381481517775\\
0.7	0.0179105569212768\\
0.75	0.01853374088774\\
0.8	0.019265676424719\\
0.85	0.0201107833441235\\
0.9	0.0210318833894232\\
0.95	0.0218629482790746\\
1	0.0221306405552721\\
};
\addlegendentry{CRB \cite{Gershman:Rare}};

\end{axis}
\end{tikzpicture}%
  \caption{Frequency estimation performance for PCA of $M=9$ sensors in $P=3$ subarrays for $\tsnr=0 \tdB$, $N=30$ snapshot and $L=2$ source signals with varying real-valued correlation coefficient $\rho$ }
  \label{fig:corr_rmse_fs}
  \vspace{-.3cm}
\end{center}  
\end{figure}

\subsection{Array Calibration Performance}

\noindent
Besides estimation of the spatial frequencies, the \mname{} method also admits estimation of the subarray shifts in $\mb{\varphi}$ as defined in \eqref{eq:Phik}. Since the RARE methods do not provide direct estimation of the subarray shifts, we use the method presented in \cite{Parvazi:PCA} in equation (11) \footnote{Without the restriction that the complex phase terms must be of unit magnitude.} on the basis of the spatial frequency estimates obtained by the RARE methods. 

\begin{figure}[t]
\begin{center}  
  \small
%
\begin{tikzpicture}

\begin{axis}[%
separate axis lines,
every outer x axis line/.append style={black},
every x tick label/.append style={font=\color{black}},
xmin=-10,
xmax=30,
xlabel={SNR in dB},
xmajorgrids,
every outer y axis line/.append style={black},
every y tick label/.append style={font=\color{black}},
ymode=log,
ymin=0.0001,
ymax=1,
yminorticks=true,
ylabel={RMSE($\hat{\mb{\mu}}$)},
ymajorgrids,
yminorgrids,
axis background/.style={fill=white},
legend style={legend cell align=left,align=left,draw=black},
height=\ph,
width=\pw,
xlabel style={yshift=-1mm},
ylabel near ticks
]
\addplot [color=\cGLC,solid,mark=\mGLC,mark options={solid}]
  table[row sep=crcr]{%
-10	0.644152028507248\\
-8	0.400944843764341\\
-6	0.182846943841612\\
-4	0.0787575108425365\\
-2	0.0484937879553249\\
0	0.0334327608130882\\
2	0.0191577152874469\\
4	0.0133459257266029\\
6	0.0109256032072714\\
8	0.00937014407247238\\
10	0.00787285642090324\\
12	0.00669470341526383\\
14	0.00614646482446654\\
16	0.00558334693345304\\
18	0.0050899061213607\\
20	0.00520046735487674\\
22	0.00499250137409232\\
24	0.00471042804823331\\
26	0.00497949800360728\\
28	0.00499407601893865\\
30	0.00478731487478086\\
};
\addlegendentry{GL-\mname{} \eqref{eq:smr_dual_gridless}};

\addplot [color=\cGBC,solid,mark=\mGBC,mark options={solid}]
  table[row sep=crcr]{%
-10	0.670621453081641\\
-8	0.420010952238152\\
-6	0.189146856525117\\
-4	0.0800424887169309\\
-2	0.0485551233136113\\
0	0.0336481797427437\\
2	0.0195004273457446\\
4	0.0136161668614922\\
6	0.0112664694262814\\
8	0.00976387901058457\\
10	0.00835463942968219\\
12	0.00734846922834956\\
14	0.00681664629173812\\
16	0.00644980619863887\\
18	0.00609371260672291\\
20	0.00639270417481264\\
22	0.00634034699365897\\
24	0.00620215016479503\\
26	0.00646529195009788\\
28	0.00640832791503892\\
30	0.00631400559602754\\
};
\addlegendentry{\mname{} \eqref{eq:sdp1b}};

\addplot [color=\cRR,solid,mark=\mRR,mark options={solid}]
  table[row sep=crcr]{%
-10	0.326687264939469\\
-8	0.291475641687875\\
-6	0.268000560331744\\
-4	0.218184160852268\\
-2	0.173744120970786\\
0	0.101171832904401\\
2	0.0464071111697444\\
4	0.0336101269085998\\
6	0.00959504379833237\\
8	0.00770309523613234\\
10	0.0061277964289148\\
12	0.00470821704363093\\
14	0.003882197688264\\
16	0.00294835298512833\\
18	0.00236682310433464\\
20	0.00191535047489859\\
22	0.00150603606930537\\
24	0.00118331017323473\\
26	0.000938157726928196\\
28	0.000749195011577636\\
30	0.000597274937836619\\
};
\addlegendentry{Root-RARE \cite{Pesavento:Rare}};

\addplot [color=\cSR,solid,mark=\mSR,mark options={solid}]
  table[row sep=crcr]{%
-10	0.327125154438888\\
-8	0.318912422670133\\
-6	0.293376663466382\\
-4	0.221347840890005\\
-2	0.148803673789774\\
0	0.08417838202294\\
2	0.0388501394248893\\
4	0.0206025888340925\\
6	0.0101882285015601\\
8	0.00833066623986344\\
10	0.00675771164423779\\
12	0.00561545486433055\\
14	0.00521536192416215\\
16	0.00501996015920448\\
18	0.00500000000000003\\
20	0.00500000000000002\\
22	0.00500000000000003\\
24	0.00500000000000003\\
26	0.00500000000000003\\
28	0.00500000000000002\\
30	0.00500000000000003\\
};
\addlegendentry{Spect. RARE \cite{Gershman:Rare}};

\addplot [color=black,solid]
  table[row sep=crcr]{%
-10	0.0825090033447158\\
-8	0.0584628718083717\\
-6	0.0424677672398413\\
-4	0.0315559143022182\\
-2	0.0238973021932674\\
0	0.0183668912275962\\
2	0.0142700190827908\\
4	0.0111712346332294\\
6	0.00879022324628805\\
8	0.00694009605724767\\
10	0.005491412388636\\
12	0.00435126307025011\\
14	0.00345094368857716\\
16	0.00273847735069354\\
18	0.00217389289453999\\
20	0.00172610394324339\\
22	0.0013707518638668\\
24	0.00108865583999589\\
26	0.000864664319816766\\
28	0.000686784300482332\\
30	0.000545510617398482\\
};
\addlegendentry{CRB \cite{Gershman:Rare}};

\end{axis}
\end{tikzpicture}%
  \caption{Frequency estimation performance for PCA of $M=10$ sensors in $P=4$ subarrays, for $N=20$ snapshot and $L=2$ uncorrelated source signals}
  \label{fig:cal_rmse_fs}
  \vspace{.4cm}
  
%
\begin{tikzpicture}

\begin{axis}[%
separate axis lines,
every outer x axis line/.append style={black},
every x tick label/.append style={font=\color{black}},
xmin=-10,
xmax=30,
xlabel={SNR in dB},
xmajorgrids,
every outer y axis line/.append style={black},
every y tick label/.append style={font=\color{black}},
ymode=log,
ymin=0.001,
ymax=10,
yminorticks=true,
ylabel={RMSE($\hat{\mb{\varphi}}$)},
ymajorgrids,
yminorgrids,
axis background/.style={fill=white},
legend style={at={(0.03,0.03)},anchor=south west,legend cell align=left,align=left,draw=black},
height=\ph,
width=\pw,
xlabel style={yshift=-1mm},
ylabel near ticks
]
\addplot [color=\cGLC,solid,mark=\mGLC,mark options={solid}]
  table[row sep=crcr]{%
-10	1.22749293836992\\
-8	1.05489702025617\\
-6	0.726856883165392\\
-4	0.482719047646296\\
-2	0.363970723050297\\
0	0.310470540036058\\
2	0.253003178219903\\
4	0.215464388990843\\
6	0.185819331403465\\
8	0.16248373385038\\
10	0.139964541809402\\
12	0.122682802980647\\
14	0.106482266120066\\
16	0.0961518259777829\\
18	0.0861063008652121\\
20	0.0771582871445496\\
22	0.0706283376942376\\
24	0.0659486152118093\\
26	0.0619563238292041\\
28	0.0597889225752658\\
30	0.0569979296905475\\
};
\addlegendentry{GL-\mname{} \eqref{eq:smr_dual_gridless}};

\addplot [color=\cGBC,solid,mark=\mGBC,mark options={solid}]
  table[row sep=crcr]{%
-10	1.21758572453253\\
-8	1.0585173423724\\
-6	0.721537918752564\\
-4	0.48611914529395\\
-2	0.363977187920537\\
0	0.31089399285673\\
2	0.253260284061797\\
4	0.214502989275318\\
6	0.186250273944159\\
8	0.162809439100807\\
10	0.140630509713621\\
12	0.123423006227274\\
14	0.107506676598776\\
16	0.0978723397384218\\
18	0.0884066795345342\\
20	0.0804447279009728\\
22	0.0743935797545761\\
24	0.0716188198851384\\
26	0.0674389270339833\\
28	0.070254164218889\\
30	0.0637874460856308\\
};
\addlegendentry{\mname{} \eqref{eq:sdp1b}};

\addplot [color=\cRR,solid,mark=\mRR,mark options={solid}]
  table[row sep=crcr]{%
-10	3.41275907803159\\
-8	2.47399267365021\\
-6	1.74735937382186\\
-4	0.910482655947431\\
-2	0.608709139075908\\
0	0.467557940632244\\
2	0.397350157578282\\
4	0.375108529737277\\
6	0.359109710696278\\
8	0.354312360119856\\
10	0.349962767533216\\
12	0.347817125631041\\
14	0.346365109937375\\
16	0.345358442346804\\
18	0.344501221198319\\
20	0.344224037260704\\
22	0.344470545230528\\
24	0.344245656611603\\
26	0.344001007666797\\
28	0.34389072207532\\
30	0.343839088014222\\
};
\addlegendentry{Root-RARE \cite{Pesavento:Rare,Parvazi:PCA}};

\addplot [color=\cSR,solid,mark=\mSR,mark options={solid}]
  table[row sep=crcr]{%
-10	3.84496241140319\\
-8	2.34590568106844\\
-6	2.30893756788634\\
-4	0.926549168374893\\
-2	0.696076839118294\\
0	0.456753519105156\\
2	0.397689958641077\\
4	0.373087950709889\\
6	0.360009441365365\\
8	0.354944831694626\\
10	0.35034716889408\\
12	0.348169308209814\\
14	0.346738538257286\\
16	0.345797222182092\\
18	0.34485774836505\\
20	0.344641792447805\\
22	0.34491744842889\\
24	0.344759215540806\\
26	0.344426380479574\\
28	0.34435322930559\\
30	0.344269465928246\\
};
\addlegendentry{Spect. RARE \cite{Gershman:Rare,Parvazi:PCA}};

\addplot [color=black,solid]
  table[row sep=crcr]{%
-10	1.22424835038614\\
-8	0.869784014173624\\
-6	0.633077938493039\\
-4	0.471019467106333\\
-2	0.356964834157626\\
0	0.274456930648247\\
2	0.213274159808473\\
4	0.166973323180965\\
6	0.131388894031137\\
8	0.103735858295269\\
10	0.0820821910187327\\
12	0.0650399974844839\\
14	0.0515825697987074\\
16	0.040933049803069\\
18	0.0324939835037933\\
20	0.0258007097576016\\
22	0.0204891282614102\\
24	0.0162725342468978\\
26	0.0129244506971767\\
28	0.0102656131523651\\
30	0.00815394410799789\\
};
\addlegendentry{CRB \cite{Gershman:Rare}};

\end{axis}
\end{tikzpicture}%
  \caption{Displacement phase estimation performance for PCA of $M=10$ sensors in $P=4$ subarrays, for $N=20$ snapshot and $L=2$ uncorrelated source signals}
  \label{fig:cal_rmse_phi}
  \vspace{-.3cm}
\end{center}  
\end{figure}

The setup under investigation consists of a PCA of $M=10$ sensors partitioned into $P=4$ subarrays of 3,2,3 and 2 sensors at positions $\mb{r}^{(1)}=[0,2,3]^\tT$, $\mb{r}^{(2)}=[10.1,\,11.1]^\tT$, $\mb{r}^{(3)}=[27.4,\,28.4, \,30.4]^\tT$ and $\mb{r}^{(4)}=[54.8,\,56.8]^\tT$. The subarray gain/phase offsets are set as $\mb{\alpha} = [1,\, 1.3 \cdot \te^{\tj \frac{2}{3}\pi},\, 0.7 \cdot \te^{-\tj \frac{1}{4}\pi},\, 0.9 \cdot \te^{-\tj \frac{3}{5}\pi} ]^\tT$ in \eqref{eq:Phik}. We consider $L=3$ uncorrelated source signals with spatial frequencies $\mb{\mu} = [0.605,\, 0.255,\, -0.305]^\tT$ and the number of snapshot is set to $N=20$. 

Figure \ref{fig:cal_rmse_fs} displays the frequency estimation error of the different methods for varying SNR, where both \mname{} methods show the best thresholding performance but reach an estimation bias for $\tsnr \geq 15 \tdB$. Similarly, the spectral RARE algorithm reaches an estimation bias which is caused by the finite grid. On the other hand, the root-RARE performs asymptotically optimal and reaches the CRB for high SNR. The corresponding subarray shift estimation performance is displayed in Figure \ref{fig:cal_rmse_phi}, where the root-mean-square error is computed according to
\begin{align}
  \trmse (\hat{\mb{\varphi}})= \sqrt{\frac{1}{L T (P-1)} \sum_{t=1}^T \sum_{l=1}^L \big\| \mb{\varphi}_l - \hat{\mb{\varphi}}_l(t) \big\|_2^2 },
  \label{eq:rmse_phi}
\end{align}
with $\hat{\mb{\varphi}}_l(t)$ being the displacement phase vector estimate for signal $l$ in Monte Carlo trial $t$. As can be observed from Figure~\ref{fig:cal_rmse_phi}, the subarray shift estimation method in \cite{Parvazi:PCA}, based on the frequency estimates obtained from the RARE methods, achieves a relatively large estimation bias for high SNR. In contrast to that, the grid-based and gridless \mname{} methods show a significantly reduced estimation error, which demonstrates the advantage of joint frequency and displacement phase estimation.

\subsection{Computational Complexity}

\begin{figure}
\begin{center}
  \small
%
\begin{tikzpicture}

\begin{axis}[%
separate axis lines,
every outer x axis line/.append style={black},
every x tick label/.append style={font=\color{black}},
xmode=log,
xmin=1,
xmax=100,
xminorticks=true,
xlabel={Number of Snapshots $N$},
xmajorgrids,
xminorgrids,
every outer y axis line/.append style={black},
every y tick label/.append style={font=\color{black}},
ymode=log,
ymin=0.1,
ymax=1000,
yminorticks=true,
ylabel={Av. Computation Time in Secs},
ymajorgrids,
yminorgrids,
axis background/.style={fill=white},
legend style={at={(0.03,0.97)},anchor=north west,legend cell align=left,align=left,draw=black},
height=\ph,
width=\pw,
xlabel near ticks,
ylabel near ticks
]
\addplot [color=\cSR,solid,mark=\mSR,mark options={solid}]
  table[row sep=crcr]{%
1	3.1062548\\
2	3.47395732\\
3	3.902847\\
5	4.88795789\\
7	6.29146846\\
10	8.81900845\\
15	16.22008593\\
20	29.19677429\\
30	61.26895101\\
40	101.3367443\\
50	156.53710735\\
60	223.10187576\\
};
\addlegendentry{$\ell_{*,1}$ Mixed-Norm \eqref{eq:NucNormSdp2}};

\addplot [color=\cRR,solid,mark=\mRR,mark options={solid}]
  table[row sep=crcr]{%
1	2.51915757\\
2	2.96643931\\
3	3.0787832\\
5	3.27087986\\
7	3.51181203\\
10	4.02220898\\
15	5.18445715\\
20	7.85133886\\
30	24.35881956\\
40	109.62603382\\
50	200.66463252\\
60	473.79655435\\
};
\addlegendentry{\mname{} \eqref{eq:sdp1}};

\addplot [color=\cGBC,solid,mark=\mGBC,mark options={solid}]
  table[row sep=crcr]{%
1	3.91614038\\
2	3.54908903\\
3	3.53540006\\
5	3.63051409\\
7	3.73886932\\
10	3.70409693\\
15	3.71140219\\
20	3.70928612\\
30	3.68283313\\
40	3.84676887\\
50	3.77148288\\
60	3.70195718\\
};
\addlegendentry{\mname{} \eqref{eq:sdp1b}};

\addplot [color=\cGLC,solid,mark=\mGLC,mark options={solid}]
  table[row sep=crcr]{%
1	1.28662459\\
2	0.880664\\
3	0.90893727\\
5	0.93916716\\
7	0.81996504\\
10	0.65418712\\
15	0.65701158\\
20	0.65853767\\
30	0.65642274\\
40	0.65193745\\
50	0.65088358\\
60	0.64943909\\
};
\addlegendentry{GL-\mname{} \eqref{eq:smr_dual_gridless}};

\end{axis}
\end{tikzpicture}%
  \caption{Average computation time of different SDP implementations for uniform linear PCA with $M=9$ sensors in $P=3$ subarrays, with $K=100$ grid points and varying number of snapshots $N$}
  \label{fig:comp_time_NSnp}  
  \vspace{.4cm}
  
%
\begin{tikzpicture}

\begin{axis}[%
separate axis lines,
every outer x axis line/.append style={black},
every x tick label/.append style={font=\color{black}},
xmode=log,
xmin=10,
xmax=1000,
xminorticks=true,
xlabel={Grid Size $K$},
xmajorgrids,
xminorgrids,
every outer y axis line/.append style={black},
every y tick label/.append style={font=\color{black}},
ymode=log,
ymin=0.1,
ymax=1000,
yminorticks=true,
ylabel={Av. Computation Time in Secs},
ymajorgrids,
yminorgrids,
axis background/.style={fill=white},
legend style={at={(0.03,0.97)},anchor=north west,legend cell align=left,align=left,draw=black},
height=\ph,
width=\pw,
xlabel near ticks,
ylabel near ticks
]
\addplot [color=\cSR,solid,mark=\mSR,mark options={solid}]
  table[row sep=crcr]{%
10	1.05961955\\
20	1.61689056\\
30	2.33127515\\
50	3.95617398\\
70	5.41857236\\
100	7.53112374\\
200	15.29891118\\
300	25.77914778\\
500	59.48592712\\
700	90.35917598\\
1000	118.69768459\\
};
\addlegendentry{$\ell_{*,1}$ Mixed-Norm \eqref{eq:NucNormSdp2}};

\addplot [color=\cRR,solid,mark=\mRR,mark options={solid}]
  table[row sep=crcr]{%
10	0.73072032\\
20	1.0018321\\
30	1.51019935\\
50	2.11462656\\
70	2.75196929\\
100	3.76384756\\
200	7.47611049\\
300	11.27989419\\
500	18.99842086\\
700	28.95516127\\
1000	45.24974564\\
};
\addlegendentry{\mname{} \eqref{eq:sdp1}};

\addplot [color=\cGBC,solid,mark=\mGBC,mark options={solid}]
  table[row sep=crcr]{%
10	0.78232405\\
20	1.11977708\\
30	1.42998761\\
50	2.04210883\\
70	2.63625095\\
100	3.64518573\\
200	7.36201249\\
300	11.34917348\\
500	20.49819631\\
700	31.51159387\\
1000	49.76819241\\
};
\addlegendentry{\mname{} \eqref{eq:sdp1b}};

\addplot [color=\cGLC,solid,mark=\mGLC,mark options={solid}]
  table[row sep=crcr]{%
10	0.64180402\\
20	0.64027997\\
30	0.64118219\\
50	0.64214521\\
70	0.64079396\\
100	0.63785932\\
200	0.64515588\\
300	0.71120161\\
500	0.64923726\\
700	0.73995632\\
1000	0.76130822\\
};
\addlegendentry{GL-\mname{} \eqref{eq:smr_dual_gridless}};

\end{axis}
\end{tikzpicture}%
  \caption{Average computation time of different SDP implementations for uniform linear PCA with $M=9$ sensors in $P=3$ subarrays, with $N=9$ signal snapshots and varying grid size $K$}
  \label{fig:comp_time_NGrd}  
  \vspace{-.3cm}  
\end{center}  
\end{figure}

\noindent
To investigate the computation time of the \mname{} formulation, we perform simulations in Matlab using the SeDuMi solver \cite{S98guide} with the CVX interface \cite{grant2008,grant2014} on a machine with an Intel Core i5-760 CPU @ $2.80\;{\rm GHz} \times 4$ and $8\,{\rm GByte}$ RAM. We consider a scenario with two independent complex Gaussian sources with static spatial frequencies $\mu_1=0.505$ and $\mu_2=-0.205$ and a uniform linear PCA of $M=9$ sensors partitioned into $P=3$ identical and uniform linear subarrays of 3 sensors. We neglect subarray gain/phase offsets, i.e., $\mb{\alpha}=[1, 1, 1]^\tT$ in \eqref{eq:Phik}. 

For the first experiment the SNR is fixed at $\tsnr=0\tdB$ while the number of snapshots $N$ is varied. Figure \ref{fig:comp_time_NSnp} shows the average computation time for $T=100$ Monte Carlo runs of the SDP implementation of the $\ell_{*,1}$ mixed-norm minimization \eqref{eq:NucNormSdp2} and the grid-based \mname{} formulations \eqref{eq:sdp1} and \eqref{eq:sdp1b} with a grid size of $K=100$, as well as the gridless (GL-) \mname{} formulation in \eqref{eq:smr_dual_gridless}. The computation time is measured only for solving the corresponding optimization problem in CVX. Pre-processing steps, such as computation of the sample covariance matrix, or post-processing steps, such as peak-search or polynomial rooting, are not included into this consideration. As can be observed from Figure \ref{fig:comp_time_NSnp}, for a number of $N < 5$ snapshots all grid-based methods exhibit approximately equal computation time. For $5 \leq N < 40 $ the $\ell_{*,1}$ mixed-norm minimization problem has largest computation 
time while the \mname{} formulation \eqref{eq:sdp1} requires longest computation time for $N > 40$, due to the large dimension of the semidefinite constraint \eqref{eq:sdp1Con1}. Regarding the computation time of the grid-based \mname{} formulation using the sample covariance matrix \eqref{eq:sdp1b} we observe that it is relatively constant for any number of snapshots $N$ and lower than for the other implementations especially for large number of snapshots $N > 10$. The lowest computation time is required for solving the GL-\mname{} implementation \eqref{eq:smr_dual_gridless}. 

Figure \ref{fig:comp_time_NGrd} shows the average computation time for $T=100$ Monte Carlo runs for a varying number of grid points $K$ and a fixed number of $N=9$ signal snapshots, corresponding to the case where the signal subspace matching techniques are applied (compare \cite{Malioutov:LassoDoa, steffens2016noncircular}). For all grid-based methods the number of SDP constraints grows nearly linear with the number of grid points $K$. Clearly, the $\ell_{*,1}$ mixed-norm minimization approach has the largest computation time for any investigated number of grid points $K$. The grid-based \mname{} formulations \eqref{eq:sdp1} and \eqref{eq:sdp1b} show approximately equal computation time, since both formulations have SDP constraints of identical dimension. Since the GL-\mname{} formulation is independent of the grid size $K$, it is constant for all grid size numbers $K$ in Figure \ref{fig:comp_time_NGrd} and provides the fastest computation of all methods under investigation.

\section{Conclusion}\label{sec:conclusion}
\noindent
Partly calibrated arrays are attractive setups for direction of arrival estimation. Computationally efficient subspace-based methods such as RARE and ESPRIT show asymptotically optimal performance, but have problems in difficult scenarios such as low sample size, low signal-to-noise ratio or correlated source signals. Sparse recovery in the form of $\ell_{*,1}$ mixed-norm minimization has been shown to be an attractive alternative to subspace-based methods in these difficult scenarios. In this paper we have derived a compact, equivalent formulation of the $\ell_{*,1}$ mixed-norm minimization, referred to as COmpact Block- and RAnk-Sparse recovery (\mname{}). The \mname{} formulation is attractive especially in the case of a large number of signal snapshots. For the special case of subarrays with common baseline we have presented an extension to gridless estimation, referred to as gridless \mname{} (GL-\mname{}). 

As shown by numerical results, the grid-based \mname{} significantly outperforms the spectral RARE method in terms of thresholding performance for closely spaced source signals. Furthermore, the \mname{} method outperforms the RARE method in the case of strongly correlated source signals and in the calibration (subarray shift estimation) performance. A drawback of the $\ell_{*,1}$ mixed-norm minimization approach and the \mname{} formulation is the estimation bias which becomes significant in the asymptotic case of large number of snapshots or high signal-to-noise ratio. However, if higher estimation accuracy is required, the \mname{} estimates can be used to provide initial estimates, e.g., for a subsequent maximum likelihood estimator. 

\section*{Acknowledgment}
This work was supported by the German Research Foundation (DFG) within the DFG priority program on Compressed Sensing in Information Processing (CoSIP DFG-SPP 1798).

\refstepcounter{AppendixCount} \label{sec:proof}
\section*{Appendix \Alph{AppendixCount} - Proof of Problem Equivalence} 
\noindent
The proof of Theorem \ref{th:equivalence} relies on the following lemma, see \cite{Srebro2005, Recht2010}:
\begin{lemma} \label{NucNormMin}
The nuclear norm of the $P \times N$ matrix $\mb{Q}_k$ is equivalently computed by the minimization problem
\begin{align}
  \left\| \mb{Q}_k \right\|_{*} = 
  \min_{\mb{\varGamma}_k, \mb{G}_k} 
  \Big\{ \frac{1}{2} \big(\|\mb{\varGamma}_{k}\|_\tF^2 + \| \mb{G}_k\|_\tF^2 \big) : \, \mb{\varGamma}_{ k} \mb{G}_k = \mb{Q}_k \Big\} 
  \label{eq:NucNormMin}
\end{align}
where $\mb{\varGamma}_k$ and $\mb{G}_k$ are complex matrices of dimensions $P \times r$ and $r \times N$, respectively, with $r=\min(N,P)$. 
\end{lemma}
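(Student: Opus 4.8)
The plan is to establish the identity by separately proving the two inequalities that together pin down the minimum: the objective value lower-bounds the nuclear norm for every feasible factorization, and this lower bound is attained by an explicit factorization built from the singular value decomposition (SVD) of $\mb{Q}_k$. Since the constraint set is the same on both sides, showing $\geq$ and $\leq$ suffices.

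First I would prove the lower bound. For any feasible pair $(\mb{\varGamma}_k, \mb{G}_k)$ with $\mb{\varGamma}_k \mb{G}_k = \mb{Q}_k$, the goal is the chain
\[
  \|\mb{Q}_k\|_* = \|\mb{\varGamma}_k \mb{G}_k\|_* \leq \|\mb{\varGamma}_k\|_\tF \, \|\mb{G}_k\|_\tF \leq \tfrac{1}{2}\big(\|\mb{\varGamma}_k\|_\tF^2 + \|\mb{G}_k\|_\tF^2\big),
\]
where the last step is the arithmetic--geometric mean inequality. The essential ingredient is the factorization bound $\|\mb{\varGamma}_k \mb{G}_k\|_* \leq \|\mb{\varGamma}_k\|_\tF \, \|\mb{G}_k\|_\tF$. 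I would derive this from the SVD $\mb{Q}_k = \mb{U}\mb{\Sigma}\mb{V}^\tH$ by writing $\|\mb{Q}_k\|_* = \tr(\mb{\Sigma}) = \tr(\mb{U}^\tH \mb{\varGamma}_k \mb{G}_k \mb{V})$ and applying the Cauchy--Schwarz inequality for the Frobenius inner product, using that left or right multiplication by matrices with orthonormal columns cannot increase the Frobenius norm, so that $\|\mb{U}^\tH \mb{\varGamma}_k\|_\tF \leq \|\mb{\varGamma}_k\|_\tF$ and $\|\mb{G}_k \mb{V}\|_\tF \leq \|\mb{G}_k\|_\tF$.

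Next I would prove attainability by exhibiting a factorization that meets the bound with equality. Taking the thin SVD $\mb{Q}_k = \mb{U}\mb{\Sigma}\mb{V}^\tH$ with $\mb{\Sigma}$ of size $r \times r$, $r = \min(N,P)$, I set $\mb{\varGamma}_k = \mb{U}\mb{\Sigma}^{1/2}$ and $\mb{G}_k = \mb{\Sigma}^{1/2}\mb{V}^\tH$. Then $\mb{\varGamma}_k \mb{G}_k = \mb{Q}_k$, and because $\mb{U}$, $\mb{V}$ have orthonormal columns each factor contributes $\|\mb{\varGamma}_k\|_\tF^2 = \|\mb{G}_k\|_\tF^2 = \tr(\mb{\Sigma}) = \|\mb{Q}_k\|_*$, so the objective equals $\|\mb{Q}_k\|_*$. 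This choice has precisely the dimensions $P \times r$ and $r \times N$ claimed in the lemma, so it is feasible.

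The main obstacle is the factorization bound $\|\mb{\varGamma}_k \mb{G}_k\|_* \leq \|\mb{\varGamma}_k\|_\tF \, \|\mb{G}_k\|_\tF$; once it is in hand, the arithmetic--geometric mean step and the SVD construction are routine. A careful point is to use the thin (economy-size) SVD so that the inner dimension is exactly $r = \min(N,P)$, which both matches the matrix sizes asserted in the lemma and guarantees that the minimizing factorization lies in the feasible set, so that the infimum is actually a minimum.
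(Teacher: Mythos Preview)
Your proposal is correct and follows essentially the same route as the paper: both arguments establish the lower bound via the factorization inequality $\|\mb{\varGamma}_k \mb{G}_k\|_* \leq \|\mb{\varGamma}_k\|_\tF\,\|\mb{G}_k\|_\tF$ (derived from Cauchy--Schwarz) followed by the arithmetic--geometric mean inequality, and both achieve the bound with the SVD-based choice $\mb{\varGamma}_k = \mb{U}_k \mb{\Sigma}_k^{1/2}$, $\mb{G}_k = \mb{\Sigma}_k^{1/2} \mb{V}_k^\tH$. Your lower-bound step via $\tr(\mb{U}^\tH \mb{\varGamma}_k \mb{G}_k \mb{V})$ is in fact slightly cleaner than the paper's, which parametrizes factorizations through the SVD factors before applying Cauchy--Schwarz; your version applies directly to an arbitrary feasible pair without that intermediate parametrization.
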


\begin{proof}[Proof of Lemma \ref{NucNormMin}]
Let us define the compact singular value decomposition of matrix $\mb{Q}_k$ as 
\begin{align}
  \mb{Q}_k = \mb{U}_k \mb{\varSigma}_k \mb{V}_k^\tH, 
  \label{eq:SvdQ}
\end{align}
such that the factorization terms of $\mb{Q}_k = \mb{\varGamma}_{k} \mb{G}_k$ can be expressed as
\begin{align}
  \mb{\varGamma}_k = \mb{U}_k \mb{\varPi}_{k,1} \mb{W}_k^\tH \quad \text{ and } \quad
  \mb{G}_k = \mb{W}_k \mb{\varPi}_{k,2} \mb{V}_k^\tH,
  \label{eq:FactorSvds}
\end{align}
for $\mb{\varSigma}_k = \mb{\varPi}_{k,1} \mb{\varPi}_{k,2}$ of size $r \times r$ and some $r \times r$ arbitrary unitary matrix $\mb{W}_k$, i.e., $\mb{W}_k^\tH \mb{W}_k = \mb{I}_r$. Based on \eqref{eq:FactorSvds} it holds that
\begin{align}
  \| \mb{Q}_k \|_{*} = \| \mb{\varSigma}_k \|_{*} &= \|  \mb{\varPi}_{k,1} \mb{\varPi}_{k,2} \|_{*} \notag \\
  & \leq \|  \mb{\varPi}_{k,1} \|_\tF \| \mb{\varPi}_{k,2} \|_\tF \notag \\
  & = \|  \mb{\varGamma}_{k} \|_\tF \| \mb{G}_{k,2} \|_\tF ,
\end{align}
where the inequality stems from the Cauchy-Schwartz inequality and is fulfilled with equality if and only if $\mb{\varPi}_{k,1} = \mb{\varPi}_{k,2} = \mb{\varSigma}^{1/2}_k$. In this case, the matrix factors in \eqref{eq:FactorSvds} are given as
\begin{align}
  \mb{\varGamma}_k = \mb{U}_k \mb{\Sigma}_k^{\frac{1}{2}} \mb{W}_k^\tH \quad \text{ and } \quad
  \mb{G}_k = \mb{W}_k \mb{\Sigma}_k^{\frac{1}{2}} \mb{V}_k^\tH .
  \label{eq:OptFactorSvds}
\end{align}
Furthermore, by the arithmetic-geometric-mean inequality it follows that
\begin{align}
  \|  \mb{\varGamma}_{k} \|_\tF \| \mb{G}_{k,2} \|_\tF \leq \frac{1}{2} \big(\|\mb{\varGamma}_k\|_\tF^2 + \| \mb{G}_k\|_\tF^2 \big) ,
\end{align}
where equality holds if $\|\mb{\varGamma}_k\|_\tF = \| \mb{G}_k\|_\tF$, such that the minimum of \eqref{eq:NucNormMin} is given by $\left\| \mb{Q}_k \right\|_{*} = \frac{1}{2} \big(\|\mb{\varGamma}_{ k}\|_\tF^2 + \| \mb{G}_k\|_\tF^2 \big)$ with $\mb{\varGamma}_{ k}$ and $\mb{G}_{ k}$ given by \eqref{eq:OptFactorSvds}. 


\end{proof}

\begin{proof}[Proof of Theorem \ref{th:equivalence}]
Based on Lemma \ref{NucNormMin}, the $\ell_{*,1}$ mixed-norm of the source signal matrix $\mb{Q} = [\mb{Q}_1^\tT, \ldots, \mb{Q}_K^\tT]^\tT$, as defined in \eqref{eq:Nuc1MixedNorm}, is equivalently computed by 
\vspace{.1cm}
\begin{align}  
  \| \mb{Q} \|_{*,1} &= \smash{ \sum_{k=1}^{K} } \left\| \mb{Q}_k \right\|_{*} \notag \\[.7em] 
  &= \min_{\{ \mb{\varGamma}_k, \mb{G}_k\} }  
  \Big\{ \frac{1}{2} \smash{ \sum_{k=1}^K } \big(\|\mb{\varGamma}_{ k}\|_\tF^2 + \| \mb{G}_k\|_\tF^2 \big) : \, \mb{\varGamma}_{ k} \mb{G}_k = \mb{Q}_k \Big\} \notag \\[.5em]
  &= \min_{\mb{\varGamma} \in \dMat{P}{r}{K}, \mb{G}} \; 
  \Big\{ \frac{1}{2} (\| \mb{\varGamma} \|_\tF^2 + \| \mb{G}\|_\tF^2) : \mb{Q} = \mb{\varGamma} \mb{G} \Big\}
  \label{eq:lNuc1FacConst}
\end{align}
where $r=\min(N,P)$, $\mb{\varGamma} = \blkdiag(\mb{\varGamma_1}, \ldots, \mb{\varGamma}_K)$ is taken from the set $\dMat{P}{r}{K}$ of block-diagonal matrices composed of $K$ blocks of size $P \times r$ on the main diagonal, and $\mb{G}=[\mb{G}_1^\tT, \ldots, \mb{G}_K^\tT]^\tT$ is a $(K r) \times N$ complex matrix composed of blocks $\mb{G}_k$, for $k=1,\ldots,K$. Inserting equation \eqref{eq:lNuc1FacConst} into the $\ell_{*,1}$ mixed-norm minimization problem in \eqref{eq:mixedVectorNorm_v2} we formulate the minimization problem
\begin{equation}  
  \min_{\substack{\mb{\varGamma} \in \dMat{P}{r}{K}, \mb{G}}}
  \frac{1}{2} \left\| \mb{B}\mb{\varGamma} \mb{G} - \mb{Y} \right\|_\tF^2 + \frac{\lambda \sqrt{N}}{2} (\| \mb{\varGamma} \|_\tF^2 + \| \mb{G}\|_\tF^2) 
  \label{eq:bilinOpt} .
\end{equation}
For a fixed matrix $\mb{\varGamma}$, the minimizer $\opt{\mb{G}}$ of problem \eqref{eq:lNuc1FacConst} has the closed form expression
\begin{align}
  \opt{\mb{G}} 
  &= ( \mb{\varGamma}^\tH \mb{B}^\tH \mb{B} \mb{\varGamma} + \lambda \sqrt{N} \mb{I} )^{-1} \mb{\varGamma}^\tH \mb{B}^\tH \mb{Y} 
  \nonumber \\ &=
  \mb{\varGamma}^\tH \mb{B}^\tH  ( \mb{B} \mb{\varGamma} \mb{\varGamma}^\tH \mb{B}^\tH + \lambda \sqrt{N} \mb{I} )^{-1} \mb{Y}
  \label{eq:optG}
\end{align}
where the last equation is derived from the Woodbury matrix identity \cite[p.151]{searle1982}. Reinserting the optimal matrix $\opt{\mb{G}}$ into equation \eqref{eq:bilinOpt} and using basic reformulations of the objective function results in the concentrated minimization problem
\begin{align}
  \smash{\min_{\mb{\varGamma} \in \dMat{P}{r}{K}}} \frac{\lambda \sqrt{N} }{2} \Big(
  \text{Tr}\big( ( \mb{B} \mb{\varGamma} \mb{\varGamma}^\tH \mb{B}^\tH + \lambda \sqrt{N} \mb{I} )^{\scalebox{0.75}[1.0]{-}1} \mb{Y} \mb{Y}^\tH \big)
  + \text{Tr} \big( \mb{\varGamma} \mb{\varGamma}^\tH \big) \Big) .
  \label{eq:equivalenceStep3}
\end{align}
Upon summarizing $\mb{Y} \mb{Y}^\tH / N = \hat{\mb{R}}$ and defining the positive semidefinite block-diagonal matrix 
\begin{align}
  \mb{S}=\mb{\varGamma} \mb{\varGamma}^\tH / \sqrt{N} \in \pdMat{P}{K}
  \label{eq:SMatDef}
\end{align}
we can rewrite \eqref{eq:equivalenceStep3} as 
\begin{align}
  \min_{\mb{S} \in \pdMat{P}{K}} \;
  \frac{\lambda N}{2} \Big( \text{Tr}\big( ( \mb{B} \mb{S} \mb{B}^\tH + \lambda \mb{I} )^{-1} \hat{\mb{R}} \big) 
  & + \text{Tr} \big( \mb{S} \big) \Big) .
  \label{eq:equivalenceStep4}
\end{align}
Neglecting the factor $\lambda N/2$ in \eqref{eq:equivalenceStep4}, we arrive at formulation \eqref{eq:smr1}. Using equations \eqref{eq:SvdQ}, \eqref{eq:OptFactorSvds} and the definition of $\opt{\mb{S}} = \blkdiag(\opt{\mb{S}}_1, \ldots, \opt{\mb{S}}_K)$ in \eqref{eq:SMatDef} we conclude that
\begin{align}
  \opt{\mb{S}}_k = \frac{1}{\sqrt{N}} \opt{\mb{\varGamma}}_k \opt{\mb{\varGamma}}_k^{\tH} 
		  = \frac{1}{\sqrt{N}} ( \opt{\mb{Q}}_k \opt{\mb{Q}}_k^\tH )^{1/2}
\end{align}
as given in \eqref{eq:magIdentity}. Making further use of \eqref{eq:optG} and the factorization in \eqref{eq:lNuc1FacConst} we obtain
\begin{align}
  \opt{\mb{Q}} =& \opt{\mb{\varGamma}} \opt{\mb{G}} \nonumber \\
			   =&  \opt{\mb{\varGamma}} \opt{\mb{\varGamma}}^\tH \mb{B}^\tH  ( \mb{B} \opt{\mb{\varGamma}} \opt{\mb{\varGamma}}^\tH \mb{B}^\tH + \lambda \sqrt{N} \mb{I} )^{-1} \mb{Y} \nonumber \\
			   =&  \opt{\mb{S}} \mb{B}^\tH  ( \mb{B} \opt{\mb{S}} \mb{B}^\tH + \lambda \mb{I} )^{-1} \mb{Y}
\end{align}
which corresponds to relation \eqref{eq:smr2}. 

\end{proof}

\refstepcounter{AppendixCount} \label{sec:MatPoly}
\section*{Appendix \Alph{AppendixCount} - SDP Form of the Matrix Polynomial Constraint} 
\noindent
As discussed in Section \ref{sec:Gridless}, the matrix $\mb{M}(z)$ represents a matrix polynomial of degree $D$ in the variable $z$ \cite{Dumitrescu:1086500}. To see the relation between matrix $\mb{\varUpsilon}_0$ and the matrix coefficients $\mb{K}_i$, let us define the $((D+1)P) \times P$ matrix
\begin{align}
  \mb{\varOmega}(z) = \mtx{\mb{I}_P & z \mb{I}_P & z^2 \mb{I}_P & \ldots & z^D \mb{I}_P}^\tT 
  \label{eq:matZ}
\end{align}
and introduce the $M \times ((D+1)P)$ permutation and selection matrix $\mb{J}$ such that the subarray steering matrix can be expressed as
\begin{align}
  \mb{B}(z) = \mb{J} \mb{\varOmega}(z) \label{eq:PermJ} .
\end{align}
Inserting \eqref{eq:PermJ} in \eqref{eq:matPoly1} yields
\begin{align}
  \mb{M}(z) 
  &= \mb{B}^\tH(z) \mb{\varUpsilon}_0 \mb{B}(z) \notag \\
  &= \mb{\varOmega}^\tH(z) \mb{J}^\tH \mb{\varUpsilon}_0 \mb{J} \mb{\varOmega}(z) \notag \\
  &= \mb{\varOmega}^\tH(z) \mb{F} \mb{\varOmega}(z) , \label{eq:matPoly2} 
\end{align}
where $\mb{F} = \mb{J}^\tH \mb{\varUpsilon}_0 \mb{J}$ is of size $((D+1)P) \times ((D+1)P)$ and is composed of the $P \times P$ blocks $\mb{F}_{i,j}$, for $i,j=1,\ldots D+1$, as
\begin{align}
  \mb{F} = 
  \mtx{
	\mb{F}_{1,1} & \cdots & \mb{F}_{1,D+1} \\
	\vdots 		 & \ddots & \vdots 		  \\
	\mb{F}_{D+1,1} & \cdots & \mb{F}_{D+1,D+1} \\
  } .
\end{align}
Equation \eqref{eq:matPoly2} is also referred to as the Gram matrix representation of the polynomial $\mb{M}(z)$, and $\mb{F}$ is referred to as the corresponding Gram matrix \cite{Dumitrescu:1086500}.

We define the block trace operator for matrix $\mb{F}$ as
\begin{align}
  \blktr{P} (\mb{F}) = \sum_{i=1}^{D+1} \mb{F}_{i,i}, \label{eq:BlkTr}
\end{align}
i.e., the summation of the $P \times P$ submatrices $\mb{F}_{i,i}$, for $i=1,\ldots,D+1$, on the main diagonal of matrix $\mb{F}$. Furthermore, let us define the $(D+1) \times (D+1)$ elementary Toeplitz matrix $\mb{\varTheta}_i$, with ones on the $i$th diagonal and zeros elsewhere, as well as the elementary block Toeplitz matrix $\mb{\varXi}_i = \mb{\varTheta}_i \otimes \mb{I}_P$. 

Using the block trace operator \eqref{eq:BlkTr} and the elementary block Toeplitz matrices $\mb{\varXi}_i$, the matrix coefficients $\mb{K}_i$ in \eqref{eq:matPoly1} can be computed from the Gram matrix $\mb{F}$ in \eqref{eq:matPoly2} as
\begin{align}
  \mb{K}_i = \blktr{P} ( \mb{\varXi}_i \mb{F} ) ,
  \label{eq:GramMapping}
\end{align}
i.e., the summation of the $P \times P$ submatrices on the $i$th block-diagonal of the Gram matrix $\mb{F}$. Note that the mapping \eqref{eq:matPoly1} is unique, i.e., for any PCA steering matrix block $\mb{B}(z)$ and matrix $\mb{\varUpsilon}_0$ the coefficients $\mb{K}_i$, $i=1,\ldots,D$, of the matrix polynomial $\mb{M}(z)$ are unique. However, the Gram matrix $\mb{F}$ in \eqref{eq:matPoly2} is not unique, i.e., a matrix polynomial $\mb{M}(z)$ generally admits different Gram matrix representations. 

Let us define a second matrix polynomial which has constant value $\mb{I}_P$ as
\begin{align}
  \mb{\varOmega}^\tH(z) \mb{H} \mb{\varOmega}(z) = \mb{I}_P
  \label{eq:MatPoly3}
\end{align}
such that the corresponding Gram matrix $\mb{H}$ of size $((D+1)P) \times ((D+1)P)$ fulfills 
\begin{subequations}
\label{eq:MatPoly3b}
\begin{align}
  \blktr{P} (\mb{H}) &= \mb{I}_P, \\
  \blktr{P} ( \mb{\varXi}_i \mb{H} ) &= \mb{0} \tfor i \neq 0  .
\end{align}
\end{subequations}
By using \eqref{eq:matPoly2}, \eqref{eq:MatPoly3} and \eqref{eq:MatPoly3b} we can express the constraint \eqref{eq:matIneq} as
\begin{align}
  \mb{I}_P - \mb{B}(z)^\tH \, \mb{\varUpsilon}_0 \mb{B}(z) = 
  \mb{\varOmega}^\tH ( \mb{H} - \mb{J}^\tH \mb{\varUpsilon}_0 \, \mb{J} ) \mb{\varOmega} \succeq \mb{0}
\end{align}
which is fulfilled for  
\begin{align}
  \mb{H} - \mb{J}^\tH \mb{\varUpsilon}_0  \, \mb{J} \succeq \mb{0} .
  \label{eq:GramMatConst}
\end{align}
Applying \eqref{eq:MatPoly3b} and \eqref{eq:GramMatConst} in problem \eqref{eq:smr_dual} we can define the gridless frequency estimation problem
\begin{subequations}
\label{eq:smr_dual_gridless}
\begin{align}
  \max_{\mb{\varUpsilon}_{1}, \mb{\varUpsilon}_{0}, \mb{H}} & \; 
  - 2 \,\real \{ \tr(\mb{\varUpsilon}_{1}) \}  - \lambda \tr(\mb{\varUpsilon}_{0})  \label{eq:sdp_gls_dual} \\
  \text{s.t.} \; & \mtx{ \hat{\mb{R}} & \mb{\varUpsilon}_{1} \\ \mb{\varUpsilon}_1^\tH & \mb{\varUpsilon}_0 } \succeq \mb{0} \\
			   & \mb{H} - \mb{J}^\tH \mb{\varUpsilon}_0 \, \mb{J} \succeq \mb{0} \\
			   & \blktr{P} (\mb{H}) = \mb{I}_{P} \\
			   & \blktr{P} ( \mb{\varXi}_i \mb{H} ) = \mb{0} \tfor i \neq 0 .
\end{align}
\end{subequations}
Given a minimizer $\opt{\mb{\varUpsilon}}_0$ to problem \eqref{eq:smr_dual_gridless} the frequency estimation problem reduces to finding roots for which the constraint \eqref{eq:matIneq} becomes singular, as discussed in Section \ref{sec:Gridless}. 

\bibliographystyle{IEEEtran}
\bibliography{refferences}

\begin{thebibliography}{10}
\providecommand{\url}[1]{#1}
\csname url@samestyle\endcsname
\providecommand{\newblock}{\relax}
\providecommand{\bibinfo}[2]{#2}
\providecommand{\BIBentrySTDinterwordspacing}{\spaceskip=0pt\relax}
\providecommand{\BIBentryALTinterwordstretchfactor}{4}
\providecommand{\BIBentryALTinterwordspacing}{\spaceskip=\fontdimen2\font plus
\BIBentryALTinterwordstretchfactor\fontdimen3\font minus
  \fontdimen4\font\relax}
\providecommand{\BIBforeignlanguage}[2]{{%
\expandafter\ifx\csname l@#1\endcsname\relax
\typeout{** WARNING: IEEEtran.bst: No hyphenation pattern has been}%
\typeout{** loaded for the language `#1'. Using the pattern for}%
\typeout{** the default language instead.}%
\else
\language=\csname l@#1\endcsname
\fi
#2}}
\providecommand{\BIBdecl}{\relax}
\BIBdecl

\bibitem{Krim:TwoDecades}
H.~Krim and M.~Viberg, ``Two decades of array signal processing research: the
  parametric approach,'' \emph{IEEE Signal Processing Magazine}, vol.~13,
  no.~4, pp. 67--94, Jul 1996.

\bibitem{575894}
B.~Porat and B.~Friedlander, ``Accuracy requirements in off-line array
  calibration,'' \emph{IEEE Transactions on Aerospace and Electronic Systems},
  vol.~33, no.~2, pp. 545--556, April 1997.

\bibitem{1165144}
Y.~Rockah and P.~Schultheiss, ``Array shape calibration using sources in
  unknown locations--part {I}: Far-field sources,'' \emph{IEEE Transactions on
  Acoustics, Speech and Signal Processing}, vol.~35, no.~3, pp. 286--299, Mar
  1987.

\bibitem{543678}
A.~J. Weiss and B.~Friedlander, ``Self-calibration in high-resolution array
  processing,'' in \emph{Advances in Spectrum Estimation and Array Processing},
  S.~Haykin, Ed.\hskip 1em plus 0.5em minus 0.4em\relax Prentice-Hall, 1991,
  vol.~II.

\bibitem{340783}
M.~Viberg and A.~Swindlehurst, ``A {Bayesian} approach to auto-calibration for
  parametric array signal processing,'' \emph{IEEE Transactions on Signal
  Processing}, vol.~42, no.~12, pp. 3495--3507, Dec 1994.

\bibitem{509886}
B.~C. Ng and C.~M.~S. See, ``Sensor-array calibration using a
  maximum-likelihood approach,'' \emph{IEEE Transactions on Antennas and
  Propagation}, vol.~44, no.~6, pp. 827--835, Jun 1996.

\bibitem{flanagan2001array}
B.~P. Flanagan and K.~L. Bell, ``Array self-calibration with large sensor
  position errors,'' \emph{Signal Processing}, vol.~81, no.~10, pp. 2201--2214,
  2001.

\bibitem{Gershman:Rare}
C.~See and A.~Gershman, ``Direction-of-arrival estimation in partly calibrated
  subarray-based sensor arrays,'' \emph{IEEE Transactions on Signal
  Processing}, vol.~52, no.~2, pp. 329--338, 2004.

\bibitem{wax1985decentralized}
M.~Wax and T.~Kailath, ``Decentralized processing in sensor arrays,''
  \emph{IEEE transactions on acoustics, speech, and signal processing},
  vol.~33, no.~5, pp. 1123--1129, 1985.

\bibitem{stoica1995decentralized}
P.~Stoica, A.~Nehorai, and T.~S{\"o}derstr{\"o}m, ``Decentralized array
  processing using the {MODE} algorithm,'' \emph{Circuits, Systems, and Signal
  Processing}, vol.~14, no.~1, pp. 17--38, 1995.

\bibitem{6854008}
W.~Suleiman and P.~Parvazi, ``Search-free decentralized direction-of-arrival
  estimation using common roots for non-coherent partly calibrated arrays,'' in
  \emph{Acoustics, Speech and Signal Processing (ICASSP), 2014 IEEE
  International Conference on}, May 2014, pp. 2292--2296.

\bibitem{1600024}
M.~F. Duarte, S.~Sarvotham, D.~Baron, M.~B. Wakin, and R.~G. Baraniuk,
  ``Distributed compressed sensing of jointly sparse signals,'' in
  \emph{Conference Record of the Thirty-Ninth Asilomar Conference onSignals,
  Systems and Computers, 2005.}, October 2005, pp. 1537--1541.

\bibitem{6880754}
Z.~Lu, R.~Ying, S.~Jiang, P.~Liu, and W.~Yu, ``Distributed compressed sensing
  off the grid,'' \emph{IEEE Signal Processing Letters}, vol.~22, no.~1, pp.
  105--109, Jan 2015.

\bibitem{960398}
A.~Swindlehurst, P.~Stoica, and M.~Jansson, ``Exploiting arrays with multiple
  invariances using {MUSIC} and {MODE},'' \emph{IEEE Transactions on Signal
  Processing}, vol.~49, no.~11, pp. 2511--2521, Nov 2001.

\bibitem{Parvazi:PCA}
P.~Parvazi, M.~Pesavento, and A.~Gershman, ``Direction-of-arrival estimation
  and array calibration for partly-calibrated arrays,'' in \emph{2011 IEEE
  International Conference on Acoustics, Speech and Signal Processing
  (ICASSP)}, 2011, pp. 2552--2555.

\bibitem{127959}
A.~Swindlehurst, B.~Ottersten, R.~Roy, and T.~Kailath, ``Multiple invariance
  {ESPRIT},'' \emph{IEEE Transactions on Signal Processing}, vol.~40, no.~4,
  pp. 867--881, Apr 1992.

\bibitem{6811813}
W.~Suleiman, M.~Pesavento, and A.~Zoubir, ``Decentralized direction finding
  using partly calibrated arrays,'' in \emph{2013 Proceedings of the 21st
  European Signal Processing Conference (EUSIPCO)}, Sept 2013, pp. 1--5.

\bibitem{6882325}
W.~Suleiman, P.~Parvazi, M.~Pesavento, and A.~Zoubir, ``Decentralized direction
  finding using {Lanczos} method,'' in \emph{Sensor Array and Multichannel
  Signal Processing Workshop (SAM), 2014 IEEE 8th}, June 2014, pp. 9--12.

\bibitem{Pesavento:Rare}
M.~Pesavento, A.~Gershman, and K.~M. Wong, ``Direction finding in partly
  calibrated sensor arrays composed of multiple subarrays,'' \emph{IEEE
  Transactions on Signal Processing}, vol.~50, no.~9, pp. 2103--2115, 2002.

\bibitem{steffens2017shiftinvariance}
C.~Steffens, W.~Suleiman, and M.~Sorg, A.~Pesavento, ``Gridless compressed
  sensing under shift-invariant sampling,'' in \emph{The 42nd IEEE
  International Conference on Acoustics, Speech and Signal Processing
  (ICASSP)}, March 2017.

\bibitem{6882328}
C.~Steffens, P.~Parvazi, and M.~Pesavento, ``Direction finding and array
  calibration based on sparse reconstruction in partly calibrated arrays,'' in
  \emph{Sensor Array and Multichannel Signal Processing Workshop (SAM), 2014
  IEEE 8th}, June 2014, pp. 21--24.

\bibitem{Malioutov:LassoDoa}
D.~Malioutov, M.~\c{C}etin, and A.~Willsky, ``A sparse signal reconstruction
  perspective for source localization with sensor arrays,'' \emph{IEEE
  Transactions on Signal Processing}, vol.~53, no.~8, pp. 3010--3022, 2005.

\bibitem{steffens2016compact}
C.~{Steffens}, M.~{Pesavento}, and M.~E. {Pfetsch}, ``A compact formulation for
  the $\ell_{2,1}$ mixed-norm minimization problem,'' \emph{ArXiv e-prints,
  arXiv:1606.07231v1}, Jun. 2016.

\bibitem{Tibshirani:Lasso}
R.~Tibshirani, ``Regression shrinkage and selection via the lasso,''
  \emph{Journal of the Royal Statistical Society. Series B (Methodological)},
  vol.~58, pp. 267--288, 1996.

\bibitem{Chen98atomicdecomposition}
S.~S. Chen, D.~L. Donoho, and M.~A. Saunders, ``Atomic decomposition by basis
  pursuit,'' \emph{SIAM Journal On Scientific Computing}, vol.~20, pp. 33--61,
  1998.

\bibitem{candes2012a}
E.~J. Cand{\`e}s and C.~Fernandez-Granda, ``Towards a mathematical theory of
  super-resolution,'' \emph{Communications on Pure and Applied Mathematics},
  vol.~67, no.~6, pp. 906--956, 2014.

\bibitem{candes2012b}
------, ``Super-resolution from noisy data,'' \emph{Journal of Fourier Analysis
  and Applications}, vol.~19, no.~6, pp. 1229--1254, 2013.

\bibitem{tang2013}
G.~Tang, B.~Bhaskar, P.~Shah, and B.~Recht, ``Compressed sensing off the
  grid,'' \emph{IEEE Transactions on Information Theory}, vol.~59, no.~11, pp.
  7465--7490, Nov 2013.

\bibitem{Boyd:RankMinimization}
M.~Fazel, H.~Hindi, and S.~Boyd, ``A rank minimization heuristic with
  application to minimum order system approximation,'' in \emph{Proceedings of
  the American Control Conference}, vol.~6, 2001, pp. 4734--4739.

\bibitem{Recht2010}
B.~Recht, M.~Fazel, and P.~A. Parrilo, ``Guaranteed minimum-rank solutions of
  linear matrix equations via nuclear norm minimization,'' \emph{SIAM Review},
  vol.~52, no.~3, pp. 471--501, 2010.

\bibitem{yuan2006grouplasso}
Y.~L. Ming~Yuan, ``Model selection and estimation in regression with grouped
  variables,'' \emph{Journal of the Royal Statistical Society. Series B
  (Statistical Methodology)}, vol.~68, no.~1, pp. 49--67, 2006.

\bibitem{kowalski2009mixednorm}
M.~Kowalski, ``Sparse regression using mixed norms,'' \emph{Applied and
  Computational Harmonic Analysis}, vol.~27, no.~3, pp. 303 -- 324, 2009.

\bibitem{steffens2016mimo}
C.~Steffens, Y.~Yang, and M.~Pesavento, ``Multidimensional sparse recovery for
  {MIMO} channel parameter estimation,'' in \emph{Proceedings of the 2016
  European Signal Processing Conference}, Budapest, Hungary, September 2016.

\bibitem{steffens2016noncircular}
J.~Steinwandt, C.~Steffens, M.~Pesavento, and M.~Haardt, ``Sparsity-aware
  direction finding for strictly non-circular sources based on rank
  minimization,'' July 2016.

\bibitem{S98guide}
J.~Sturm, ``Using {SeDuMi} 1.02, a {MATLAB} toolbox for optimization over
  symmetric cones,'' \emph{Optimization Methods and Software}, vol. 11--12, pp.
  625--653, 1999.

\bibitem{vandenberghe1996semidefinite}
L.~Vandenberghe and S.~Boyd, ``Semidefinite programming,'' \emph{SIAM review},
  vol.~38, no.~1, pp. 49--95, 1996.

\bibitem{Dumitrescu:1086500}
B.~Dumitrescu, \emph{Positive Trigonometric Polynomials and Signal Processing
  Applications}.\hskip 1em plus 0.5em minus 0.4em\relax Berlin: Springer, 2007.

\bibitem{pesavento2005fast}
M.~Pesavento, ``Fast algorithms for multidimensional harmonic retrieval,''
  Ph.D. dissertation, Ruhr-Universit{\"a}t Bochum, 2005.

\bibitem{barabell1983rootMusic}
A.~Barabell, ``Improving the resolution performance of eigenstructure-based
  direction-finding algorithms,'' in \emph{Acoustics, Speech, and Signal
  Processing, IEEE International Conference on ICASSP '83.}, vol.~8, Apr 1983,
  pp. 336--339.

\bibitem{pesavento2000unitaryRootMusic}
M.~Pesavento, A.~B. Gershman, and M.~Haardt, ``Unitary root-{MUSIC} with a
  real-valued eigendecomposition: a theoretical and experimental performance
  study,'' \emph{IEEE Transactions on Signal Processing}, vol.~48, no.~5, pp.
  1306--1314, May 2000.

\bibitem{Abramovich:EL}
E.~Northardt, I.~Bilik, and Y.~Abramovich, ``Spatial compressive sensing for
  direction-of-arrival estimation with bias mitigation via expected
  likelihood,'' \emph{IEEE Transactions on Signal Processing}, vol.~61, no.~5,
  pp. 1183--1195, 2013.

\bibitem{grant2008}
M.~Grant and S.~Boyd, ``Graph implementations for nonsmooth convex programs,''
  in \emph{Recent Advances in Learning and Control}, ser. Lecture Notes in
  Control and Information Sciences, V.~Blondel, S.~Boyd, and H.~Kimura,
  Eds.\hskip 1em plus 0.5em minus 0.4em\relax Springer-Verlag Limited, 2008,
  pp. 95--110.

\bibitem{grant2014}
------, ``{CVX}: Matlab software for disciplined convex programming, version
  2.1,'' \url{http://cvxr.com/cvx}, Mar. 2014.

\bibitem{Srebro2005}
N.~Srebro and A.~Shraibman, ``Rank, trace-norm and max-norm,'' in
  \emph{Proceedings of the 18th Annual Conference on Learning Theory}.\hskip
  1em plus 0.5em minus 0.4em\relax Springer-Verlag, 2005, pp. 545--560.

\bibitem{searle1982}
S.~Searle, \emph{Matrix algebra useful for statistics}, ser. Wiley series in
  probability and mathematical statistics: Applied probability and
  statistics.\hskip 1em plus 0.5em minus 0.4em\relax Wiley, 1982.

\end{thebibliography}

\end{document}